\crefname{axiom}{axiom}{axioms}
\newtheorem{definition}{Definition}
\newtheorem{lemma}{Lemma}
\newtheorem{theorem}{Theorem}
\newtheorem{corollary}{Corollary}
\begin{document}

\title{SNIP: Speculative Execution and Non-Interference Preservation for Compiler Transformations}

\author{Sören van der Wall}
\orcid{0009-0009-4781-8583}
\affiliation{%
  \institution{TU Braunschweig}
  \city{Braunschweig}
  \country{Germany}
}
\email{s.van-der-wall@tu-bs.de}

\author{Roland Meyer}
\orcid{0000-0001-8495-671X}
\affiliation{%
  \institution{TU Braunschweig}
  \city{Braunschweig}
  \country{Germany}
}
\email{roland.meyer@tu-bs.de}

\begin{CCSXML}
<ccs2012>
<concept>
<concept_id>10002978.10002986.10002989</concept_id>
<concept_desc>Security and privacy~Formal security models</concept_desc>
<concept_significance>500</concept_significance>
</concept>
<concept>
<concept_id>10011007.10011006.10011041</concept_id>
<concept_desc>Software and its engineering~Compilers</concept_desc>
<concept_significance>500</concept_significance>
</concept>
<concept>
<concept_id>10002978.10003001.10010777.10011702</concept_id>
<concept_desc>Security and privacy~Side-channel analysis and countermeasures</concept_desc>
<concept_significance>500</concept_significance>
</concept>
</ccs2012>
\end{CCSXML}

\ccsdesc[500]{Security and privacy~Formal security models}
\ccsdesc[500]{Software and its engineering~Compilers}
\ccsdesc[500]{Security and privacy~Side-channel analysis and countermeasures}

\keywords{speculative execution, compilation, verification, register allocation}

\begin{abstract}
    We address the problem of preserving non-interference across compiler transformations \emph{under speculative semantics}.
    We develop a proof method that ensures the preservation uniformly across all source programs.
    The basis of our proof method is a new form of simulation relation.
    It operates over directives that model the attacker's control over the micro-architectural state,
    and it accounts for the fact that the compiler transformation may change the influence of the micro-architectural state on the execution (and hence the directives).
    Using our proof method, we show the correctness of dead code elimination.
    When we tried to prove register allocation correct,
    we identified a previously unknown weakness that introduces violations to non-interference.
    We have confirmed the weakness for a mainstream compiler on code from the \texttt{libsodium} cryptographic library.
    To reclaim security once more, we develop a novel static analysis that operates on a product of source program and register-allocated program.
    Using the analysis, we present an automated fix to existing register allocation implementations.
    We prove the correctness of the fixed register allocations with our proof method.
\end{abstract}

\maketitle

\section{Introduction}
\label{section:intro}

Cryptographic implementations must satisfy two conflicting requirements:
They must compute highly performant to be of practical use and be absolutely secure for use in critical systems.
Implementations optimize performance with knowledge about underlying micro-architectural hardware features such as memory access patterns that improve cache usage.
Security, however, is threatened by side-channel attacks that exploit precisely these hardware features to leak sensitive information \parencite{brumleyRemoteTimingAttacks2005}.
To mitigate side-channel attacks, leakage of sensitive data needs to be eliminated.
This confronts the programmer with two challenges:
First, semantics of source-level languages do not model leaks produced by side-channels \parencite{vuReconcilingOptimizationSecure2021}.
And second, even if the source-level code is secure, incautious implementation of compiler optimizations can insert new leakage, rendering efforts to secure the source program useless \parencite{bartheFormalVerificationConstanttime2019,simonWhatYouGet2018}.
Developers address these problems with coding guidelines such as \emph{constant time programming} and disabling compiler optimizations.
But following the guidelines is non-trivial and overlooked mistakes corrupt the guarantee for security \parencite{alfardanLuckyThirteenBreaking2013}.
At the same time, disabling compiler optimizations is dissatisfactory.
Formal methods have shown to help with the challenges:
The first challenge is overcome with novel leakage semantics that model side-channel leakage within the programming language's semantics \parencites{molnarProgramCounterSecurity2006,bartheSystemlevelNoninterferenceConstanttime2014}.
For the second, novel proof methods for compilers under leakage semantics provide a guarantee that side-channel security of the source program carries over to the executable \parencites{bartheSecureCompilationConstantResource2021,bartheSecureCompilationSideChannel2018,bartheFormalVerificationConstanttime2019,bartheStructuredLeakageApplications2021}.
Sadly, the recent discovery of Spectre attacks \parencite{kocherSpectreAttacksExploiting2019,canellaSystematicEvaluationTransient2019} again presents a hardware feature that leakage semantics fall short of:
\emph{Speculative execution} produces side-channel leakages not captured by leakage semantics.
This means both challenges were open again, and the verification community was quick to address the first:
The development of speculative execution semantics has already taken place \parencite{guarnieriHardwareSoftwareContractsSecure2021,cauligiConstanttimeFoundationsNew2020},
and formal tools find speculative side-channel leakages
or even prove their absence (cf. \Cref{section:related}).
Provably correct compilation under speculative execution semantics, however, remains an unsolved challenge that we address in this paper.
It is the challenge that we address in this paper.

\subsection{Background}
Before we detail our contributions,
we position our paper in the field of \emph{formally verified cryptography} and provide background on \emph{Spectre~attacks} and \emph{mitigations}.
We outline \emph{speculative execution semantics}, \emph{non-interference} as the property that guarantees a program's side-channel security even under speculation,
%
and \emph{provably correct compilation} for leakage semantics without speculation.

\begin{code}[t]
    \begin{codestackbox}{0.35}
        {%
        }
        {%
            \ttfamily%
            \begin{tabular}{rl}
                0x48 & \colvar{stk} \\
                0x40 & \colvar{buf}
            \end{tabular}
            \tcbsubtitle[after skip=0pt, before skip=2pt]{Heap}
            \begin{NiceTabular}{rl!{\quad}l}
                0xE0 &      &\Block[borders={left}]{4-1}{sensitive \\ data} \\
                .... &&\\
                0xC8 & \colvar{sec} &\\
                0xC0 &&
            \end{NiceTabular}
        }
chacha20(..., uint8* sec, uint8 bytes) {
  uint8 buf[8];
  |\label{line:spill}\codeframe{startspill}{endspill}{mauve}\tikzmark{startspill}|stk[0] = bytes;|\tikzmark{endspill}|
  |\label{line:for}|for (i = 0; i < 8; i++)
    |\label{line:for_body}|{ buf[i] = sec[i]; }
  ...
  |\label{line:fill}\codeframe{startfill}{endfill}{mauve}\tikzmark{startfill}|bytes = stk[0];|\tikzmark{endfill}|
  |\label{line:leak}|if (bytes < 64) {...}
  ...
}
\end{codestackbox}
\caption{\label{code:specv1}
    Spectre-\spht{}.
    Registers are \colreg{orange} and memory variables \colvar{teal}.
    The stack contents are shifted by the stack pointer to appear constant.
    \textcolor{mauve}{Framed} instructions were inserted by register allocation.
}
\end{code}
\subsubsection*{Formally verified cryptography}
The field of formally verified cryptography aims to provide cryptographic implementations
that are secured not only by trust in the developer but in a machine-checkable proof of correctness and security.
In order to achieve this goal, implementations are carefully crafted
and three main areas of research pursue different subgoals \cite{barbosaSoKComputerAidedCryptography2021}:
\begin{enumerate*}[label=\bfseries (\roman*)]
    \item \label{item:cryptodesign} \emph{Cryptographic protocol design} aims to provide proofs that the cryptographic protocol in itself does not reveal secrets to adversarial protocol participants, among other properties.
    \item \label{item:functionalcorrect} \emph{Correct and performant implementation} of the protocol aims to prove the implementation functionally correct.
    \item \label{item:implsec} \emph{Implementation security} investigates the compilation of implementations and the execution of binaries on real hardware in order to prove the absence of attacks
        that stem from the discrepancy between idealized program semantics and actual hardware semantics.
\end{enumerate*}
This paper belongs to \labelcref{item:implsec}:
We assume that source programs correctly implement formally verified protocols, i.e.\ steps \labelcref{item:cryptodesign,item:functionalcorrect} are completed.
We investigate whether compiler transformations preserve side-channel security.
To that end, our formal semantics models side-channel leakages and speculative execution, the micro-architectural hardware components that enable the recently discovered Spectre attacks.

\subsubsection*{Spectre attacks and mitigations} Spectre attacks observe side-channel leakages that are produced during speculative execution.
Speculative execution allows the processor to speculatively execute instructions from the pipeline even though they still have unevaluated instruction parameters.
When the processor detects a misspeculation, i.e.\ it assumed incorrect values for the unevaluated parameters, it rolls back execution to erase its effect.
Rollbacks are invisible to typical source-level semantics, but parts of the micro-architecture such as the cache-state are not reverted.
This creates side-channel leakage observable to an attacker during the speculative execution of instructions.
Spectre attacks target this in the following way: 
\begin{enumerate*}[label=\bfseries(\roman*)]
    \item \label{item:train} Train some micro-architectural component to speculatively execute a code fragment that
    \item \label{item:loadsec} under misspeculation brings sensitive data into a processor's register,
    which \item \label{item:leak} is leaked through side-channels.
\end{enumerate*}
The prominent example is Spectre-\textsf{PHT} \parencite{kocherSpectreAttacksExploiting2019},
whose source of speculation is the processor's \emph{branch prediction} unit (Prediction History Table),
and the side-channel leakage happens via the cache \cite{yaromFLUSHRELOADHigh2014,liuLastLevelCacheSideChannel2015} or the program counter \cite{molnarProgramCounterSecurity2006}.
\Cref{code:specv1} demonstrates the attack:%
\footnote{The attack on this code is unlikely to execute on actual hardware because the specific speculation patterns would be hard to train. We chose it because it also demonstrates a new vulnerability in register allocation that we present in this paper.}
The code is intended to load an 8-byte chunk from \texttt{\colvar{sec}},
which points into a stream of sensitive data,
and to store it into a stack-local buffer \texttt{\colvar{buf}} in order to later perform computation on it.
\labelcref{item:train} The attacker might train the branch predictor in a way
that it speculates the \texttt{for}-conditional in \Cref{line:for} and executes \Cref{line:for_body} an additional time even though \texttt{\colreg{i} = 8}.
\Cref{line:for_body} then stores sensitive data from \texttt{\&\colvar{sec} + 8}, say $\val$, to \texttt{\&\colvar{buf} + 8} which aliases with $\text{\texttt{\&\colvar{buf} + 8 = \&\colvar{stk}}}$.
The speculative execution might continue with \Cref{line:fill}, where \labelcref{item:loadsec}~$\val$~is loaded into \texttt{\colreg{bytes}}.
\labelcref{item:leak} The register is then used in \Cref{line:leak},
where the branching condition is leaked, disclosing to the attacker whether $\val < 64$.

The de facto approach to avoid Spectre attacks are hardware and software \emph{mitigations}.
In hardware, a simple mitigation is to disable specific speculation sources using control registers.
This penalizes performance as it disables the optimization for the whole program, even when other parts of the program do not operate on sensitive data.
Software mitigations have received more attention, especially for branch prediction (Spectre-PHT, cf.\ \Cref{section:related}, \textsf{\textit{Tools}}),
because disabling branch prediction has severe impact on performance \cite[Evaluation]{vassenaAutomaticallyEliminatingSpeculative2021}.
Spectre-PHT has two known software mitigations:
Speculation fence insertion and speculative load hardening \cite{zhangUltimateSLHTaking2023,carruthSpeculativeLoadHardening}.
Speculation fences $\IWsfence$ instruct the processor to stop speculation and wait until all instruction's unevaluated parameters are resolved
before either continuing computation in case of a correct prediction or rolling back in case of a misprediction.
This prevents instructions following $\IWsfence$ to be executed speculatively altogether.%
\footnote{This is idealized: In \texttt{x86}, for example, the instruction is realized with a memory fence \texttt{LFENCE}, which only executes after all loaded parameters to instructions are resolved - stopping the so far known speculation sources.}
The mitigation is applicable to all known kinds of speculation sources.
Speculative load hardening $\IWslh{\rega}$ is a mitigation unique to branch prediction.
Executing $\IWslh{\rega}$ wipes the contents of register $\rega$ in case of a branch misprediction, but does not stop speculative execution.
In case of correct prediction or non-speculative execution, it leaves the register contents unchanged.
In the binary, this semantics is achieved by tying the contents of $\rega$ to a previous branching condition $\mathtt{cond}$ via a data dependency (in the sense of $\IWasgn{\rega}{\inlineife{\mathtt{cond}}{\rega}{0}}$).%
\footnote{In \texttt{x86}, a \texttt{cmov} instruction is used which does not introduce control-flow branching, so branch prediction will not speculate.}
This forces the processor to evaluate $\mathtt{cond}$ before assigning a value to $\rega$.
The processor is not guaranteed to stop speculation immediately upon learning the correct value for $\mathtt{cond}$, but the value in $\rega$ is now safe to be leaked.
For other speculation sources (\Cref{section:related}, \textsf{\textit{Speculation Sources}}), speculative load hardening does not work because no similar data dependency is known.
The Spectre attack from \Cref{code:specv1} is mitigated by inserting either $\IWsfence$ or $\IWslh{\makereg{bytes}}$ between \Cref{line:fill} and \labelcref{line:leak}.

\subsubsection*{Speculative execution semantics and Non-interference}

Speculative execution semantics extend leakage semantics by speculation.
The achievement of leakage semantics is to incorporate a model leakage observable to the attacker in the semantics.
The observable side-channel leakage depends on the leakage model.
Common is the constant-time model which exposes the addresses of memory accesses and the program counter as leakage to the attacker \parencite{bartheSystemlevelNoninterferenceConstanttime2014,guarnieriHardwareSoftwareContractsSecure2021}.
Transitions in leakage semantics (without speculation) are of the form $\statec \xrarrow{\leak}{} \stated$.
They prompt a transition from $\statec$ to $\stated$ while capturing attacker-visible effects on the micro-architectural state in the \emph{leakage observation} $\leak$.
Side-channel security can now be formulated as a property on the program's leakage semantics.
For that, a relation declares initial states as indistinguishable to the attacker when they differ only w.r.t.\ sensitive data unknown to the attacker.
The property is non-interference, which requires that the executions from indistinguishable initial states produce equal leakages.
Non-interference guarantees side-channel security: leakages cannot depend on sensitive data in any way.
Without speculation, the running example \Cref{code:specv1} satisfies non-interference under the constant-time leakage model:
The control flow is not dependent on the secret $\makevar{sec}$ and the addresses of memory accesses (\Cref{line:spill,line:for_body,line:fill}) are independent as well.

The extension to speculative execution semantics came with a new challenge: Non-determinism.
Whether the processor mispredicts and when it detects misprediction is highly hardware dependent and potentially even under the attacker's influence.
As a result, there is not a single execution but instead a set of possible executions, each with a different sequence of leakages.
A transition in speculative execution semantics is of the form~$\statec \ntrans{\leak}{\direct} \stated$.
Again, $\leak$ is the attacker-observable leakage.
What is new is the \emph{directive}~$\direct$ that models the attacker's control over speculation \parencite{cauligiConstanttimeFoundationsNew2020,bartheHighAssuranceCryptographySpectre2021}.
The directives determine the program's speculation behavior.
They provide an abstraction of the micro-architecture that the attacker can use to steer the execution whenever it depends on the micro-architectural state.
In our example, the attacker steers speculation with the following sequence of directives in order to lead execution to the leakage of sensitive data:
\[
    \underset{\text{\Cref{line:spill}}}{\vphantom{)}\Dstep}
    \quad\lseq\quad\underset{\text{\Cref{line:for,line:for_body}}}{{(\Dif\;\lseq\;\Dstep)}^8}
    \quad\lseq\quad \underset{\text{\Cref{line:for}}}{\vphantom{)}\Dspec}
    \quad\lseq\quad \underset{\text{\Cref{line:for_body}}}{\vphantom{)}\Dstore{\texttt{\colvar{stk}}}{0}}
    \quad\lseq\quad \underset{\text{\Cref{line:for,line:fill,line:leak}}}{\vphantom{)}\Dif\;\lseq\;\Dstep\;\lseq\;\Dstep} .
\]
The intuition is the following.
The first instruction is a memory access which cannot be influenced by the attacker, denoted by the directive~$\Dstep$.
Then, the attacker steers execution so that the correct branch is taken 8~times: Directive $\Dif$ executes the correct branch and~$\Dstep$ executes the memory assignment inside the loop.
The attacker then chooses to begin a misspeculation with directive $\Dspec$ which enters the loop once more.
This leads to an unsafe memory access during the additional loop iteration,
where we let the attacker choose actual memory location with $\Dstore{\texttt{\colvar{stk}}}{0}$.
The remaining sequence leads the execution to the leaking instruction.

In order to phrase non-interference on speculative execution semantics,
the idea is to compare executions where the sequence of directives along the executions are equal.
Then, the attacker trained the hardware in the same way and can be sure that observed differences in leakage are due to sensitive data.
We define our speculative execution semantics (\Cref{section:language}) and non-interference property (\Cref{section:properties}) in this spirit.

\subsubsection*{Secure compilation}

\begin{figure}
    \tikzset{
        prewhite/.style = {
            preaction={draw,line width=3.5pt,white}
        },
        samepoint/.style args={#1}{
            draw,line width=0.7pt,#1,
            preaction={draw,line width=3pt,white,
                preaction={draw,line width=4.4pt,#1}
            }
        },
        equality/.style = { double, double distance=1.8pt },
        simrel/.style = { line width=1pt, postaction={decorate},
            decoration = { markings, mark=at position 0.4 with {
                    \node[fill=white] {$\simrelr$};
            }},
        },
        leaktf/.style = { line width=1pt, postaction={decorate},
            decoration = { markings, mark=at position 0.35 with {
                    \node[fill=white] {$\dtf$};
            }},
        },
        ntranscube/.style = {->,line width=1pt,dashed,dash pattern=on 3pt off 3pt},
        given/.style = {},
        new1/.style = { mauve },
        new2/.style = { NavyBlue },
        new3/.style = { YellowGreen },
        new4/.style = { BurntOrange },
        result/.style = { lightgray },
        ignoreheight/.style = { text depth=0, text height=1ex }
    }
    \begin{center}
        \hfill
        \scalebox{0.8}[0.8]{
            \begin{tikzpicture}[scale=0.3]
                \tikzmath{\disthorizontal = 3; \distvertical = 2.5; }

                \node (std1) {$\stated_1$};
                \node[above right=of std1] (std2) {$\stated_2$};
                \node[left=\disthorizontal of std1] (stc1) {$\statec_1$};
                \node[left=\disthorizontal of std2] (stc2) {$\statec_2$};
                \node[below=\distvertical of std1] (stf1) {$\statef_1$};
                \node[below=\distvertical of std2,new1] (stf2) {$\statef_2$};
                \node[below=\distvertical of stc1] (ste1) {$\statee_1$};
                \node[below=\distvertical of stc2] (ste2) {$\statee_2$};

                \draw[samepoint] (std1) -- (std2);
                \draw[samepoint] (stc1) -- (stc2);
                \draw[samepoint=result] (stf1) -- (stf2);
                \draw[samepoint=result] (ste1) -- (ste2);

                \draw[simrel] (stc2) -- (std2);
                \draw[simrel] (stf1) -- (ste1);
                \draw[simrel,new1] (ste2) -- (stf2);

                \draw[ntranscube] (stc1) -- node[right,pos=0.7] (v1) {$\tracek$} (ste1);
                \draw[ntranscube] (stc2) -- node[right,pos=0.7] (v2) {$\tracek$} (ste2);

                \draw[prewhite,simrel] (std1) -- (stc1);

                \draw[ntranscube,new1] (std2) -- node[left,pos=0.7,ignoreheight] (u2) {$\tracel$} (stf2);
                \draw[ntranscube,prewhite] (std1) -- node[left,pos=0.7,ignoreheight] (u1) {$\tracel$} (stf1);
            \end{tikzpicture}
        }
        \hfill
        \scalebox{0.8}[0.8]{
        \begin{tikzpicture}
            \tikzmath{\disthorizontal = 3; \distvertical = 2.5; }

            \node (std1) {$\nstated_1$};
            \node[above right= of std1] (std2) {$\nstated_2$};
            \node[left=\disthorizontal of std1] (stc1) {$\nstatec_1$};
            \node[left=\disthorizontal of std2] (stc2) {$\nstatec_2$};
            \node[below=\distvertical of std1] (stf1) {$\nstatef_1$};
            \node[below=\distvertical of std2,new1] (stf2) {$\nstatef_2$};
            \node[below=\distvertical of stc1] (ste1) {$\nstatee_1$};
            \node[below=\distvertical of stc2] (ste2) {$\nstatee_2$};

            \draw[samepoint] (std1) -- (std2);
            \draw[samepoint] (stc1) -- (stc2);
            \draw[samepoint=result] (stf1) -- (stf2);
            \draw[samepoint=result] (ste1) -- (ste2);

            \draw[simrel] (stc2) -- (std2);
            \draw[simrel] (stf1) -- (ste1);
            \draw[simrel,new1] (ste2) -- (stf2);

            \draw[ntranscube] (stc1) -- node[right,pos=0.7] (v1) {$\dtpair{\tracek}{\dtracee}$} (ste1);
            \draw[ntranscube] (stc2) -- node[right,pos=0.7] (v2) {$\dtpair{\tracek}{\dtracee}$} (ste2);

            \draw[prewhite,simrel] (std1) -- (stc1);

            \draw[ntranscube,new1] (std2) -- node[left,pos=0.7,ignoreheight] (u2) {$\dtpair{\tracel}{\dtraced}$} (stf2);
            \draw[leaktf,new1] (v2) -- (u2);
            \draw[ntranscube,prewhite] (std1) -- node[left,pos=0.7,ignoreheight] (u1) {$\dtpair{\tracel}{\dtraced}$} (stf1);
            \draw[leaktf,prewhite] (u1) -- (v1);
        \end{tikzpicture}}%
        \hfill\vphantom{a}
    \end{center}
    \caption{\label{fig:cubes} Constant-time cubes. Left: leakage transforming; Right: directive transforming ($\dtf$).}
\end{figure}

The goal of secure compilation is to prove that a compiler pass preserves non-interference from source to target program.
The current methods for leakage semantics without speculation draw from classical methods for compiler correctness
which utilize \emph{simulation} in order to argue that the target program's executions
can be found in the source program's semantics \parencite{mccarthyCorrectnessCompilerArithmetic1967,leroyFormallyVerifiedCompiler2009,}.
A traditional simulation is a relation $\simrel$ between the target program's states and the source program's states.
Whenever a target state $\stated$ is simulated by a source state $\statec$, $\stated \simrel \statec$,
and has a transition $\stated \xrarrow{o}{} \statef$,
where $o$ is an observable environment interaction,
then $\statec$ has to have a next transition $\statec \xrarrow{o}{} \statee$
so that $\statef \simrel \statee$.
Simulation ensures that the target program's execution produces the same observable environment interactions as the source program.
For leakage semantics, a notion of simulation needs more:
Compilers aim to preserve observable environment interactions such as system-calls,
but they regularly modify side-channel leakage which creates a difference in leakage between source and target program.
\emph{Leakage transformation} \parencite{costanzoEndtoendVerificationInformationflow2016,bartheStructuredLeakageApplications2021,bartheSecureCompilationSideChannel2018} solves this issue:
Given the leakages along a source program's execution,
the simulation also provides a way to transform the observable leakage
into the observable leakage of the corresponding target program's execution.
In order to preserve non-interference, a simulation with leakage transformation needs to satisfy the \emph{constant-time cube diagram}.
The cube diagram can be seen in \Cref{fig:cubes} on the left.
It looks at two pairs of states related by simulation $\stated_1 \simrel \statec_1$ and $\stated_2 \simrel \statec_2$.
Then, if $\statec_1$'s next transitions leak a sequence~$\tracek$ and so do $\statec_2$'s next transitions (black)
then the next transitions' leakage from $\stated_1$ and $\stated_2$ must coincide as well (\textcolor{mauve}{purple}).
The target leakage $\tracel$ does \emph{not} need to be equal to the source leakage $\tracek$.
\Cref{table:compilerworks} (left) lists compilers employing simulations with leakage transformation (no speculative semantics) that satisfy the constant-time cube diagram.

In the speculative execution setting,
compilers so far try to avoid Spectre attacks by running compiler passes that insert the mitigations discussed above.
These passes are among the last passes in the compiler chain in order to avoid the removal of mitigations by other passes.
They aim to eliminate speculative leakage by inserting mitigations conservatively, which entails significant performance overhead.
Even worse: Efforts to improve performance were flawed, again leading to insecure executables \parencite{patrignaniExorcisingSpectresSecure2021}.
Passes that insert mitigations are desirable because they free the developer from having to think about speculation:
Before the compiler runs the mitigation pass,
the semantics can be considered speculation-free.
However, recent research suggests that in order to obtain minimal performance overhead,
the developer needs an interface to control inserted mitigations \parencite{shivakumarTypingHighSpeedCryptography2023}.
This means a new proof method for compilation is needed that works when both source and target program operate under speculative execution semantics.

\begin{tablecites}[table:compilerworks]{l||X[r]|X[r]||X[r]|X[r]}{
        Left: Compilers that preserve side-channel security under leakage semantics;
        \textsf{Compcert} \cite{leroyFormallyVerifiedCompiler2009}, \textsf{Jasmin} \cite{bartheStructuredLeakageApplications2021}.
        Right: Proof methods for compilers with speculative execution semantics; \textsf{Ex. Spectres} \cite{patrignaniExorcisingSpectresSecure2021}.
        \textcolor{lightgreen}{Green} parameters are more expressive.
    }
    {\CodeBefore
        \tikz \path [right color=lightgreen,left color=base] ([shift={(0.5,0)}]7-|4) rectangle (8-|5) ;
    \Body}
            & Compcert 
            & Jasmin 
            & Ex. Spectres
            & This Paper
            \\ \hline
Property            & $\propnip$    & $\propnip$    & $\propstsp$        & \cellcolor{lightgreen} $\propsnip$   \\
Simulation          & LT-Sim        & LT-Sim        & LO-Sim             & \cellcolor{lightgreen} DT-Sim        \\
Speculation         & ---           & ---           & SW, M, TO          & \cellcolor{lightgreen} US, ST        \\
Non-Det             & No            & No            & No                 & \cellcolor{lightgreen} Yes           \\
Spec Source         & ---           & ---           & PHT                & PHT                                  \\
Memory Safety       & S             & S             & U                  & S                                    \\
Passes              & Full          & Full          & SLH \& Fence Ins   & DC \& RA \\
\end{tablecites}

To the best of our knowledge, \citet{patrignaniExorcisingSpectresSecure2021} is the only work so far to \makebox{\emph{(dis-)prove}} compiler correctness under speculative execution semantics.
The authors target specifically compiler passes that insert mitigations and discovered the aforementioned flaws in fence insertion and speculative load hardening.
Being tailored towards mitigations, they employ assumptions on the setting that do not hold in general and that we overcome in our development.
We detail the differences to our work in \Cref{table:compilerworks} (right):
\begin{enumerate*}[font=\bfseries]
    \item[Property:]
        The first difference lies in the property ensuring side-channel security.
        While speculative non-interference preservation ($\propsnip$) is the goal,
        their proof method preserves speculative taint safety ($\propstsp$).
        Taint safety is a safety property that soundly approximates non-interference.
        While $\propstsp$ is not an approximation of $\propsnip$, the method is appropriate for analyzing mitigation passes.
        Our method is designed to prove $\propsnip$, instead.
    \item[Speculation and Non-Det:]
        A bigger difference is the speculative execution semantics:
        They assume a speculation window (\textsf{SW})
        that limits the number of steps speculatively executed after a misprediction before a rollback occurs.
        Speculation windows are a restriction of the speculative execution semantics.
        While the restriction is reasonably chosen with respect to current hardware,
        it presents an under-approximation of the speculative execution semantics.
        They further assume that the semantics always mispredicts branches (\textsf{M}) to maximize speculative execution.
        Mispredict semantics are no further restriction of the semantics,
        as maximizing speculative execution also maximizes the side-channel leakages produced.
        Together, these assumptions form a \emph{deterministic} restriction of the full speculative execution semantics.
        The focus on compiler mitigations also led them to the assumption that the \emph{source language is speculation-free},
        meaning the speculative semantics are target program only (\textsf{TO}).
        In this paper, we deal with full, unbounded speculative execution semantics (\textsf{US}) and the induced non-determinism in both source and target semantics~(\textsf{ST}).
    \item[Memory:]
        Our work is presented for structured memory (\textsf{S}) and the assumption that source programs are memory safe when executed under speculation-free semantics.
        Memory safety is a common assumption for compilers, as unsafe memory accesses are usually considered undefined behavior in source semantics.
        Our proof method also works with unstructured memory~(\textsf{U}),
        but we also present a static analysis whose presentation immensely benefits from structured memory.
        This led us to present all of our work with structured memory as the concepts behind the proof method stay the same.
    \item[Simulation:]
        Mitigations insert speculation barriers and do not change the code otherwise.
        As the source program in \citet{patrignaniExorcisingSpectresSecure2021} does not speculate,
        the leakages of the source program will still be fully present and unchanged in the target program.
        Because the target program is executed with speculative execution semantics,
        there can, however, be additional leakages present in the target program.
        This leads their work to consider leave-out simulations (\textsf{LO-Sim}),
        where source leakages are equal to target leakages with additional speculative leakages.
        For general compiler transformations and unbounded speculation, we introduce the more general directive transforming simulations (\textsf{DT-Sim}).
    \item[Passes:]
        Their work targets the compiler passes that insert the software mitigations against spectre from above:
        Speculative load hardening and fence insertion.
        Our work targets two general-purpose compiler transformations:
        Dead code elimination~(\textsf{DC}) and transformations from the register allocation phase~(\textsf{RA}).
\end{enumerate*}

\subsection{Contributions}
In this paper, we present \emph{snippy simulations},
a novel proof method for preservation of non-interference under speculative execution semantics.
The main challenge to overcome with speculation is the non-determinism in directive semantics.
First, the definition of simulations becomes more involved:
Deterministic semantics have the advantage that simulations are always bi-simulations \parencite{milnerAlgebraicDefinitionSimulation1971}.
A simulation for deterministic semantics synchronizes the (singular) execution of the source program and the execution of the target program.
For non-deterministic semantics such as speculative execution semantics,
each of the target program's executions must be synchronized with a source program's execution.
Second, similar to how compilers do not preserve leakage,
they also modify where the attacker can steer the computation:
A sequence of directives to steer execution on the source program may be unfit to steer any execution on the target program (the compiler may change instructions and with them the available directives to steer execution also change).
We address this issue by introducing the new concept of \emph{directive transformations}.
Directive transformations match every executable sequence of directives in the target program
with a sequence of directives executable in the source program.
We then embed directive transformations into a new constant-time cube for speculative execution semantics (\Cref{fig:cubes}, right).
It is our contribution to make the constant-time cube applicable for speculative semantics.

We demonstrate our proof method on two compiler transformations: Dead code elimination and register allocation.
This is the first time that these compiler passes have been formally analyzed under speculative execution semantics and to our surprise,
we found a serious vulnerability in the transformations performed during the register allocation phase.
The register allocation phase is located in the compiler chain where a hardware-independent intermediate representation is replaced by a concrete ISA.
It transforms virtual registers into hardware registers and has to spill excess registers to the stack:
In \Cref{code:specv1}, the framed instructions constitute a spill of the register~$\makereg{bytes}$.
The program before register allocation (without \Cref{line:spill,line:fill}) has no side-channel leakage of sensitive data under speculative execution semantics.
The program after register allocation (with \Cref{line:spill,line:fill}) is vulnerable to the Spectre attack presented above.
This vulnerability is not unique to a singular register allocator, but more generally stems from the spilling transformation performed in this phase.
In order to fix the transformations performed,
we present a novel static analysis on a product of source program (before register allocation) and target program (after register allocation)
that finds problematic speculative leakages introduced by spilling transformations.
We then fix the problematic transformations by inserting as few mitigations as possible.
The fix is automated and applies to every existing register allocator.
We then show that the fixed transformations are secure by once more applying our proof method.

In short, we address the problem of non-interference preservation for compiler passes under speculative execution semantics.
We make the following contributions:
\begin{enumerate}[label=$\blacktriangleright$]
    \item We develop a proof method for non-interference preservation \emph{under speculative execution semantics} based on simulation relations.
        Technically, we address non-determinism from speculation with \emph{directive transformations}.
    \item We \emph{demonstrate} our proof method on dead code elimination.
    \item We show that \emph{register allocation does not preserve non-interference} under speculative execution semantics.
        We confirm this for all register allocators of the \texttt{LLVM} compiler on code from the widely used \texttt{libsodium} cryptographic library.
    \item We propose a static analysis that \emph{finds and automatically fixes} the vulnerabilities introduced by any register allocator.
        We apply our proof method to show that the fixed transformation preserves speculative non-interference.
\end{enumerate}

\subsubsection*{Outline}
\Cref{section:language} introduces our formulation of speculative execution semantics with leakages and directives.
\Cref{section:properties} defines speculative non-interference preservation ($\propsnip$).
Our proof method is presented in \Cref{section:simulation}, and we apply it in \Cref{section:deadcode} to prove that dead code elimination preserves non-interference.
We then analyze the vulnerability we found in the register allocation transformations and present our fix in \Cref{section:regalloc}.
We finish with related works in \Cref{section:related} and discuss future prospects in \Cref{section:conclusion}.

\section{Language Model}
\label{section:language}

We introduce our programming language and its speculative execution semantics.
A program is a mapping $\prog : \pcsof{\prog} \to \instrs$ from program counters to instructions.
The initial program counter is $\entry \in \pcsof{\prog}$.
Instructions $\inst \in \instrs$ are of the following form.
We denote registers by $\rega, \regb, \regc, \regd \in \regs$
and memory variables by $\vara, \varb \in \vars$.
The subscripts $\pcsuc \in \pcsof{\prog}$ are the successors of $\inst$.
We may also call $\pcsuc$ a successor of $\pc$ instead, if it is a successor of $\progof{\pc}$.
\begin{align*}
    \inst \in \instrs \;& \Coloneqq \;
    \Iexit
    \bnfalt \Inop{\pcsuc}
    \bnfalt \Iasgn{\rega}{\regb \op \regc}{\pcsuc}
    \bnfalt \Iload{\rega}{\vara}{\regb}{\pcsuc}
    \bnfalt \Istore{\vara}{\regb}{\regc}{\pcsuc}
    \bnfalt \Iif{\regb}{\pcsuc_{\btrue}}{\pcsuc_{\bfalse}}
    \bnfalt \Isfence{\pcsuc}
    \bnfalt \Islh{\rega}{\pcsuc}
\end{align*}
The instructions are return (or exit), no-op, assignment, load, store, conditional branching,
and the software mitigations for Spectre, speculation fences and speculative load hardening.

\subsubsection*{Semantics}
We introduce two semantics:
Speculation-free $\statec \trans{\leak}{\direct} \stated$ (\Cref{rules:spec-free})
and speculative $\nstatec \ntrans{\leak}{\direct} \nstated$ (\Cref{rules:spec}).
The transitions are labelled by leakage $\leak \in \leaks$
and directives $\direct \in \directs$.
Our leakages stem from the constant-time leakage model which leaks the addresses of memory accesses as well as branching conditions.
Directives resolve non-determinism for the speculative semantics, i.e.\ when speculation starts and ends or where unsafe memory accesses (out-of-bounds) actually access memory.
Directives are considered under the attacker's control.

A speculation-free state is a tuple $(\pc, \rasgn, \masgn) \in \states$
that tracks the program counter $\pc \in \pcs$,
register contents $\rasgn : \regs \to \vals$,
and memory $\masgn : \mems \to \vals$.
The semantics is given in \Cref{rules:spec-free} and is fairly standard.
Memory is structured and without dynamic allocation.
Each variable $\vara$ has static size and for an offset address $\adr \in \adrs \subseteq \vals$,
we write $\adr \in \sizeof{\vara}$ to indicate that $\adr$ lies within $\vara$'s size.
The memory is $\mems = \setcond{(\vara, \adr)}{\adr \in \sizeof{\vara}}$.
Following the leakage model, loads (\labelcref{rule:load,rule:load-unsafe}) and stores (\labelcref{rule:store,rule:store-unsafe}) leak the accessed address used via $\Eload{\adr}$ and $\Estore{\adr}$,
and branching (\labelcref{rule:branch}) leaks its condition with $\Eif{\bvalue}$.
The directives $\Dload{\varb}{\adrp}$ (\labelcref{rule:load-unsafe}) and $\Dstore{\varb}{\adrp}$ (\labelcref{rule:store-unsafe}) let the attacker control the address for unsafe memory accesses.
%
%

\begin{figure}
    \begin{ruleframes}
        \begin{ruleframe}[label=rules:spec-free]{Speculation-free Semantics}
            \definerule{nop}
            {\progof{\pc} = \Inop{\pcsuc}}
            {(\pc, \rasgn, \masgn) \trans{\Enone}{\Dstep} (\pcsuc, \rasgn, \masgn)}
            {\label{rule:nop}}

            \definerule{asgn}
            {
                \progof{\pc} = \Iasgn{\rega}{\regb \op \regc}{\pcsuc}
                \\
                \val = \rasgnof{\regb} \op \rasgnof{\regc}
            }
            {(\pc, \rasgn, \masgn) \trans{\Enone}{\Dstep} (\pcsuc, \subst{\rasgn}{\rega}{\val}, \masgn)}
            {\label{rule:asgn}}

            \definerule{branch}
            {
                \progof{\pc} = \Iif{\regb}{\pcsuc_{\btrue}}{\pcsuc_{\bfalse}}
                \\
                \bvalue = (0 \sameas \rasgnof{\regb})
            }
            {(\pc, \rasgn, \masgn) \trans{\Eif{\bvalue}}{\Dif} (\pcsuc_{\bvalue}, \rasgn, \masgn)}
            {\label{rule:branch}}

            \definerule{load}
            {
                \progof{\pc} = \Iload{\rega}{\vara}{\regb}{\pcsuc}
                \\
                \adr = \rasgnof{\regb} \in \vsizeof{\vara}
                \\
                \val = \masgnof{\vara}{\adr}
            }
            {(\pc, \rasgn, \masgn) \trans{\Eload{\adr}}{\Dstep} (\pcsuc, \subst{\rasgn}{\rega}{\val}, \masgn)}
            {\label{rule:load}}

            \definerule{store}
            {
                \progof{\pc} = \Istore{\vara}{\regb}{\regc}{\pcsuc}
                \\
                \adr = \rasgnof{\regb} \in \vsizeof{\vara}
                \\
                \val = \rasgnof{\regc}
            }
            {(\pc, \rasgn, \masgn) \trans{\Estore{\adr}}{\Dstep} (\pcsuc, \rasgn, \subst{\masgn}{(\vara,\adr)}{\val})}
            {\label{rule:store}}

            \definerule{load-unsafe}
            {
                \progof{\pc} = \Iload{\rega}{\vara}{\regb}{\pcsuc}
                \\
                \adr = \rasgnof{\regb} \notin \vsizeof{\vara}
                \\
                \val = \masgnof{\varb}{\adrp}
            }
            {
                (\pc, \rasgn, \masgn) \trans{\Eload{\adr}}{\Dload{\varb}{\adrp}}
                (\pcsuc, \subst{\rasgn}{\rega}{\val}, \masgn)
            }
            {\label{rule:load-unsafe}}

            \definerule{store-unsafe}
            {
                \progof{\pc} = \Istore{\vara}{\regb}{\regc}{\pcsuc}
                \\
                \adr = \rasgnof{\regb} \notin \vsizeof{\vara}
                \\
                \val = \rasgnof{\regc}
            }
            {
                (\pc, \rasgn, \masgn) \trans{\Estore{\adr}}{\Dstore{\varb}{\adrp}}
                (\pcsuc, \rasgn, \subst{\masgn}{(\varb, \adrp)}{\val})
            }
            {\label{rule:store-unsafe}}
        \end{ruleframe}
        \begin{ruleframe}[label=rules:spec]{Speculating Semantics}
            \definerule{step}
            {
                \text{$\instof{\statec}$ speculation insensitive}
                \\
                \statec \trans{\leak}{\direct} \stated
            }
            {\nstatec\lseq\statec \ntrans{\leak}{\direct} \nstatec\lseq\stated}
            {\label{rule:nstep}}

            \definerule{spec}
            {
                \progof{\pc} = \Iif{\rega}{\pcsuc_{\btrue}}{\pcsuc_{\bfalse}}
                \\
                \bvalue = (0 \sameas \rasgnof{\rega})
            }
            {
                \nstatec \lseq (\pc, \rasgn, \masgn) \ntrans{\Eif{\lnot\bvalue}}{\Dspec}
                \nstatec\lseq(\pc, \rasgn, \masgn)\lseq(\pcsuc_{\lnot\bvalue}, \rasgn, \masgn)
            }
            {\label{rule:nspec}}
            \\
            \definerule{rollback}
            {
                \sizeof{\nstatec} \geq 1
            }
            {\nstatec \lseq \statec \ntrans{\Erbless}{\Drb} \nstatec}
            {\label{rule:nrb}}

            \definerule{sfence}
            {
                \progof{\pc} = \Isfence{\pcsuc}
            }
            {
                (\pc, \rasgn, \masgn) \ntrans{\Enone}{\Dstep} (\pcsuc, \rasgn, \masgn)
            }
            {\label{rule:nsfence}}

            \definerule{slh}
            {
                \progof{\pc} = \Islh{\rega}{\pcsuc}
                \\
                \val = \inlineife{\sizeof{\nstatec} > 0}{0}{\rasgnof{\rega}}
            }
            {
                \nstatec\lseq(\pc, \rasgn, \masgn) \ntrans{\Enone}{\Dstep} \nstatec\lseq(\pcsuc, \subst{\rasgn}{\rega}{\val}, \masgn)
            }
            {\label{rule:nslh}}
        \end{ruleframe}
    \end{ruleframes}
\end{figure}

The source of speculation are $\IWif{\regb}$ instructions triggering branch-prediction~(\spht{}).
Our semantics models speculation only for misspeculated branches.
The reason for this is that a correctly predicted branch will later commit and the resulting architectural state and the observable leakages will coincide with an execution that did not speculate in the first place.
With no difference in correct speculation and speculation-free execution there is no need to model correctly speculated branches separately, and we will use the terms speculation and misspeculation interchangeably.
Speculation thus starts with a branch misprediction,
and later ends with a rollback to the state before speculation.%
\footnote{This means our semantics allows for another speculation immediately after rollback.
This could be avoided with an additional flag to store whether a state has already been mispredicted.}
A speculating state tracks all active mispredictions in a stack of states $\nstatec, \nstated, \ldots \in \nstates = \states^*$.
The semantics $\nstatec \ntrans{\leak}{\direct} \nstated$ is provided in \Cref{rules:spec}.
With $\instof{\statec}$, we access $\statec$'s instruction $\progof{\pc}$, when $\statec = (\pc, \rasgn, \masgn)$.
The mitigation instructions $\IWsfence$ and $\IWslh{\rega}$ are \emph{speculation sensitive},
as their semantics depend on whether the current state is speculating.
Their semantics is according to our explanation in \Cref{section:intro}:
A speculation fence $\IWsfence$ disallows speculation, so \labelcref{rule:nsfence} only executes in states currently not speculating.
\labelcref{rule:nslh} performs speculative load hardening $\IWslh{\rega}$, which wipes a register only if the state is currently speculating.
The remaining instructions are \emph{speculation insensitive}.
\labelcref{rule:nstep} executes them on the currently speculating state, i.e.\ the top-most state in the stack of states.
The directives that determine whether a misprediction happens or not are $\Dspec$ and $\Dif$.
$\Dspec$ demands misprediction performed by \labelcref{rule:nspec}.
A copy of the current state is pushed on top of the current state and the program counter is set to the incorrect branch.
Otherwise, \labelcref{rule:branch} executes on directive $\Dif$ for a correct branching.
\labelcref{rule:nrb} rolls back execution to before the last misprediction.
It can be triggered with a $\Drb$ directive in any state that is currently speculating.
There is no bound on the length of a speculation.

We write $ \nstatec \ntranss{\tracel}{\dtraced} \nstated$ for finite executions
and $\nstatec \ntransi{\itrace}{\idtrace}$ for diverging executions.
We use $\tracel$ and $\dtraced$ for both finite and infinite sequences,
i.e.\ $\tracel \in \leaks^* \cup \leaks^{\infty}$ and $\dtraced \in \directs^* \cup \directs^{\infty}$.
We call any $(\entry, \rasgn, \masgn)$ initial and $(\pc, \rasgn, \masgn)$ with $\progof{\pc} = \Iexit$ final.
%
%
The behavior of a program consists of the directives and events along any execution from an initial state.
The speculation-free semantics is deterministic, so its behavior is a single execution;
the speculative behavior is non-deterministic and its behavior forms a set of executions.
\begin{align*}
    \behdefof{\prog}{\statec}
    \; & \defeq \;
    \begin{cases}
        \tbeh{\dtpair{\tracel}{\dtraced}} & \statec \transs{\tracel}{\dtraced} \stated, \textnormal{$\stated$ final} \\
        \dbeh{\dtpair{\itrace}{\idtrace}} & \statec \transi{\itrace}{\idtrace}
    \end{cases} &
    \nbehdefof{\prog}{\nstatec}
    \; & \defeq{} \;
    \begin{aligned}
        &\setcond{\tbeh{\dtpair{\tracel}{\dtraced}}}
        {\nstatec \ntranss{\tracel}{\dtraced} \nstated, \textnormal{$\nstated$ final}} \\
        {} \cup {} &
        \setcond{\dbeh{\dtpair{\itrace}{\idtrace}}}{\nstatec \ntransi{\itrace}{\idtrace}}
    \end{aligned}
\end{align*}

We call $\prog$ safe if no memory access is unsafe,
i.e.\ for every initial state $\statec$
no directives $\Dload{\vara}{\adr}, \Dstore{\vara}{\adr}$
occur in the speculation-free behavior $\behof{\prog}{\statec}$.
For the remaining paper we assume safe programs.
Note that this does not mean that speculating memory accesses need to be safe.
As seen in \Cref{code:specv1}, Spectre Attacks utilize the fact that safe programs are not safe under speculative semantics.

\begin{example}
\begin{wrapstuff}[type=code,width=0.3\linewidth]
    \begin{codebox}{}
\begin{lstlisting}
 a = (b < buf_size)
 br (a)|$\NextInst{\mkern1mu\makepc{3},\;\makepc{4}}$|
     buf[b] = secret
 bytes = stk[0]
 br (bytes)|$\NextInst{\mkern1mu\makepc{6},\;\makepc{6}}$|
 ret
\end{lstlisting}
\end{codebox}%
\caption{\label{code:simplerv1}Simplified \Cref{code:specv1}}
\end{wrapstuff}
    \Cref{code:simplerv1} contains a simplified version of \Cref{code:specv1}.
    The secret is already in a register $\rasgnof{\makereg{secret}} = \val$ and to be stored to \makevar{buf} at offset \makereg{b}.
    Assume the offset is out of bounds, $\rasgnof{\makereg{b}} = 8 \notin \sizeof{\makevar{buf}}$.
    From a state $\statec = (\makepc{2}, \rasgn, \masgn)$, where the first instruction was already executed, i.e.\ $\rasgnof{\makereg{a}} = \bfalse \neq 0$,
    the following transitions are available:
    First, a speculation is started with \labelcref{rule:nspec} and $\statec$ is copied with program counter set to the incorrect branch \makepc{3}.
    Next, \labelcref{rule:store-unsafe} executes on directive $\Dstore{\makevar{stk}}{0}$, $\masgnp = {\subst{\masgn}{(\makevar{stk}, 0)}{\val}}$.
    The \labelcref{rule:load} then brings the secret to a register, $\rasgnp = \subst{\rasgn}{\makereg{bytes}}{\val}$.
    Finally, \labelcref{rule:branch} leaks whether the secret is 0,

    \noindent
    \begin{minipage}{\linewidth}
    \begin{align*}
        \statec &\ntrans{\Eif{\bfalse}}{\Dspec} \statec\lseq(\makepc{3},\rasgn, \masgn)
        \ntrans{\Estore{8}}{\Dstore{\makevar{stk}}{0}} \statec\lseq(\makepc{4}, \rasgn, \masgnp)
                \ntrans{\Eload{0}}{\Dstep} \statec\lseq(\makepc{5}, \rasgnp, \masgnp)
                \ntrans{\Eif{\val = 0}}{\Dif} \statec\lseq(\makepc{6}, \rasgnp,\masgnp)\,.
    \end{align*}
    \end{minipage}
\end{example}

Speculative semantics exhibit two important properties:
First, due to the constant-time leakage model,
speculative semantics reveal the program counter to the attacker:
The program counter can be deduced from the leakage of conditionals in \cref{rule:branch,rule:nspec}.
Second, directives resolve all non-determinism introduced by speculation.
To express the first property,
we write $\nstatec \defsamepoint \nstated$ to mean that $\nstatec$ and $\nstated$ are at the same program point.
For speculation-free states, $\statec \atpc \pc$ means that
$\statec$ is at program counter $\pc$, $\statec = (\pc, \rasgn, \masgn)$.
Then, $\statec \samepoint \stated$ means that $\statec$ and $\stated$ share the program counter, $\statec \atpc \pc \atpc \stated$.
For speculating states, we write $\nstatec \samepoint \nstated$ if each pair of configurations in their speculation stack is at the same program point.
Formally, $\varepsilon \samepoint \varepsilon$,
and $\nstatec\lseq\statec \samepoint \nstated\lseq\stated$ if $\nstatec \samepoint \nstated$ and $\statec \samepoint \stated$.
The following two lemmas express the properties.

\begin{lemma}[Program-Counter-Leakage]\label{lem:programpoint-by-leakage}
    If two same-point states $\nstatec_1 \samepoint \nstatec_2$ execute with the same directives and leakages, $\nstatec_1 \ntranss{\tracel}{\dtraced} \nstated_1$, $\nstatec_2 \ntranss{\tracel}{\dtraced} \nstated_2$,
    then the resulting states are also same-point, $\nstated_1 \samepoint \nstated_2$.
\end{lemma}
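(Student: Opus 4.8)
The plan is to reduce the statement to a one-step claim and then lift it by a routine induction on the execution length.

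\emph{One-step claim.} First I would prove the single-transition version: whenever $\nstatec_1 \samepoint \nstatec_2$ and both states take the same labelled step, $\nstatec_1 \ntrans{\leak}{\direct} \nstated_1$ and $\nstatec_2 \ntrans{\leak}{\direct} \nstated_2$, then $\nstated_1 \samepoint \nstated_2$. The proof is a case distinction over the rule of \Cref{rules:spec} that derives the step from $\nstatec_1$. Two observations make the cases uniform. First, $\nstatec_1 \samepoint \nstatec_2$ forces the two speculation stacks to have the same length and their top-most configurations to share a program counter; since the program is fixed, the top instruction is then the same on both sides, and in particular both states are speculating or both are not. Second, given that instruction and that speculation status, the directive $\direct$ already determines which rule can fire --- for instance $\Dspec$ forces \labelcref{rule:nspec}, $\Drb$ forces \labelcref{rule:nrb}, and for a load $\Dstep$ forces \labelcref{rule:load} (hence an in-bounds address) whereas $\Dload{\varb}{\adrp}$ forces \labelcref{rule:load-unsafe} --- so $\nstatec_2$'s step is derived by the same rule as $\nstatec_1$'s.

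\emph{The cases.} For the speculation-insensitive steps taken via \labelcref{rule:nstep} (namely \labelcref{rule:nop}, \labelcref{rule:asgn}, \labelcref{rule:load}, \labelcref{rule:store}, \labelcref{rule:load-unsafe}, \labelcref{rule:store-unsafe}) and for \labelcref{rule:nsfence} and \labelcref{rule:nslh}, the stack below the top is untouched on both sides and the top program counter is rewritten to the common successor $\pcsuc$ of the shared instruction; the register or memory updates may differ but are irrelevant to $\samepoint$. For \labelcref{rule:branch} (directive $\Dif$) and \labelcref{rule:nspec} (directive $\Dspec$) the new top program counter is $\pcsuc_{\bvalue}$, respectively $\pcsuc_{\lnot\bvalue}$, and here the hypothesis that the \emph{leakages agree} is used: the emitted leak $\Eif{\bvalue}$ (resp.\ $\Eif{\lnot\bvalue}$) pins down the branch bit $\bvalue$, so both sides branch identically, and for \labelcref{rule:nspec} the freshly pushed copy sits at the same program point as the configuration it duplicates. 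For \labelcref{rule:nrb} both sides pop one configuration off equal-length, pairwise same-point stacks, which again leaves them pairwise same-point. In every case $\nstated_1 \samepoint \nstated_2$.

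\emph{Induction.} The lemma then follows by induction on the length $n$ of $\tracel$, which equals the number of transitions taken and is the same for both executions, since every rule emits exactly one leakage and consumes exactly one directive. For $n = 0$ we have $\nstated_i = \nstatec_i$ and the claim is the hypothesis. For $n+1$, decompose each execution as $\nstatec_i \ntrans{\leak}{\direct} \nstatec_i' \ntranss{\tracel'}{\dtraced'} \nstated_i$ with the same leading labels $\leak$, $\direct$; the one-step claim gives $\nstatec_1' \samepoint \nstatec_2'$, and the induction hypothesis applied to the length-$n$ tails gives $\nstated_1 \samepoint \nstated_2$. I expect no deep obstacle here; the only point requiring care is the one-step claim, and within it (a) the argument that the same rule necessarily fires on both sides, which rests on the equal stack length --- hence equal speculation status --- baked into $\samepoint$, and (b) the observation that leakage equality is invoked exactly in \labelcref{rule:branch} and \labelcref{rule:nspec}, since without it the branch bit, and therefore the successor program counter, could differ between the two runs.
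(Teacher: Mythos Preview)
Your proposal is correct and follows essentially the same approach as the paper: reduce to a one-step claim, do a case distinction on the transition rule (rollback, branch/spec using the leaked bit, and the single-successor remainder), then lift by induction on the length of the execution. Your write-up is in fact more explicit than the paper's, which collapses all non-branch, non-rollback cases into a single ``only one successor'' remark and leaves the outer induction implicit.
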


\begin{lemma}[Directive-Determinism]\label{lemma:directive-det}
    For all $\nstatec$ and $\direct$
    there exist at most one $\nstated$ and $\leak$
    with $\nstatec \ntrans{\leak}{\direct} \nstated$.
\end{lemma}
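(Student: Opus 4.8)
The plan is a direct case analysis: I claim that $\nstatec$ together with $\direct$ already determines which inference rule of \Cref{rules:spec} can fire, and that the side conditions of that rule then fix $\leak$ and $\nstated$ uniquely.

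I would first record the analogous statement for the speculation-free semantics, on which \labelcref{rule:nstep} relies: for every $\statec=(\pc,\rasgn,\masgn)$ and every $\direct$ there is at most one pair $(\leak,\stated)$ with $\statec\trans{\leak}{\direct}\stated$. This is \emph{not} immediate from ``the speculation-free semantics is deterministic'', because \labelcref{rule:load-unsafe} and \labelcref{rule:store-unsafe} branch on an arbitrary target cell --- but that cell is named by the directive. By inspection of \Cref{rules:spec-free}, the instruction $\progof{\pc}$ selects a small family of candidate rules --- e.g.\ $\Inop{\pcsuc}$ only \labelcref{rule:nop}, $\Iif{\regb}{\cdot}{\cdot}$ only \labelcref{rule:branch}, a load only \labelcref{rule:load} or \labelcref{rule:load-unsafe}, a store only \labelcref{rule:store} or \labelcref{rule:store-unsafe} --- and for a load or store the directive disambiguates the remaining two-way choice: $\Dstep$ forces the in-bounds rule together with its premise $\rasgnof{\regb}\in\vsizeof{\vara}$, while $\Dload{\varb}{\adrp}$ (resp.\ $\Dstore{\varb}{\adrp}$) forces the out-of-bounds rule, the premise $\rasgnof{\regb}\notin\vsizeof{\vara}$, and the accessed cell $(\varb,\adrp)$. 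In each case the surviving premises express $\leak$ (an address read off $\rasgn$, or a branch condition) and $\stated$ (a successor program counter with an updated $\rasgn$ or $\masgn$) as total functions of $\statec$ and $\direct$, giving uniqueness.

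For the speculative relation, if $\nstatec=\varepsilon$ no rule applies, so write $\nstatec=\nstatec'\lseq\statec$ with $\statec=(\pc,\rasgn,\masgn)$; note that $\nstatec'$, whether $\nstatec$ is currently speculating (i.e.\ $\sizeof{\nstatec'}\ge1$), and $\progof{\pc}$ are all functions of $\nstatec$. I would then split on $\progof{\pc}$. If $\progof{\pc}=\Isfence{\pcsuc}$, the only rule mentioning this top state is \labelcref{rule:nsfence}, which moreover requires $\nstatec'=\varepsilon$ and $\direct=\Dstep$; the only other possibility is \labelcref{rule:nrb} under the disjoint directive $\direct=\Drb$. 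If $\progof{\pc}=\Islh{\rega}{\pcsuc}$, symmetrically only \labelcref{rule:nslh} (for $\direct=\Dstep$, with the value written to $\rega$ a function of $\sizeof{\nstatec'}$ and $\rasgnof{\rega}$) or \labelcref{rule:nrb} (for $\direct=\Drb$). If $\progof{\pc}=\Iexit$, only \labelcref{rule:nrb} can fire, on $\direct=\Drb$. Otherwise $\progof{\pc}$ is speculation insensitive, so \labelcref{rule:nsfence} and \labelcref{rule:nslh} are excluded by their premises and the candidates are \labelcref{rule:nstep} (on $\direct\in\{\Dstep,\Dif,\Dload{\cdot}{\cdot},\Dstore{\cdot}{\cdot}\}$), \labelcref{rule:nspec} (on $\direct=\Dspec$, and only if $\progof{\pc}$ is a conditional), and \labelcref{rule:nrb} (on $\direct=\Drb$); these have pairwise disjoint directive labels, so at most one applies. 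Its output is unique: for \labelcref{rule:nstep} by the preliminary observation, since the result is $\nstatec'\lseq\stated$; for \labelcref{rule:nspec} because the pushed state $(\pcsuc_{\lnot\bvalue},\rasgn,\masgn)$ and the leakage $\Eif{\lnot\bvalue}$ are functions of $\rasgnof{\rega}$; for \labelcref{rule:nrb} because its premise $\sizeof{\nstatec'}\ge1$ either fails, giving no transition, or holds, giving the result $\nstatec'$.

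I do not expect a genuinely hard step; the work is entirely in the bookkeeping that makes the rules provably mutually exclusive. The points that need care are that \labelcref{rule:nstep} never overlaps with \labelcref{rule:nsfence} or \labelcref{rule:nslh}, thanks to the speculation-(in)sensitivity premises; that \labelcref{rule:nrb} fires on its own directive $\Drb$ independently of the top instruction, so it never competes with another rule on the same directive; and that the superficial nondeterminism of the unsafe load and store rules is fully resolved once the directive --- which names the accessed memory cell --- is fixed.
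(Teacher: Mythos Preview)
Your proposal is correct and is exactly the argument one would write out; the paper states the lemma without proof, treating it as immediate from the rule definitions, so your case analysis is simply an explicit version of what the authors leave implicit. The only cosmetic point is that for $\Iexit$ you could note that \labelcref{rule:nstep} is vacuously excluded because $\Iexit$ has no speculation-free transition, but you already implicitly rely on this.
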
%
\subsubsection*{Notation}
Similar to how we access the current instruction with $\instof{\statec}$,
we write $\ofnota{f}{\statec}$ instead of $\ofnota{f}{\pc}$ when $\statec = (\pc, \rasgn, \masgn)$ for any $f : \pcs \to A$.
Further, we extend not only the $\samepoint$-relation to speculating states,
but any relation $R \subseteq \states \times \states$ is extended to a relation on $\nstates$ in the obvious way:
$\varepsilon \mathrel{R} \varepsilon$ and $\nstatec\lseq\statec \mathrel{R} \nstated\lseq\stated$, if $\nstatec \mathrel{R} \nstated$ and $\statec \mathrel{R} \stated$.

\section{Non-intereference Properties}
\label{section:properties}

We define non-interference and non-interference preservation for our speculative semantics.
We require the initial state's memory to be partitioned into \emph{public} and \emph{sensitive} data
through a security level assignment $\lows : \vars \to \secset$
to a lattice $\secset = (\set{\low, \high}, \low \leq \high)$.
Sensitive data~($\high$) is considered unknown to the attacker,
and we say that initial states are indistinguishable to the attacker,
$(\entry,\rasgn,\masgn)\seceq (\entry,\rasgn,\masgnp)$,
when the memory coincides on all variables considered public. 
That is, for all $\vara \in \vars$ with $\lowsof{\vara} = \low$, $\masgnofflat{\vara} = \masgnpofflat{\vara}$.

Our formulation of speculative non-interference ($\propsni$, \Cref{def:sni}) requires indistinguishable initial states $\statec_1\seceq\statec_2$ to produce equal behavior.
That means that for both initial states
\begin{enumerate*}[label=\textbf{(\roman*)}]
    \item the sequences of executable directives are the same, and
    \item for each sequence of executable directives, the observable leakage is the same.
\end{enumerate*}
Let us explain the necessity for the first condition.
As long as the second condition is satisfied, any two executions $\statec_1 \ntrans{\tracel}{\dtraced} \nstated_1$ and $\statec_2 \ntrans{\tracel}{\dtraced} \nstated_2$
will stay in the same program point $\nstated_1 \samepoint \nstated_2$ (\Cref{lem:programpoint-by-leakage}) and synchronously execute the same instructions.
If at $\nstated_1$ the set of executable directives is different to those in $\nstated_2$,
then the instruction has to be a memory access.
All other instructions have the same set of executable directives, independent of register and memory contents.
This means that one state executes an unsafe memory accesses (\labelcref{rule:load-unsafe} or \labelcref{rule:store-unsafe})
while the other executes a safe memory accesses (\labelcref{rule:load} or \labelcref{rule:store}).
However, unsafe and safe memory accesses both leak the address used.
A difference in executable directive thus amounts to different leakage.
\begin{definition}[$\propsni$]\label{def:sni}
    A program is speculatively non-interferent, $\prog \semmodels \propsni$,
    if all indistinguishable initial states $\nstatec_1 \seceq \nstatec_2$
    have the same behavior $\nbehof{\prog}{\nstatec_1} = \nbehof{\prog}{\nstatec_2}$.
\end{definition}

Our goal is to prove preservation of non-interference for compiler transformations.
We model compiler transformations $\translp$ that map a source program $\prog$ to the transformed target program~$\tgtp$.
Transformations may modify the structure of initial states from source program $\prog$ to target program~$\tgtp$.
For example, a pass that realizes the architecture's calling convention relocates function parameters to specific registers.
We require each pass to come with a relation $\simrel$ on initial states that identifies the initial states of $\tgtp$ with the initial states of $\prog$.
In order to define preservation of non-interference, the relation has to respect $\lows$ in the following sense:%
\footnote{One could also have a second security assignment on the target program, but for simplicity we assume they are the same.}
\begin{definition}\label{def:loweqrespecting}
    A relation ${\simrel} \subseteq \nstatesof{\tgtp} \times \nstatesof{\prog}$ respects $\lows$
    if every initial $\stated \in \nstatesof{\tgtp}$
    is mapped to an initial $\statec \in \nstatesof{\prog}$ with $\stated \simrel \statec$,
    and for all pairs of initial states $\stated_1 \simrel \statec_1$ and $\stated_2 \simrel \statec_2$:
    $\stated_1 \seceq \stated_2$ if and only if $\statec_1 \seceq \statec_2$.
\end{definition}
Speculative non-interference preservation for a transformation $\translp$ asks whether for all source programs $\prog$,
$\prog \semmodels \propsni$ entails $\tgtp \semmodels \propsni$.
However, defining preservation in this way leads to potentially surprising outcomes.
Even if the source program $\prog$ fails to be $\propsni$,
it can have some indistinguishable initial states which produce equal leakage.
One would expect that a speculative non-interference preserving compiler transformation preserves this equal leakage to the target program $\tgtp$.
But the above definition gives no such guarantee: If $\prog$ fails to be $\propsni$, there are no guarantees for $\tgtp$ at all.
To counteract that, our definition of speculative non-interference preservation is more precise \parencite{patrignaniExorcisingSpectresSecure2021}.
It requires preservation of equal leakage for every pair of source program's and target program's initial states individually.
In particular, this definition entails that if $\prog \semmodels \propsni$ then also $\tgtp \semmodels \propsni$.
\begin{definition}[$\propsnip$]\label{def:propsnip}
    A program translation $\translp$ with $\lows$-respecting mapping $\simrel$
    is $\propsni$-preserving, $\translp \semmodels \propsnip$,
    if all initial states $\stated_1 \seceq \stated_2$ of $\tgtp$
    with initial source states $\stated_1 \simrel \statec_1$ and $\stated_2 \simrel \statec_2$
    of equal behavior $\nbehof{\prog}{\statec_1} = \nbehof{\prog}{\statec_2}$
    also have equal target behavior,
    $\nbehof{\translpof{\prog}}{\stated_{1}} = \nbehof{\translpof{\prog}}{\stated_2}$.
\end{definition}

\section{Proving Speculative Non-Interference Preservation}
\label{section:simulation}

We present our proof method for speculative non-interference preservation.
We introduce \emph{snippy simulations} which ensure that a code transformation preserves speculative non-interference:
\begin{theorem}\label{thm:snippysound}
    If for all $\prog$ there is a snippy simulation $(\simrel, \dtffamily)$
    between $\tgtp$ and $\prog$, then $\translp \vDash \propsnip$.
\end{theorem}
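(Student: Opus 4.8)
The plan is to unfold Definition~\ref{def:propsnip} and to transport each target behaviour of $\stated_1$ to a target behaviour of $\stated_2$ (and back) by routing it through the source program, using the snippy simulation as the bridge in both directions. Fix initial target states $\stated_1 \seceq \stated_2$ of $\tra = \translpof{\prog}$ together with initial source states $\stated_1 \simrel \statec_1$ and $\stated_2 \simrel \statec_2$, and assume $\nbehof{\prog}{\statec_1} = \nbehof{\prog}{\statec_2}$. Since $\simrel$ respects $\lows$ (Definition~\ref{def:loweqrespecting}) we also have $\statec_1 \seceq \statec_2$, so in particular both source states are initial and at the same program point; the goal is $\nbehof{\tra}{\stated_1} = \nbehof{\tra}{\stated_2}$, and by symmetry it suffices to prove the inclusion $\nbehof{\tra}{\stated_1} \subseteq \nbehof{\tra}{\stated_2}$.

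\emph{Lifting a target execution to the source.} Take $b \in \nbehof{\tra}{\stated_1}$ and first assume it is terminating, $b = \tbeh{\dtpair{\tracel}{\dtraced}}$, witnessed by $\stated_1 \ntranss{\tracel}{\dtraced} \stated_1'$ with $\stated_1'$ final. I would run the step condition of the snippy simulation along this execution by induction on its length, maintaining the invariant that after $k$ target steps we have reached $\stated_1^{k}$, a source state $\statec_1^{k}$ with $\stated_1^{k} \simrel \statec_1^{k}$, and a source execution $\statec_1 \ntranss{}{} \statec_1^{k}$ whose leakage/directive trace is the image of the length-$k$ target prefix under a directive transformation from the family $\dtffamily$ (the left face of the cube in Figure~\ref{fig:cubes}, right): each step of the target is matched by zero or more steps of the source. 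Using that $\simrel$ sends final target states to final source states, at the end we obtain $\statec_1 \ntranss{\tracek}{\dtracee} \statee_1$ with $\statee_1$ final and with $\dtf$ relating $\dtpair{\tracel}{\dtraced}$ to $\dtpair{\tracek}{\dtracee}$, hence $\tbeh{\dtpair{\tracek}{\dtracee}} \in \nbehof{\prog}{\statec_1}$.

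\emph{Transporting through the source and descending to $\stated_2$.} By the equal-behaviour hypothesis, $\tbeh{\dtpair{\tracek}{\dtracee}} \in \nbehof{\prog}{\statec_2}$, i.e.\ $\statec_2 \ntranss{\tracek}{\dtracee} \statee_2$ with $\statee_2$ final and, crucially, with \emph{the same} source trace. Now I invoke the full cube property of the snippy simulation (Figure~\ref{fig:cubes}, right): from $\stated_2 \simrel \statec_2$, the source execution just obtained, and the $\dtf$-witness that $\dtpair{\tracel}{\dtraced}$ is the image of $\dtpair{\tracek}{\dtracee}$, it yields a target execution $\stated_2 \ntranss{\tracel}{\dtraced} \statef_2$ with \emph{exactly} the target trace $\dtpair{\tracel}{\dtraced}$ and with $\statee_2 \simrel \statef_2$, hence $\statef_2$ final. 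Therefore $b \in \nbehof{\tra}{\stated_2}$; swapping the roles of $1$ and $2$ gives the reverse inclusion, and so $\translp \vDash \propsnip$. Along the way I would use Lemma~\ref{lem:programpoint-by-leakage} to keep the two simulated executions synchronised at the same program point while they carry the common trace, and Lemma~\ref{lemma:directive-det} so that fixing the transformed directives pins down a unique continuation --- i.e.\ ``the same trace'' really does determine the execution on both sides.

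\emph{The diverging case and the main obstacle.} For an infinite behaviour $b = \dbeh{\dtpair{\itrace}{\idtrace}}$, the same step-wise lifting produces an infinite chain of source configurations, but one must argue it is a genuine divergent source execution rather than a finite one followed by infinite target stuttering; this should follow from a well-foundedness/measure side condition of the snippy-simulation definition (no infinite run of target steps with no matching source step), after which the equal-behaviour hypothesis --- which also contains the diverging source traces --- transports to $\statec_2$ and the descent to $\stated_2$ is carried out coinductively using the same cube, read forever. I expect this divergence/productivity argument, together with the bookkeeping needed to assemble the per-step directive transformations into a single trace-level $\dtf$ that is simultaneously \emph{sound} (the resulting source trace is executable) and \emph{complete} (every target execution from $\stated_1$ is captured, not merely some), to be the main obstacle; once those are in place, the terminating case is essentially a diagram chase around the cube.
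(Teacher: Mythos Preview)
Your high-level strategy is right---lift the target run of $\stated_1$ to the source via the simulation, transport along the equal-behaviour assumption to $\statec_2$, and descend to $\stated_2$ using the cube---and you correctly name the ingredients (the snippy cube, Lemma~\ref{lem:programpoint-by-leakage}, Lemma~\ref{lemma:directive-det}). Where your sketch diverges from the paper is in the \emph{granularity} at which you apply the cube and in how you organise the (co-)induction.

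The snippy property (Definition~\ref{def:snippy}) is stated for a \emph{single simulation interval} $(\nstatec_1,\nstated_1)\synctrans{\tracek}{\dtracee}{\tracel}{\dtraced}(\nstatee_1,\nstatef_1)$, not for an entire execution, and its premise is that very interval at index~$1$ together with a matching source run at index~$2$. Your plan---first lift the \emph{whole} target execution to the source by ``running the step condition by induction on its length'', then apply the cube once---does not fit that interface: there is no single $\dtf$-witness at the trace level, and the simulation is not a step-by-step relation but the system of \Cref{rules:simulation}, whose chunks are the intervals of Definition~\ref{def:siminterval}. The paper therefore first invokes Lemma~\ref{lem:behaviourininterval} to rewrite any target behaviour of $\nstated_1$ as a (possibly infinite) concatenation of simulation intervals, and then does (co-)induction \emph{on intervals}: at each step it takes the next interval at index~$1$, uses equal behaviour to replay its source part from $\nstatec_2$, applies snippyness to obtain the same interval at index~$2$, re-establishes $\samepoint$ on both sides via Lemma~\ref{lem:programpoint-by-leakage} and equal behaviour of the residual source states, and recurses. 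This interleaving is what makes your ``bookkeeping'' and ``assemble the per-step directive transformations into a single trace-level $\dtf$'' disappear.

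The same restructuring also dissolves the divergence worry you flag. Lemma~\ref{lem:behaviourininterval} already guarantees that every (terminating or diverging) target behaviour decomposes into non-stuttering intervals (the guarded/unguarded discipline in \Cref{rules:simulation} is precisely the well-foundedness you were looking for), so the coinduction on intervals treats both cases uniformly; there is no separate productivity argument. In short: replace your step-wise lift-then-descend with the paper's interval-wise coinduction, using Lemma~\ref{lem:behaviourininterval} as the decomposition and Definition~\ref{def:snippy} once per interval.
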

In order to reach that goal,
we first define simulations that transform directives to cope with the fact that compilers do not preserve executable sequences of directives.
We then introduce the constraints a snippy simulation needs to additionally satisfy and finally prove \Cref{thm:snippysound}.
This reduces proving a transformation $\propsnip$ to proving that it has a snippy simulation for each program~$\prog$.
In \Cref{section:deadcode}, we show how to craft a snippy simulation
that is parametric in $\prog$, reducing proof effort to a once-and-for-all proof.

\subsection{Simulation with Directive Transformation}

%
The new feature in our work is \emph{directive transformation}.
Conceptually, a simulation between the target program $\tgtp$ and the source program $\prog$ shall replay any execution of $\tgtp$ in $\prog$.
A directive sequence $\dtraced$ selects a single execution in $\tgtp$ (\Cref{lemma:directive-det}).
Our simulation wants to select a corresponding execution in $\prog$.
However, the directives $\dtraced$ may not be executable in $\prog$,
or it might select an inappropriate execution.
Instead, a different sequence of directives may be necessary on the source program,
since transformations $\translp$ are not designed to preserve them.
\begin{code}[t]
    \begin{doublecodebox}{}
        \ttfamily
        \tcbsubtitle[height=0pt,after skip=0pt]{\raggedleft Source $\prog$\phantom{aa}}
        \begin{lstlisting}
if (i < buf_size)
    a = buf[i];
a = 0;
ret;
    \end{lstlisting}
\tcblower%
    \ttfamily
    \tcbsubtitle[height=0pt,after skip=0pt]{\raggedleft Target $\tgtp$}
    \begin{coderight}
if (i < buf_size)
    nop;
a = 0;
ret;
    \end{coderight}
\end{doublecodebox}
\caption{\label{code:dce}Example code transformation from dead code elimination.}
\end{code}
\begin{example}\label{example:dtf}
    We accompany our formal development with the example transformation in \Cref{code:dce},
    where an unnecessary $\IWload{\makereg{a}}{\makevar{buf}}{\makereg{i}}$ instruction is replaced by a $\IWnop$.
    Consider an initial target state $\stated = (\makepc{a},\rasgn,\masgn)$
    where $\rasgnof{\makereg{i}} \notin \sizeof{\makevar{buf}}$,
    and the source state $\statec = (\makepc{1}, \rasgn, \masgn)$.
    The directives $\Dspec\lseq\Dstep\lseq\Dstep$ are executable from $\stated$.
    But the same sequence cannot be executed from $\statec$:
    An unsafe load necessitates a directive $\Dload{\colvar{\vara}}{\adrp}$ for any $\colvar{\vara}$ and $\adrp$.
    Thus, a transformed sequence of directives $\Dspec\lseq\Dload{\colvar{\vara}}{\adrp}\lseq\Dstep$ is executed.
\end{example}

A simulation with directive transformation ($\dtsim$) is a relation on states $\nstated \simrel \nstatec$
where a target state $\nstated \in \nstatesof{\tgtp}$ is related to a source state $\nstatec \in \nstatesof{\prog}$.
The directive transformation is a family of relations
${\dtfat{\nstatec}{\nstated}} \subseteq \directs^* \times \directs^*$.
We characterize $\dtsim$ in \Cref{rules:simulation}.
Consider any states $\nstated \in \nstatesof{\tgtp}$ and $\nstatec \in \nstatesof{\prog}$ with $\nstated \simrel \nstatec$.
To express that $\nstated$ is simulated by $\nstatec$ means the following:
Either both states are final (\labelcref{rule:final}),
or we have to explore all sequences of executable directives $\nstated \ntranss{\tracel}{\dtraced} \nstatef$ in $\tgtp$ up to some bound (\labelcref{rule:tgt}).
For each explored sequence of directives we apply the directive transformation $\dtfat{\nstatec}{\nstated}$.
Then, we need to replay the execution with a sequence of executable directives $\nstatec \ntranss{\tracek}{\dtracee} \nstatee$ in $\prog$ (\labelcref{rule:src}),
so that $\nstatef \simrel \nstatee$.
Formally, we write $\proofnode{\sderivetgtsim{\simrel}{\dtf}{\nstated}{\nstatec}{\dtraced}}$
to express that we are exploring executions in~$\tgtp$,
have already seen a sequence of directives $\dtraced$ and arrived at target state $\nstated$.
We can now either bound the exploration with \labelcref{rule:direct-tf},
or continue exploration via \labelcref{rule:tgt}.
With \labelcref{rule:direct-tf} we look up a directive transformation for the explored $\dtraced$
and swap to $\proofnode{\sderivesrcsim{\simrel}{\dtf}{\nstatef}{\nstatec}{\dtracee}}$.
This states that we are seeking to replay the explored sequence with its transformation $\dtracee$ from $\nstatec$.
If $\dtracee$ is executable~$\nstatec \ntranss{\tracek}{\dtracee} \nstatee$ in~$\prog$,
\labelcref{rule:coind} checks that the states reached from exploration in $\tgtp$ and replay in $\prog$ are again related,~$\nstatef \simrel \nstatee$.
The notions of $\guarded{\simrel}$ and $\unguarded{\simrel}$ make sure that both
exploration in $\tgtp$ and replaying in $\prog$ take at least one execution step.
The guarded version $\guarded{\simrel}$ requires at least one application of \labelcref{rule:tgt} or \labelcref{rule:src} to become the unguarded version $\unguarded{\simrel}$.
Only then, \labelcref{rule:direct-tf,rule:coind} become applicable.
We write $\eitherguarded{\simrel}$ for any of $\guarded{\simrel}$ or $\unguarded{\simrel}$.
\begin{definition}[$\dtsim$]\label{def:simulation}
    A simulation with directive transformation $(\simrel, \dtffamily)$ consists
    of a relation ${\simrel} \subseteq \nstatesof{\tgtp} \times \nstatesof{\prog}$
    and a family $\dtffamily = (\dtfat{\nstatec}{\nstated})_{(\nstated, \nstatec) \in \simrel}$
    so that for all initial $\stated \in \nstatesof{\tgtp}$,
    there is an initial $\statec \in \nstatesof{\prog}$ with $\stated \simrel \statec$,
    and for all $\nstated \simrel \nstatec$,
    $\proofnode{\sderivetgtsim{\guarded{\simrel}}{\dtfat{\nstatec}{\nstated}}{\nstated}{\nstatec}{\varepsilon}}$ can be proven in \Cref{rules:simulation}.%
    \footnote{Our way to define simulations is inspired by recent work to unify stuttering \cite{choStutteringFree2023}.}
\end{definition}
\begin{figure}
    \begin{ruleframes}
        \begin{ruleframe}[label=rules:simulation]{Characterizing Simulations}
            \definerule{tgt}
            {\forall\nstated \ntrans{\leak}{\direct} \nstatef.\,\sderivetgtsim{\unguarded{\simrel}}{\dtf}{\nstatef}{\nstatec}{\dtraced \lseq \direct}{}
                \\
            \text{$\nstated$ not final}}
            {\sderivetgtsim{\eitherguarded{\simrel}}{\dtf}{\nstated}{\nstatec}{\dtraced}}
            {\label{rule:tgt}}

            \definerule{direct-tf}
            {\sderivesrcsim{\guarded{\simrel}}{\dtf}{\nstated}{\nstatec}{\dtracee} \\ \dtracee\dtf\dtraced}
            {\sderivetgtsim{\unguarded{\simrel}}{\dtf}{\nstated}{\nstatec}{\dtraced}}
            {\label{rule:direct-tf}}

            \definerule{src}
            {\exists\nstatec \ntrans{\leak}{\direct} \nstatee.\,\sderivesrcsim{\unguarded{\simrel}}{\dtf}{\nstated}{\nstatee}{\dtracee}{}}
            {\sderivesrcsim{\eitherguarded{\simrel}}{\dtf}{\nstated}{\nstatec}{\direct \lseq \dtracee}}
            {\label{rule:src}}

            \definerule{coind}
            {\nstated \simrel \nstatec}
            {\sderivesrcsim{\unguarded{\simrel}}{\dtf}{\nstated}{\nstatec}{\varepsilon}}
            {\label{rule:coind}}

            \definerule{final}
            {\text{$\nstated, \nstatec$ final}}
            {\sderivetgtsim{\guarded{\simrel}}{\dtf}{\nstated}{\nstatec}{\varepsilon}}
            {\label{rule:final}}
        \end{ruleframe}
        \begin{ruleframe}[label=rules:synctrans]{Simulation Intervals}
            \definerule{sync}
            {
                \nstated \ntranss{\tracel}{\dtraced} \nstatef \\
                \nstated \simrel \nstatec \\
                \nstatef \simrel \nstatee \\
                \nstatec \ntranss{\tracek}{\dtracee} \nstatee
                \\\\
                \proofnode{{\nstated}\tgtsim{\nstatec}}
                \rproofedges{\dtraced}
                \proofnode{{\nstatef}\tgtsim{\nstatec}}
                \rproofedges{\dtracee \dtfspaced \dtraced}
                \proofnode{{\nstatef}\srcsim{\nstatec}}
                \rproofedges{\dtracee}
                \proofnode{{\nstatef}\srcsim{\nstatee}}
            }
            {
                \nstatec \simtranss{\tracek}{\dtracee} \nstatee\\
                (\nstatec, \nstated) \synctrans{\tracek}{\dtracee}{\tracel}{\dtraced} (\nstatee, \nstatef) \\
                \nstated \simtranst{\tracel}{\dtraced} \nstatef
            }
            {\label{rule:sync}}
        \end{ruleframe}
    \end{ruleframes}
\end{figure}

\begin{example}\label{example:simulation}
    Consider once again the transformation in \Cref{code:dce} and the initial states $\stated$ and $\statec$ from \Cref{example:dtf}.
    Further, let $\statee = (\makepc{4},\rasgnp,\masgn)$ and $\statef = (\makepc{d},\rasgnp,\masgn)$ with $\rasgnp = \subst{\rasgn}{\makereg{a}}{0}$.
    We want to prove that $\stated \simrel \statec$ is justified,
    i.e.\ we need to construct ${\dtfat{\statec}{\stated}}$ so that $\proofnode{\sderivetgtsim{\guarded{\simrel}}{\dtfat{\statec}{\stated}}{\stated}{\statec}{\varepsilon}}$ is derivable.
    We drop the subscript and just write $\dtf$.
    Exploration via \labelcref{rule:tgt} yields (among others) two sequences of directives executable from $\stated$ in $\tgtp$:
    $\Dspec\lseq\Dstep\lseq\Dstep$ as in \Cref{example:dtf} and~$\Dif$.
    The corresponding execution takes us to $\stated \ntranss{\Eif{\bfalse}\lseq\Enone\lseq\Enone}{\Dspec\lseq\Dstep\lseq\Dstep} \stated\lseq\statef$
    and $\stated \ntrans{\Eif{\bfalse}}{\Dif} \statef$, respectively.
    After exploration with \labelcref{rule:tgt}, we are thus left to prove $\proofnode{\sderivetgtsim{\unguarded{\simrel}}{\dtf}{\stated\lseq\statef}{\statec}{\Dspec\lseq\Dstep\lseq\Dstep}}$
    and $\proofnode{\sderivetgtsim{\unguarded{\simrel}}{\dtf}{\statef}{\statec}{\Dif}}$.
    For the first sequence, we transform the directives as in \Cref{example:dtf}:
    $\Dspec\lseq\Dload{\makevar{sec}}{0}\lseq\Dstep \dtf \Dspec\lseq\Dstep\lseq\Dstep$.
    For the other case, we do not need a transformation, so $\Dif \dtf \Dif$.
    With \labelcref{rule:direct-tf}, we are left with deriving $\proofnode{\sderivesrcsim{\guarded{\simrel}}{\dtf}{\stated\lseq\statef}{\statec}{\Dspec\lseq\Dload{\makevar{sec}}{0}\lseq\Dstep}}$
    and $\proofnode{\sderivesrcsim{\guarded{\simrel}}{\dtf}{\statef}{\statec}{\Dif}}$.
    Indeed, $\prog$ can replay the directives with $\statec \ntranss{\Eif{\bfalse}\lseq\Eload{\adr}\lseq\Enone}{\Dspec\lseq\Dload{\makevar{sec}}{0}\lseq\Dstep} \statec\lseq\statee$ using \labelcref{rule:src},
    where $\adr = \rasgnof{\makereg{i}}$
    and $\statec \ntrans{\Eif{\bfalse}}{\Dif} \statee$, respectively.
    To now utilize \labelcref{rule:coind} we need $\statef \simrel \statee$ and $\stated\lseq\statef \simrel \statec\lseq\statee$.
    Each would again have to be justified independently.
    Justifying the first is easy with \labelcref{rule:final},
    while the other needs another application of \labelcref{rule:tgt}, \labelcref{rule:direct-tf}, and \labelcref{rule:src}
    and can then utilize $\stated \simrel \statec$.
\end{example}

\subsubsection*{Simulation Intervals}
Our goal is to formulate snippy simulations as a constraint on simulations with directive transformation.
We define it with simulation intervals.
A simulation interval for states $\nstated \simrel \nstatec$ is a pair of an explored sequence of directives from $\nstated$ in $\tgtp$
and the corresponding replay from $\nstatec$ in $\prog$.
Formally, we define simulation intervals through a synchronized product,
whose transitions are the simulation intervals.
The states of the synchronized product are of shape $(\nstatec,\nstated)$ so that $\nstated \simrel \nstatec$.
Its transitions are of the form $(\nstatec, \nstated) \synctrans{\tracek}{\dtracee}{\tracel}{\dtraced} (\nstatee, \nstatef)$,
where $\dtracee$ is the directive transformation of an explored $\dtraced$ in $\tgtp$.
In order to formally define the transition relation, consider a proof tree that justifies $\nstated \simrel \nstatec$,
i.e.\ that derives $\proofnode{\sderivetgtsim{\guarded{\simrel}}{\dtfat{\nstatec}{\nstated}}{\nstated}{\nstatec}{\varepsilon}}$.
We use the notation 
\[
\proofnode{{\nstated}\tgtsim{\nstatec}}
\rproofedges{\dtraced}
\proofnode{{\nstatef}\tgtsim{\nstatec}}
\rproofedges{\dtracee \dtfspaced \dtraced}
\proofnode{{\nstatef}\srcsim{\nstatec}}
\rproofedges{\dtracee}
\proofnode{{\nstatef}\srcsim{\nstatee}}
\]
to state that the proof tree contains the nodes
$\proofnode{{\sderivetgtsim{\unguarded{\simrel}}{\dtfat{\nstated}{\nstatec}}{\nstatef}{\nstatec}{\dtraced}}}$,
$\proofnode{{\sderivesrcsim{\guarded{\simrel}}{\dtfat{\nstated}{\nstatec}}{\nstatef}{\nstatec}{\dtracee}}}$,
and $\proofnode{{\sderivesrcsim{\unguarded{\simrel}}{\dtfat{\nstated}{\nstatec}}{\nstatef}{\nstatee}{\varepsilon}}}$ on one path.
In particular, this means that $\nstated \ntranss{\tracel}{\dtraced} \nstatef$ and $\nstatec \ntranss{\tracek}{\dtracee} \nstatee$ for appropriate $\tracel$, $\tracek$,
and $\dtracee \dtfat{\nstatec}{\nstated} \dtraced$.
We define a synchronized transition relation that executes both in a single step.

\begin{definition}\label{def:siminterval}
    Given a simulation $(\simrel, \dtffamily)$,
    its simulation interval transition is defined by \Cref{rule:sync}.
    \begin{align*}
        {\synctrans{\,}{}{}{\,}} &\quad\subseteq\quad {\simrel} \times \directs^* \times \leaks^* \times \directs^* \times \leaks^* \times {\simrel}
    \end{align*}
\end{definition}

\labelcref{rule:sync} further defines the transition relations $\simtranss{}{}$ and $\simtranst{}{}$ as the projection of simulation intervals to source and target program.
Transitive closures of the transition relations are defined as usual.
We say that a simulation is lock-step if simulation intervals are single step: ${\simtranst{}{}}, {\simtranss{}{}} \subseteq {\ntrans{}{}}$.
\begin{example}\label{example:siminterval}
    The simulation intervals resulting from \Cref{example:simulation} for $\stated \simrel \statec$ are:
    \begin{align*}
        (\statec, \stated) & \synctransdisp{\Eif{\bfalse}\lseq\Eload{\adr}\lseq\Enone}{\Dspec\lseq\Dload{\makevar{secret}}{0}\lseq\Dstep}{\Eif{\bfalse}\lseq\Enone\lseq\Enone}{\Dspec\lseq\Dstep\lseq\Dstep} (\statec\lseq\statee, \stated\lseq\statef)
                           &
        (\statec, \stated) & \synctransdisp{\Eif{\bfalse}}{\Dif}{\Eif{\bfalse}}{\Dif} (\statee, \statef)
    \end{align*}
\end{example}

The following lemma states that our formulation of simulations is sound.
That is, we find all of $\tgtp$'s behavior in the projection of the simulation interval transition relation $\simtranst{}{}$.
This lets us perform (co-)induction on $\nbehof{\tgtp} \nstatec$ with $\simtranst{}{}$ rather than $\ntrans{}{}$, which we will utilize in our proof of \Cref{thm:snippysound}.
The same is not true for the source program's behavior and $\simtranss{}{}$.
\begin{lemma}\label{lem:behaviorininterval}
    If $\nstated$ occurs in $\simrel$,
    $\nbehof{\translpof{\prog}}{\nstated} = \setcond{\tbeh{\dtpair{\tracel}{\dtraced}}}{\nstated \simtransts{\tracel}{\dtraced} \nstateh, \textrm{$\nstateh$ final}} \cup
    \setcond{\dbeh{\dtpair{\itrace}{\idtrace}}}{\nstated \simtransti{\itrace}{\idtrace}}$.
\end{lemma}

\subsection{Snippy simulations}

\begin{wrapstuff}[type=figure,width=0.4\linewidth]
        \tikzset{
                prewhite/.style = {
                    preaction={draw,line width=3.5pt,white}
                },
                samepoint/.style args={#1}{
                    draw,line width=0.7pt,#1,
                    preaction={draw,line width=3pt,white,
                        preaction={draw,line width=4.4pt,#1}
                    }
                },
                equality/.style = { double, double distance=1.8pt },
                simrel/.style = { line width=1pt, postaction={decorate},
                    decoration = { markings, mark=at position 0.4 with {
                            \node[fill=white] {$\simrelr$};
                    }},
                },
                leaktf/.style = { line width=1pt, postaction={decorate},
                    decoration = { markings, mark=at position 0.35 with {
                            \node[fill=white] {$\dtf$};
                    }},
                },
                ntranscube/.style = {->,line width=1pt,dashed,dash pattern=on 3pt off 3pt},
                given/.style = {},
                new1/.style = { mauve },
                new2/.style = { NavyBlue },
                new3/.style = { YellowGreen },
                new4/.style = { BurntOrange },
                result/.style = { lightgray },
                ignoreheight/.style = { text depth=0, text height=1ex }
            }
        \scalebox{0.9}[0.9]{
            \begin{tikzpicture}[scale=1]
                \tikzmath{\disthorizontal = 3; \distvertical = 2.5; }

                \node (std1) {$\nstated_1$};
                \node[above right=of std1] (std2) {$\nstated_2$};
                \node[left=\disthorizontal of std1] (stc1) {$\nstatec_1$};
                \node[left=\disthorizontal of std2] (stc2) {$\nstatec_2$};
                \node[below=\distvertical of std1] (stf1) {$\nstatef_1$};
                \node[below=\distvertical of std2,new1] (stf2) {$\nstatef_2$};
                \node[below=\distvertical of stc1] (ste1) {$\nstatee_1$};
                \node[below=\distvertical of stc2] (ste2) {$\nstatee_2$};

                \draw[samepoint] (std1) -- (std2);
                \draw[samepoint] (stc1) -- (stc2);
                \draw[samepoint=result] (stf1) -- (stf2);
                \draw[samepoint=result] (ste1) -- (ste2);

                \draw[simrel] (stc2) -- (std2);
                \draw[simrel] (stf1) -- (ste1);
                \draw[simrel,new1] (ste2) -- (stf2);

                \draw[ntranscube] (stc1) -- node[right,pos=0.7] (v1) {$\dtpair{\tracek}{\dtracee}$} (ste1);
                \draw[ntranscube] (stc2) -- node[right,pos=0.7] (v2) {$\dtpair{\tracek}{\dtracee}$} (ste2);

                \draw[prewhite,simrel] (std1) -- (stc1);

                \draw[ntranscube,new1] (std2) -- node[left,pos=0.7,ignoreheight] (u2) {$\dtpair{\tracel}{\dtraced}$} (stf2);
                \draw[leaktf,new1] (v2) -- (u2);
                \draw[ntranscube,prewhite] (std1) -- node[left,pos=0.7,ignoreheight] (u1) {$\dtpair{\tracel}{\dtraced}$} (stf1);
                \draw[leaktf,prewhite] (u1) -- (v1);
            \end{tikzpicture}
        }
    \caption{\label{fig:snippysim} Whenever the black conditions are met,
        a snippy simulation $\simrel$ also explores the \textcolor{mauve}{purple} execution
        and simulates it by the source execution.
        \textcolor{lightgray}{Gray} conditions follow from the semantics.
    }
\end{wrapstuff}

So far, simulations are very liberal:
Simulation merely require that a sequence of directives in $\tgtp$ can be transformed via $\dtffamily$ into a sequence of directives in $\prog$.
The length and contained directives can change when applying $\dtffamily$ and there are no restrictions on how the leakage changes when applying~$\dtffamily$.
In this section, we establish \emph{snippy simulations},
a constant-time cube constraint \parencite{bartheSecureCompilationSideChannel2018} on simulations for speculative semantics
that entails $\propsnip$ when satisfied.

The intuition for snippy simulations can be explained as follows.
In order to prove that a transformation $\translp$ satisfies $\propsnip$,
we are given a source program $\prog$ and the target program $\tgtp$,
as well as four initial states:
Two target initial states $\stated_1 \samepoint \stated_2$
and two simulating source states $\stated_1 \simrel \statec_1$ and $\stated_2 \simrel \statec_2$,
so that $\nbehof{\prog}{\statec_1} = \nbehof{\prog}{\statec_2}$.
The goal is to prove that the equality of behavior carries over to the target program.
Given a simulation $(\simrel, \dtffamily)$ we so far know how to replay any sequence of directives $\dtraced$ from $\stated_1$ transformed on $\statec_1$ (\Cref{lem:behaviorininterval}).
Consider a simulation interval $(\statec_1, \stated_1) \synctrans{\tracek}{\dtracee}{\tracel}{\dtraced} (\nstatee_1, \nstatef_1)$.
Due to same behavior of $\statec_1$ and $\statec_2$,
the source directives can also be executed from $\statec_2$,
$\statec_2 \ntrans{\tracek}{\dtracee} \nstatee_2$.
Snippy simulations now state that, in such a situation,
the simulation interval for $\statec_2$ and $\stated_2$
also contains $(\statec_2, \stated_2) \synctrans{\tracek}{\dtracee}{\tracel}{\dtraced} (\nstatee_2, \nstatef_2)$.
That means, $\stated_2$ can also execute $\dtraced$ and produce the same leakage.
And further, $(\simrel, \dtffamily)$ does also explore $\dtraced$ from $\stated_2$, not a longer or shorter sequence.
\Cref{fig:snippysim} demonstrates the constraint in the general case, where states need not be initial.
The simulation interval of $\nstatec_1$ and $\nstated_1$ and the ability for another source state $\nstatec_2$ to mimic the behavior are in black.
The constraint is in purple: $(\simrel, \dtffamily)$ has to also provide the same simulation interval for any other state $\nstated_2 \simrel \nstatec_2$ at the same program point as $\nstated_1$.

With snippy simulations defined, we conclude the section with the proof of \Cref{thm:snippysound}.

\begin{definition}\label{def:snippy}
    A snippy simulation $(\simrel, \dtffamily)$ is $\lows$-respecting and satisfies the diagram in \Cref{fig:snippysim}.
    That is, for all $\nstatec_1 \samepoint \nstatec_2$ and $\nstated_1 \samepoint \nstated_2$
    with $\nstated_1 \simrel \nstatec_1$, $\nstated_2 \simrel \nstatec_2$,
    and $(\nstatec_1, \nstated_1) \synctrans{\tracek}{\dtracee}{\tracel}{\dtraced} (\nstatee_1, \nstatef_1)$,
    \begin{align*}
        \nstatec_2 &\ntranss{\tracek}{\dtracee} \nstatee_2
                   &&\text{implies the existence of $\nstatef_2$ with} 
                   &
        (\nstatec_2, \nstated_2) &\synctrans{\tracek}{\dtracee}{\tracel}{\dtraced} (\nstatee_2, \nstatef_2) \,.
    \end{align*}
\end{definition}

\begin{proof}[Proof of \Cref{thm:snippysound}]
    Consider a program $\prog$ and a snippy simulation $(\simrel, \dtffamily)$.
    We need to prove the following:
    For all initial $\stated_1 \seceq \stated_2$ with $\stated_1 \simrel \statec_1$ and $\stated_2 \simrel \statec_2$:
    When $\nbehof{\tgtp}{\nstatec_1} = \nbehof{\tgtp}{\statec_2}$,
    then also $\nbehof{\tgtp}{\stated_1} = \nbehof{\tgtp}{\stated_2}$.
    We claim a stronger statement:
    Whenever $\nstated_1 \simrel \nstatec_1$, $\nstated_2 \simrel \nstatec_2$,
    $\nstatec_1 \samepoint \nstatec_2$, and $\nstated_1 \samepoint \nstated_2$,
    and $\nbehof{\prog}{\nstatec_1} = \nbehof{\prog}{\nstatec_2}$:
    Then $\nbehof{\tgtp}{\nstated_1} \subseteq \nbehof{\tgtp}{\nstated_2}$ holds.

    This is sufficient:
    Consider initial target states $\stated_1 \seceq \stated_2$ as well as source states $\stated_1 \simrel \statec_1$ and $\stated_2 \simrel \statec_2$
    with $\nbehof{\prog}{\statec_1} = \nbehof{\prog}{\statec_2}$.
    Initial states are all at the same program point, so the requirements of the claim are satisfied
    and $\nbehof{\tgtp}{\stated_1} \subseteq \nbehof{\tgtp}{\stated_2}$ holds.
    By symmetry, $\nbehof{\tgtp}{\stated_1} = \nbehof{\tgtp}{\stated_2}$.

    We prove our claim coinductively on $\nbehof{\tgtp}{\nstated_1}$ split into simulation intervals (\Cref{lem:behaviorininterval}).
    The case of a final $\nstated_1$, i.e.\ $\nbehof{\tgtp}{\nstated_1} = \set{\tbeh{\dtpair{\varepsilon}{\varepsilon}}}$,
    $\nstated_2 \samepoint \nstated_1$ is final, too,
    and thus $\tbeh{\dtpair{\varepsilon}{\varepsilon}} \in \nbehof{\tgtp}{\nstated_2}$.
    In the (co-)inductive case, let
    $\tbeh{\dtpair{\tracek\lseq\tracel}{\dtracee\lseq\dtraced}} \in \nbehof{\tgtp}{\nstated_1}$
    with $\nstated_1 \simtranst{\tracek}{\dtracee} \nstatef_1$
    from a simulation interval $(\nstatec_1, \nstated_1) \synctrans{\tracem}{\dtracef}{\tracek}{\dtracee} (\nstatee_1, \nstatef_1)$.
    From $\nbehof{\prog}{\nstatec_1} = \nbehof{\prog}{\nstatec_2}$ follows $\nstatec_2 \ntranss{\tracem}{\dtracef} \nstatee_2$.
    \Cref{lem:programpoint-by-leakage} gives $\nstatee_1 \samepoint \nstatee_2$
    and same behavior of $\nstatec_1$ and $\nstatec_2$ entails $\nbehof{\prog}{\nstatee_1} = \nbehof{\prog}{\nstatee_2}$.
    %
    %
    Snippyness then yields the simulation interval $(\nstatec_2, \nstated_2) \synctrans{\tracem}{\dtracef}{\tracek}{\dtracee} (\nstatee_2, \nstatef_2)$,
    i.e.\ $\nstated_2 \ntranss{\tracek}{\dtracee} \nstatef_2$.
    \Cref{lem:programpoint-by-leakage} gives $\nstatef_1 \samepoint \nstatef_2$.
    We can now apply \makebox{(co-)induction} hypothesis for $\nbehof{\tgtp}{\nstatef_1} \subseteq \nbehof{\tgtp}{\nstatef_2}$,
    which implies $\tbeh{\dtpair{\tracek\lseq\tracel}{\dtracee\lseq\dtraced}} \in \nbehof{\tgtp}{\nstatef_2}$.
    Together with $\nstated_2 \ntranss{\tracek}{\dtracee} \nstatef_2$, we arrive at $\tbeh{\dtpair{\tracek\lseq\tracel}{\dtracee\lseq\dtraced}} \in \nbehof{\tgtp}{\nstated_2}$.
\end{proof}

\section{Case Study: Dead Code Elimination}
\label{section:deadcode}

In this section, we prove that dead code elimination $\tdconly$ satisfies $\propsnip$ as a demonstration of the proof method and a warm-up for the next section,
where we tackle register allocation transformations.
We first give a short rundown on the transformation of $\tdconly$ before crafting a snippy simulation that is parametric in the source program $\prog$.
Dead code elimination is responsible for removing instructions from $\prog$ whose computed values are not utilized anywhere.
It is the result of a data flow analysis that finds removable instructions.
Typically, the analysis follows constant propagation in order to identify as many aliasing memory accesses as possible.
To support that, we assume that load and store instructions $\IWloadconst{\rega}{\vara}{\adr}$ and $\IWstoreconst{\vara}{\adr}{\regc}$ can hold a constant address $\adr \in \adrs$
instead of a register,
where we require $\adr \in \sizeof{\vara}$.
For the remaining section, fix an arbitrary program $\prog$ with entry point $\entry$.

\subsubsection*{Flow analysis}
The static analysis for dead code elimination is a Liveness analysis.
Since we will later define our own flow analysis when fixing the weaknesses in register allocation, we recall flow analysis in a more general setting.
A flow analysis searches for a fixed-point solution to a system of flow inequalities in order to obtain approximate knowledge about all executions of a program.
Formally, a \emph{forward/backward flow analysis} finds a solution $\flow$ to the inequalities~\labelcref{eqn:flow-fwd}/\labelcref{eqn:flow-bwd},
where $\pc$ and $\pcsuc$ range over program counters so that $\pc$ is a predecessor to $\pcsuc$.
\begin{subequations}\label{eqn:flow}
\[
    \begin{minipage}{.5\displaywidth}
        \abovedisplayskip=0pt
        \belowdisplayskip=0pt
        \begin{equation}
            \begin{split}
                \flowof{\pcsuc} &\geq \transferof{\pc}{(\flowof{\pc})} \\
                \flowof{\entry} &\geq \flowinit
            \end{split}\label{eqn:flow-fwd}\tag{fwd}
        \end{equation}
    \end{minipage}
    \begin{minipage}{.5\displaywidth}
        \abovedisplayskip=0pt
        \belowdisplayskip=0pt
        \begin{equation}
            \begin{split}
                \flowof{\pc} &\geq \transferof{\pcsuc}{(\flowof{\pcsuc})} \\
                \flowof{\pc} &\geq \flowinit \qquad\qquad \progof{\pc} = \Iexit
            \end{split}\label{eqn:flow-bwd}\tag{bwd}
        \end{equation}
    \end{minipage}
\]
\end{subequations}
Flow values stem from a semi-lattice $(\lattice, \leq)$ and a solution $\flow : \pcs \to \lattice$ finds a flow value for each program point.
In case of a forward analysis, $\flowof{\pc}$ denotes the flow value at $\pc$ before the execution of $\progof{\pc}$.
For a backward analysis $\flowof{\pc}$ denotes the flow value at $\pc$ after the execution of $\progof{\pc}$.
The flow value $\flowinit$ is the initial flow value of entry / exit points of the program.
The functions $\transferat{\pc} : \lattice \to \lattice$ are monotonic and constitute the \emph{transfer} of the flow values along instructions.
%

A flow analysis can have \emph{additional constraints}.
Additional constraints are of shape $\flowof{\pc} \leq \latelem$.
They additionally require flow values of certain program counters $\pc$ not to exceed a bound $\latelem \in \lattice$.
If the least solution to the flow analysis does not satisfy the additional constraints, no solution does.

\subsubsection*{Dead Code Elimination}
Liveness analysis is a backward flow analysis.
The flow values are the sets of registers and memory locations that are live at any given program point in that their current value could be used later.
The flow lattice is $\lattice = \powerset{\regs \cup \mems}$,
and the initial flow value at any exit point is $\flowinit = \regs\cup\mems$,
but can be different dependent on calling conventions.
The transfer functions are folklore,
so we instead formulate the guarantee that comes with a solution.
Note that the guarantee holds for speculative semantics, too, because the analysis is branch-independent.
\begin{proposition}\label{lem:liveness}
    Whenever $\statec \ntranss{\tracel}{\dtraced} \nstatee\lseq\statee$ in $\prog$,
    if $\progof{\statee}$ uses a register $\regb$, then $\regb \in \transferof{\statee}{(\flowof{\statee})}$,
    and if $\progof{\statee}$ loads a memory location $\vara$ with offset $\adr$, then $(\vara, \adr) \in \transferof{\statee}{(\flowof{\statee})}$.
\end{proposition}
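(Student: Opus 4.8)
The plan is to reduce \Cref{lem:liveness} to the defining shape of the (folklore) Liveness transfer functions and then to justify the remark that the guarantee survives speculation. First I would recall that for Liveness the transfer along a program counter $\pc$ has the form $\transferof{\pc}{L} = (L \setminus D_{\pc}) \cup U_{\pc}$, where $D_{\pc}$ is the set of registers/locations \emph{defined} by $\progof{\pc}$ and $U_{\pc}$ is the set it may \emph{read}: the register operands of $\progof{\pc}$, plus for a constant-indexed load $\IWloadconst{\rega}{\vara}{\adr}$ the location $(\vara,\adr)$, and for a register-indexed load $\IWload{\rega}{\vara}{\regb}$ the register $\regb$ together with every location of $\vara$ (and, to also cover a speculatively out-of-bounds read redirected by a directive $\Dload{\varb}{\adrp}$ via \labelcref{rule:load-unsafe}, every location of $\mems$). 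The point I would extract is purely syntactic: $U_{\pc} \subseteq \transferof{\pc}{L}$ for \emph{every} flow value $L$, in particular for $L = \flowof{\pc}$; this uses neither that $\flow$ is a fixed point nor that $\pc$ is reachable.

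Next I would argue why the hypothesis $\statec \ntranss{\tracel}{\dtraced} \nstatee\lseq\statee$ suffices, i.e.\ that $\statee$ sits at a $\pc \in \pcsof{\prog}$ to which the backward flow inequalities \labelcref{eqn:flow-bwd} apply. The only danger is that a speculative run reaches $\pc$ through an edge absent from $\prog$'s control-flow graph; but it does not. Rule \labelcref{rule:nstep} follows the speculation-free edges, and the one rule that could add an edge, \labelcref{rule:nspec}, still moves to $\pcsuc_{\lnot\bvalue}$, which is one of the two successors named in the branch at $\pc$. Hence every program counter occurring along any execution --- speculative or not --- lies on a genuine path from $\entry$, so $\flow$ accounts for it; and because Liveness is branch-independent (its transfer never inspects the branched-on value), the guarantee attached to the solution is identical for the two semantics. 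This is the only step that requires genuine care, and it is exactly the remark following the proposition; everything else is bookkeeping.

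Finally I would assemble the two observations. Writing $\statee = (\pc, \rasgn, \masgn)$, so $\progof{\statee} = \progof{\pc}$: if $\progof{\statee}$ uses a register $\regb$ then $\regb \in U_{\pc} \subseteq \transferof{\pc}{(\flowof{\pc})}$; if the step taken at $\statee$ loads the location $\vara$ at offset $\adr$, then either $\progof{\pc} = \IWloadconst{\rega}{\vara}{\adr}$ and $(\vara,\adr) \in U_{\pc}$ outright, or $\progof{\pc} = \IWload{\rega}{\vara}{\regb}$ with $\adr = \rasgnof{\regb}$, so $(\vara,\adr)$ lies among the locations of $\vara$ (or of $\mems$, if the access was redirected) that $U_{\pc}$ conservatively contains. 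In all cases $(\vara,\adr) \in \transferof{\pc}{(\flowof{\pc})}$, which is the claim. I expect no real obstacle beyond the speculation/branch-independence point above; the one modelling decision worth flagging is how conservative to make $U_{\pc}$ for register-indexed loads (all of $\vara$ versus all of $\mems$), which is dictated by whether the subsequent Dead-Code-Elimination simulation must also relate the data read by speculatively out-of-bounds loads --- a full induction on the execution is not needed.
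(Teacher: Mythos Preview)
The paper does not actually prove this proposition: it is offered as the folklore guarantee attached to a Liveness solution, with only the one-line remark that branch-independence makes it carry over to the speculative semantics. Your proposal is therefore not comparable to a paper proof, but it is correct and, if anything, over-engineered.

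Your first paragraph already finishes the job. For the standard transfer shape $\transferof{\pc}{L} = (L \setminus D_{\pc}) \cup U_{\pc}$, the inclusion $U_{\pc} \subseteq \transferof{\pc}{L}$ is purely syntactic and holds for \emph{every} $L$; instantiating $L = \flowof{\pc}$ requires neither reachability of $\pc$ nor the fixed-point property of $\flow$. Your second paragraph --- arguing that speculative execution never leaves the control-flow graph --- is therefore redundant: the conclusion already holds at every $\pc \in \pcsof{\prog}$ regardless of how or whether $\pc$ is reached. That paragraph does, however, correctly unpack what the paper's ``branch-independent'' remark is getting at, so it is not wrong, just unnecessary for the proof.

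One subtlety worth flagging: the concrete transfer functions the paper spells out in its appendix are \emph{not} of the unconditional shape you assume. For a load $\IWload{\rega}{\vara}{\regb}$ with $\rega \notin L$ the appendix transfer returns $L$ unchanged, so neither $\regb$ nor any memory location is added. Under that definition your key inclusion $U_{\pc} \subseteq \transferof{\pc}{L}$ fails --- but then so does the proposition as stated, since a dead load still syntactically ``uses'' $\regb$. This is a small internal inconsistency in the paper between the proposition (phrased for the unconditional transfer) and the appendix (which gives the DCE-style conditional transfer). Your argument is sound for the transfer shape you assume, and that is the shape under which the proposition is actually true.
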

The transformation~$\tdconly$ uses a Liveness analysis solution $\flow$ of the backward flow inequalities \labelcref{eqn:flow-bwd} to remove unnecessary instructions.
A function $\tfdc : \lattice \to \instrs \to \instrs$ inspects the flow value at a given program point and removes an instruction if it writes a register or memory location that is not live.
The transformation of $\prog$ is then defined per program point with $\tdcof{\pc} = \tfdcof{(\flowof{\pc})}{(\progof{\pc})}$.
\begin{align*}
    \tfdcof{\latelem}{(\Iasgn{\rega}{\regb \op \regc}{\pcsuc})} &= 
    \begin{cases}{}
        \Inop{\pcsuc} & \rega \notin \latelem \\
        \Iasgn{\rega}{\regb \op \regc}{\pcsuc} & \rega \in \latelem
    \end{cases}
                                                                &
    \tfdcof{\latelem}{(\Iload{\rega}{\vara}{x}{\pcsuc})} &= 
    \begin{cases}{}
        \Inop{\pcsuc} & \rega \notin \latelem \\
        \Iload{\rega}{\vara}{x}{\pcsuc} & \rega \in \latelem
    \end{cases}
    \\
    \tfdcof{\latelem}{(\Istore{\vara}{x}{\regc}{\pcsuc})} &= 
    \begin{cases}{}
        \Inop{\pcsuc} &x = \adr, (\vara, \adr) \notin \latelem \\
        \Istore{\vara}{x}{\regc}{\pcsuc} & \text{otherwise}
    \end{cases}
                                                              &
    \tfdcof{\latelem}{\inst} &= \inst \quad \text{for other $\inst$}
\end{align*}

\subsubsection*{Dead Code Simulation}

We now craft a snippy simulation $(\simrel, \dtffamily)$ between $\tdc$ and $\prog$.
The simulation is parametric in $\prog$ as it depends on the Liveness analysis $\flow$.
That way, we craft a single simulation relation and provide one for every source program $\prog$ and its transformation $\tdc$.
First, we define the simulation $\stated \simrel \statec$ and directive transformation $\dtfat{\statec}{\stated}$ for speculation-free states,
and afterwards lift them to speculating states.
The simulation identifies states whenever they differ only on dead registers and memory locations:
\begin{align*}
    (\pc, \rasgn, \masgn) &\simrel (\pc, \rasgnp, \masgnp) &
                          &\iffsymb &
                              \begin{aligned}
    \forall \rega\in\transferof{\pc}{(\flowof{\pc})}.\,
                          & \rasgnof\rega = \rasgnpof{\rega}\\
    \land \quad \forall (\vara, \adr)\in\transferof{\pc}{(\flowof{\pc})}.\,
                          & \masgnof{\vara}{\adr} = \masgnpof{\vara}{\adr}
                              \end{aligned}
\end{align*}
In order to define the directive transformation $\dtfat{\statec}{\stated}$,
we need to think about the shape of simulation intervals.
Because $\tdconly$ leaves the control flow fully intact,
we can choose to create a lockstep simulation.
With that, we can choose $\dtfat{\statec}{\stated}$ to be the identity relation on $\directs$
and add transformations where $\tdconly$ replaced an instruction with $\IWnop$.
Let $\inst = \instof{\statec}$ and $\instp = \tdcof{\stated}$.
We set:
\begin{align*}
    {\dtfat{\statec}{\stated}} & = \identity{\directs} \cup 
    \begin{cases}
        \setcond{(\Dload{\varb}{\adrp}, \Dstep)}{\varb \in \vars, \adrp \in \adrs} & \inst = \Iload{\rega}{\vara}{\regb}{\pcsuc}, \;\;\instp = \Inop{\pcsuc} \\
        \setcond{(\Dstore{\varb}{\adrp}, \Dstep)}{\varb \in \vars, \adrp \in \adrs} & \inst = \Istore{\vara}{\regb}{\regc}{\pcsuc}, \;\;\instp = \Inop{\pcsuc} \\
        \varnothing & \text{otherwise}
    \end{cases}
\end{align*}

For speculating states, we simply lift $\simrel$ by setting $\nstated\lseq\stated \simrel \nstatec\lseq\statec$ if $\nstated \simrel \nstatec$ and $\stated \simrel \statec$ (and $\varepsilon \simrel \varepsilon$ for the base case).
For the directive transformation, we delegate to the executing states ${\dtfat{\nstatec\lseq\statec}{\nstated\lseq\stated}} = {\dtfat{\statec}{\stated}}$.
\begin{theorem}\label{thm:dcsnippy}
    $(\simrel, \dtffamily)$ is a snippy lockstep simulation.
\end{theorem}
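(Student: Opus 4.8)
The plan is to check the three constituents of ``snippy lockstep simulation'' in turn: that $(\simrel,\dtffamily)$ is a simulation with directive transformation (\Cref{def:simulation}), that every simulation interval is a single step on both sides, and that it satisfies the diagram of \Cref{fig:snippysim} (\Cref{def:snippy}), together with the side condition that $\simrel$ is $\lows$-respecting. Everything rests on two structural facts about $\tdconly$. First, it leaves the control-flow graph verbatim: $\prog$ and $\tdc$ share program counters, entry, and successor structure, and for every $\pc$ the target instruction $\tdcof{\pc}$ is either $\progof{\pc}$ itself or $\Inop{\pcsuc}$ in place of an assignment, a load, or a constant-address store whose written location is not live at $\pc$; exits, branches, $\IWsfence$, and $\IWslh{\rega}$ are never altered. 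Hence $\nstated \simrel \nstatec$ forces $\nstated \samepoint \nstatec$ and equal stack lengths. Second, ${\dtfat{\nstatec}{\nstated}}$ depends only on the shared program counter, since it is read off $\instof{\nstatec}$ and $\tdcof{\nstated}$; this program-counter-locality is exactly what lets one transformation serve two pairs of states with different register and memory contents, as the snippy condition demands.

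For the simulation clause and lockstep simultaneously, I would first prove a one-step matching lemma: if $\nstated \simrel \nstatec$ and $\nstated \ntrans{\leak}{\direct} \nstatef$, then some $\direct'$ with $(\direct',\direct) \in {\dtfat{\nstatec}{\nstated}}$ is executable from $\nstatec$, and for every such $\direct'$ the resulting step $\nstatec \ntrans{\leak'}{\direct'} \nstatee$ uses the same speculative rule and satisfies $\nstatef \simrel \nstatee$. Given the lemma, the derivation demanded by \Cref{def:simulation}, namely $\proofnode{\sderivetgtsim{\guarded{\simrel}}{\dtfat{\nstatec}{\nstated}}{\nstated}{\nstatec}{\varepsilon}}$, is built uniformly: if $\nstated$ is final its instruction is $\Iexit$, hence so is $\nstatec$'s (and stack lengths agree), so \Cref{rule:final} applies; otherwise apply \Cref{rule:tgt} once, and for each explored target step invoke \Cref{rule:direct-tf} with the singleton $\direct'$ of the lemma, then \Cref{rule:src} with the matching source step, then \Cref{rule:coind}, whose premise $\nstatef \simrel \nstatee$ the lemma supplies and which is discharged coinductively. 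Lockstep is then immediate, since each instance of \Cref{rule:sync} collapses to one target and one source transition. The matching lemma itself is a case analysis on $\progof{\pc}$. When $\tdcof{\pc} = \progof{\pc}$, take $\direct' = \direct$: the registers and memory cells the instruction reads lie in $\transferof{\pc}{(\flowof{\pc})}$ by the shape of the liveness transfer functions and by \Cref{lem:liveness} (which, being branch- and directive-independent, covers speculative executions and the cell touched by an unsafe directive), so $\nstated \simrel \nstatec$ forces identical leakage, identical branch and speculation decisions, and identical written values; the backward inequality $\flowof{\pc} \geq \transferof{\pcsuc}{(\flowof{\pcsuc})}$ of \Cref{eqn:flow-bwd} then yields $\nstatef \simrel \nstatee$. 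When $\progof{\pc}$ is an assignment, load, or store that $\tdconly$ turned into $\Inop{\pcsuc}$, the target step uses $\Dstep$; take $\direct'$ to be $\Dstep$ if $\nstatec$'s access is in bounds and otherwise $\Dload{\varb}{\adrp}$ (resp.\ $\Dstore{\varb}{\adrp}$) for any legal $\varb,\adrp$, each of which ${\dtfat{\nstatec}{\nstated}}$ pairs with $\Dstep$. Since the written register resp.\ cell is not live at $\pc$, it is not live at $\pcsuc$ either by \Cref{eqn:flow-bwd}, so the discrepancy it introduces is invisible to $\simrel$ and $\nstatef \simrel \nstatee$ follows. The cases $\IWsfence$ and $\IWslh{\rega}$ are routine, as $\prog$ and $\tdc$ share them and stack shapes coincide (and the transfer for $\IWslh{\rega}$ keeps $\rega$ live, so the wiped-or-kept value agrees).

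For the snippy clause, assume the black hypotheses of \Cref{fig:snippysim}: $\nstatec_1 \samepoint \nstatec_2$, $\nstated_1 \samepoint \nstated_2$, $\nstated_i \simrel \nstatec_i$, an interval $(\nstatec_1,\nstated_1) \synctrans{\tracek}{\dtracee}{\tracel}{\dtraced} (\nstatee_1,\nstatef_1)$, and $\nstatec_2 \ntranss{\tracek}{\dtracee} \nstatee_2$. By lockstep $\dtraced$ and $\dtracee$ are single directives, so \Cref{rule:sync} unfolds to $\nstatec_1 \ntrans{\tracek}{\dtracee} \nstatee_1$, the (unique, \Cref{lemma:directive-det}) target step $\nstated_1 \ntrans{\tracel}{\dtraced} \nstatef_1$, and $(\dtracee,\dtraced) \in {\dtfat{\nstatec_1}{\nstated_1}}$; likewise $\nstatec_2 \ntrans{\tracek}{\dtracee} \nstatee_2$. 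Since $\nstatec_1$ and $\nstatec_2$ are at the same point they run the same instruction, and since they take the same directive $\dtracee$ producing the same leakage $\tracek$, they agree on every value that directive or leakage exposes; those values are live at $\pc$ by \Cref{lem:liveness}, so transporting through $\nstated_i \simrel \nstatec_i$ shows $\nstated_1$ and $\nstated_2$ also agree on them. Because $\nstated_2$ carries $\nstated_1$'s instruction, it can therefore take the transition with directive $\dtraced$ — automatically when $\dtraced \in \{\Dstep,\Dif,\Dspec\}$, and because the just-matched address is out of bounds for $\nstated_2$ exactly as for $\nstated_1$ when $\dtraced$ is $\Dload{\varb}{\adrp}$ or $\Dstore{\varb}{\adrp}$ — producing the same leakage $\tracel$ (a memory-access leak is a function of that address; every other leak is constant). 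Moreover ${\dtfat{\nstatec_2}{\nstated_2}} = {\dtfat{\nstatec_1}{\nstated_1}}$ by program-counter-locality, so $(\dtracee,\dtraced) \in {\dtfat{\nstatec_2}{\nstated_2}}$. Letting $\nstatef_2$ be the target state reached and applying the matching lemma to $\nstated_2 \simrel \nstatec_2$ at this target step (its flexibility over the source directive lets us choose exactly $\dtracee$) gives $\nstatef_2 \simrel \nstatee_2$, and \Cref{rule:sync} reassembles $(\nstatec_2,\nstated_2) \synctrans{\tracek}{\dtracee}{\tracel}{\dtraced} (\nstatee_2,\nstatef_2)$. Finally, $\simrel$ is $\lows$-respecting because on initial states it is the identity (initial states agree on all registers and all public memory), trivially compatible with $\seceq$; with \Cref{thm:snippysound} this also gives $\tdconly \models \propsnip$.

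I expect the delicate step to be $\simrel$-preservation across memory operations under speculation. As $\tdconly$ may delete a constant-address store, $\prog$ and $\tdc$ can disagree on that cell, and a mis-speculated out-of-bounds load — or a speculative register-addressed load landing on that offset — in the kept code could expose the difference. The argument that this is harmless rests on (i) the cell being non-live at the deleted store and, by \Cref{eqn:flow-bwd}, staying non-live, so whatever is loaded from it is dead at the successor, and (ii) a reading of \Cref{lem:liveness} that is genuinely valid for speculative executions — covering the cell actually read under an unsafe directive, not only those that occur in speculation-free runs, which in practice forces a register-addressed load to mark its whole variable live. Fixing which precise formulation of the liveness guarantee is required, and confirming it is the one a speculation-sound liveness analysis provides, is where care is needed; the remaining bookkeeping — equal stack lengths so rollbacks transfer, the behaviour of $\IWsfence$ and $\IWslh{\rega}$, and the coinductive well-definedness of the derivation in \Cref{rules:simulation} (guardedness holds because each interval performs one genuine transition per side) — is mechanical.
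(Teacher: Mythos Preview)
Your proposal is correct and follows essentially the same route as the paper: a lockstep, per-instruction case analysis on the pair $\langle\progof{\pc},\tdcof{\pc}\rangle$, with the $\Drb$ case handled via equal stack lengths and the snippy clause discharged by the observation that ${\dtfat{\nstatec}{\nstated}}$ depends only on program counters. The paper interleaves parts (i) and (ii) within each case of the distinction rather than proving a separate matching lemma first, but the content is the same.

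One small gap: your claim that ``on initial states $\simrel$ is the identity'' is not automatic from the definition of $\simrel$, which only forces agreement on $\transferof{\entry}{(\flowof{\entry})}$. The paper makes this explicit by restricting to Liveness solutions where all registers and memory locations are live at the entry point (any solution can be weakened to satisfy this), and only then is $\simrel$ the identity on initial states and hence $\lows$-respecting. You should state that restriction.
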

\begin{proof}
    We need to prove that $(\simrel, \dtffamily)$
    \begin{enumerate*}[label=\bfseries (\roman*)]
        \item\label{item:dcissim} is a simulation (\Cref{def:simulation}),
        \item\label{item:dcissnippy} is snippy (\Cref{def:snippy}),
        and \item\label{item:dcissecrespecting} respects $\lows$ (\Cref{def:loweqrespecting}).
    \end{enumerate*}
    The first part \labelcref{item:dcissim} is considerably easier than in the general case, because the simulation is lockstep,
    i.e.\ both $\prog$ and $\tdc$ only perform a single step before finding new states in $\simrel$.
    Consider $\nstated_1\lseq\stated_1 \simrel \nstatec_1\lseq\statec_1$,
    $\nstated_2\lseq\stated_2 \simrel \nstatec_2\lseq\statec_2$,
    $\nstated_1 \lseq\stated_1 \samepoint \nstated_2 \lseq\stated_2$
    $\nstatec_1 \lseq\statec_1 \samepoint \nstatec_2 \lseq\statec_2$
    and
    $\nstated_1 \lseq\stated_1 \ntrans{\leak}{\direct} \nstatef_1$,
    where $\stated_1 \atpc \stated_2 \atpc \statec_1 \atpc \statec_2 \atpc \pc$.
    We need to show that there is
    $\directp \dtfat{\statec_1}{\stated_1} \direct$
    so that $\nstatec_1\lseq\statec_1 \ntrans{\leakp}{\directp} \nstatee_1$
    and $\nstatef_1 \simrel \nstatee_1$.
    Second we prove snippyness~\labelcref{item:dcissnippy}:
    We additionally consider the existence of $\nstatec_2\lseq\statec_2 \ntrans{\leakp}{\directp} \nstatee_2$.
    We then need to show that $\nstated_2\lseq\stated_2 \ntrans{\leak}{\direct} \nstatef_2$
    and $\directp \dtfat{\statec_2}{\stated_2} \direct$ (which implies $(\statec_2, \stated_2) \synctrans{\leak}{\direct}{\leakp}{\directp} (\statee_2, \statef_2)$).

    We first split off the case $\direct = \directp = \Drb$, where $\sizeof{\nstated_1} > 0$.
    For \labelcref{item:dcissim}, we need to prove $\nstatec_1\lseq\statec_1 \ntrans{\Erb}{\Drb} \nstatec_1$ because $\Drb \dtfat{\statec_1}{\stated_1} \Drb$.
    But $\nstated_1 \simrel \nstatec_1$ implies $\sizeof{\nstated_1} = \sizeof{\nstatec_1} > 0$ which meets the premise for the transition.
    For \labelcref{item:dcissnippy}, additionally consider $\nstatec_2 \lseq \statec_2 \ntrans{\Erb}{\Drb} \nstatec_2$.
    We need to show $\nstated_2\lseq\stated_2 \ntrans{\Erb}{\Drb} \nstated_2$.
    With the same argument as before, $\sizeof{\nstated_1} = \sizeof{\nstatec_1} = \sizeof{\nstatec_2} = \sizeof{\nstated_2} > 0$ satisfies the premise for the transition.
    Proving \labelcref{item:dcissim,item:dcissnippy} in the case of $\direct \neq \Drb \neq \directp$ is a large case distinction on $\langle\progof{\pc}, \tdcof{\pc}\rangle$.

    \begin{pcases}
        \pcasep{$\langle\Iload{\rega}{\vara}{\regb}{\pcsuc},\Inop{\pcsuc}\rangle$}
        There are two subcases: $\rasgniof{\statec_1}{\regb} = \adr$ is within $\sizeof{\vara}$ or not.
        We present the subcase $\adr \notin \sizeof{\vara}$.
        For \labelcref{item:dcissim}, consider $\stated_1 \trans{\Enone}{\Dstep} \statef_1$.
        We need to show that $\Dload{\varb}{\adrp}$ is executable in $\statec_1$,
        because $\Dload{\varb}{\adrp} \dtfat{\statec_1}{\stated_1} \Dstep$.
        Indeed, we have $\statec_1 \trans{\Eload{\adr}}{\Dload{\varb}{\adrp}} \statee_1$ because $\rasgniof{\statec_1}{\regb} = \adr \notin \sizeof{\vara}$.
        We also need to show that $\statef_1 \simrel \statee_1$.
        From the definition of $\tfdc$, we know $\rega \notin \flowof{\pc}$ because the instruction was replaced by $\IWnop$.
        The transfer $\transferof{\pc}{(\flowof{\pc})}$ is again $\flowof{\pc}$
        because a load to a dead register makes no registers or memory locations live.
        Together with \labelcref{eqn:flow-bwd}, we get $\transferof{\pc}{(\flowof{\pc})} = \flowof{\pc} \supseteq \transferof{\pcsuc}{(\flowof{\pcsuc})}$.
        Registers and memory of $\stated_1$ and $\statef_1$ are equal ($\IWnop$)
        and $\statec_1$ and $\statee_1$ only differ on $\rega$.
        This means $\stated_1 \simrel \statec_1$ implies $\statef_1 \simrel \statee_1$ because $\rega \notin \flowof{\pc} \supseteq \transferof{\pcsuc}{(\flowof{\pcsuc})}$.
        For \labelcref{item:dcissnippy}, further consider $\statec_2 \trans{\Eload{\adr}}{\Dload{\varb}{\adrp}} \statee_2$.
        We need to show that $\stated_2 \trans{\Enone}{\Dstep} \statef_2$ and $\statef_2 \simrel \statee_2$.
        The former is immediate from the semantics.
        For the latter, we apply the same arguments as in \labelcref{item:dcissim}.
        All other cases are very similar.
    \end{pcases}
    For \labelcref{item:dcissecrespecting}, we need to restrict our attention to Liveness analysis where all registers and memory locations are initially live.%
    \footnote{A weaker formalization of $\sec$-respecting relations would lift this restriction, but incur more presentational overhead.}
    Every solution $\flow$ is easily modified to satisfy this restriction.
    The proof that $\simrel$ respects $\sec$ is then straightforward:
    For initial states $\stated$ and $\statec$, $\stated \simrel \statec$ if and only if $\stated = \statec$.
\end{proof}

\section{Fixing Weaknesses in Register Allocation}
\label{section:regalloc}

Register allocation happens in the compilation phase that moves from an IR to the hardware instructions of the target architecture.
It moves from the unbounded number of virtual registers occurring in the IR version of the program to the finite set of hardware registers.
In order to do so, register allocation performs a relocation of register contents.
For each program point, a subset of the virtual registers is selected to be kept as hardware registers and the remaining virtual registers are spilled to the stack.
The literature describes various approaches towards selecting the set of registers to be spilled
\parencite{tichadouRegisterAllocation2022,polettoLinearScanRegister1999,traubQualitySpeedLinearscan1998,chaitinRegisterAllocationColoring1981},
leading to different transformations depending on the chosen algorithm.
%
%
In this paper, we express the transformations from register allocation more generally as a set of constraints.
A transformation constitutes a viable register allocation if it satisfies the constraints.
Practical allocation algorithms produce viable allocations making our results apply to all of them.

We first present the constraints a register allocation transformation needs to satisfy.
We then demonstrate how the transformation violates speculative non-interference preservation.
We continue to develop a static analysis that finds potential violations.
Finally, we fix the violations and craft a snippy simulation for the fixed transformation and prove that it satisfies $\propsnip$.

\subsection{Register Allocation}

Register allocation transforms a source program $\prog$ by inserting \emph{shuffle instructions}.
\begin{align*}
    \shuffleinstcmd \in \shuffleinstrs \;& \Coloneq\; \Imove{\rega}{\regb}{\pcsuc}
    \bnfalt \Ifill{\rega}{\stackl}{\pcsuc}
    \bnfalt \Ispill{\stackl}{\regb}{\pcsuc}
    \bnfalt \Islh{\rega}{\pcsuc}
    \bnfalt \Isfence{\pcsuc}
\end{align*}
The instructions extend $\instrs$ and are inserted in between the existing instructions in order to maintain the register relocation 
with the instructions move $\Cmove{\rega}{\regb}$, fill $\Cfill{\rega}{\stackl}$, and spill $\Cspill{\stackl}{\regb}$.
We also include $\IWslh{\rega}$ and $\IWsfence$ because we need them later to fix the transformation.
The semantics are as expected:
A move $\Cmove{\rega}{\regb}$ relocates contents from $\regb$ to $\rega$,
$\Cfill{\rega}{\stackl}$ reloads a spilled register from a constant address $\stackl \in \adrs$ in the stack,
and $\Cspill{\stackl}{\regb}$ spills a register to the stack.
We model the stack frame's section used for spilled registers with a fresh memory variable $\varstack$ not occurring in $\prog$.
We assume that $\varstack$ has appropriate size to fit all spilled registers and is typed $\lowsof{\varstack} = \low$.
Shuffle code is always straight line,
so for a shuffle sequence $\shufseqcmd \in \textit{Shuffle Inst}^*$ we introduce the notion $\progof{\pc} = \shufseq{\pcsuc}$ to express that $\prog$ executes $\shufseqcmd$ and then ends in $\pcsuc$.
%
%
\subsubsection*{Constraints for a valid transformation}

Register allocation inserts shuffle code between existing instructions to realize the register relocation.
Formally, a target program $\tra : \pcsof{\tra} \to \instrs$ is a register allocation if there exist functions $\instinj$ and $\regmap$.
The first is an injection $\instinj : \pcsof{\prog} \to \pcsof{\tra}$ of the original instructions of $\prog$ to their counterparts in~$\tra$.
The second function is a relocation mapping $\regmap : \pcsof{\tra} \to \regs \partialto (\regs \cup \stack)$,
where $\stack = \setcond{(\varstack, \stackl)}{\stackl \in \sizeof{\varstack}} \subseteq \mems$ is the stack frame for spilled variables.
At each program counter $\pcp \in \pcsof{\tra}$, $\regmapofof{\pcp}{\rega}$~is the relocation of the virtual register $\rega$ from $\prog$ to the hardware register or stack location in~$\tra$.%
\footnote{In general, the content of a register $\rega$ could be relocated to multiple locations. For simplicity of presentation, we forbid that.}
The functions are subject to the following conditions.
\begin{enumerate*}[font=\itshape]
    \item[Instruction matching:]\label{req:instmatch} The $\instinj$-injected instructions must operate on the same registers up to relocation by~$\regmap$.
    \item[Shuffle conformity:]\label{req:shufconf} The relocation $\regmap$ must conform to the (shuffle) instructions in $\tra$.
    \item[Obeying Liveness:]\label{req:liveness}~Every live register in $\prog$ is mapped under $\regmap$ and no location is doubly allocated.
\end{enumerate*}
To state the conditions formally, we introduce notation for the defined and used registers of an instruction.
A register $\regb$ is used by $\inst\in\instrs$ if the register is read out by the corresponding rule in \Cref{rules:spec-free}.
Similarly, $\rega$ is defined by $\inst$, if it is written in that rule.
For example, $\IWload{\rega}{\vara}{\regb}$ uses $\set{\regb}$ and defines $\set{\rega}$.
\begin{align*}
    \usesof{\inst} & = \setcond{\regb}{\text{$\regb \in \regs$ is used by $\inst$}}
                    &
    \defof{\inst} & = \setcond{\rega}{\text{$\regb \in \regs$ is defined by $\inst$}}
\end{align*}
\begin{code}[t]
\begin{doublecodebox}{}
    \ttfamily
    \tcbsubtitle[height=0pt,after skip=0pt]{\raggedleft Source $\prog$\phantom{aa}}
    \begin{lstlisting}
a = (b < buf_size)

br (a)|$\NextInst{\mkern1mu\makepc{3},\;\makepc{4}}$|
    buf[b] = secret

br (bytes)|$\NextInst{\mkern1mu\makepc{5},\;\makepc{5}}$|
ret
    \end{lstlisting}
\tcblower%
    \ttfamily
    \tcbsubtitle[height=0pt,after skip=0pt]{\raggedleft Target $\tra$}
    \begin{coderight}
a = (b < buf_size)
stk[0] = bytes
br (a)|$\NextInst{\mkern1mu\makepc{d},\;\makepc{e}}$|
    buf[b] = secret
a = stk[0]
br (a)|$\NextInst{\mkern1mu\makepc{g},\;\makepc{g}}$|
ret
    \end{coderight}
\end{doublecodebox}
\setlength{\belowcaptionskip}{-0.3cm}
\caption{\label{code:ra}An Example register allocation. Lines \makepc{b} and \makepc{e} are the inserted shuffle code.}
\end{code}

\myparagraph{Instruction Matching} requires that source instructions from $\prog$ reappear in $\tra$: 
For every $\pc \in \pcsof{\prog}$ and $\pcp = \instinjof{\pc}$, $\inst = \progof{\pc}$ has to match with $\instp = \traof{\pcp}$.
To match, the instruction $\instp$ must be the same,
but registers $\regb \in \usesof{\inst}$ are replaced with $\regmapofof{\pcp}{\regb} \in \regs$.
Similarly, registers $\rega \in \defof{\inst}$ are replaced with $\regmapofof{\pcsucp}{\rega} \in \regs$,
where $\pcsucp$ is the successor of $\pcp$.
Defined registers are found in the successor's relocation, because they are live only after executing $\instp$.
All other instructions at program points $\pcp \notin \imgof{\instinj}$ must be shuffle instructions.
Given $\statec \in \statesof{\prog}$ and $\stated \in \statesof{\tra}$,
we write $\statec \defrinstinj \stated$ when $\statec \atpc \pc$ and $\instinjof{\pc} \atpc \stated$.
We extend the notation to $\nstatec \rinstinj \nstated$ in the expected way.

\myparagraph{Shuffle Conformity} requires that the relocation $\regmap$ is upheld by the instructions in $\tra$.
Consider an instruction $\instp = \traof{\pcp}$ at $\pcp \in \pcsof{\tra}$ and let $\pcsucp$ be a successor.
First, registers and stack locations untouched by $\instp$ must stay at the same location in $\regmap$:
For any register $\regd$ with $\regmapofof{\pcp}{\regd} \notin \usesof{\instp}$
and $\regmapofof{\pcsucp}{\regd} \notin \defof{\instp}$,
$\regmapofof{\pcp}{\regd} = \regmapofof{\pcsucp}{\regd}$.
Second, if $\instp$ is a shuffle instruction there are additional requirements:
Shuffle instructions move one source register's content from one location in $\tra$ to another.
As a consequence, if $\instp$ is a shuffle instruction, there must be a source register $\rega$ being moved and the location moved to must be free.
We write $\freeof{\pcssp}{\regdp}$ and $\freeof{\pcssp}{(\varstack, \stackl)}$ for $\regdp, (\varstack, \stackl) \notin \imgof{\regmapof{\pcssp}}$.
Dependent on $\instp$, we add the following constraints, ($\rega$ is the source register):
\begin{align*}
    &\begin{aligned}
        \instp & = \Imove{\regap}{\regbp}{\pcsucp} : \\
        \instp & = \Ifill{\regap}{\stackl}{\pcsucp} : \\
        \instp & = \Ispill{\stackl}{\regbp}{\pcsucp} : \\
        \instp & = \Islh{\regap}{\pcsucp} : \\
    \end{aligned}
    &&&
    \begin{alignedat}{3}
        \exists \rega.&\regmapofof{\pcp}{\rega} = \regbp &&\land \regmapofof{\pcssp}{\rega} = \regap &&\land \freeof{\pcp}{\regap} \\
        \exists \rega.&\regmapofof{\pcp}{\rega} = (\varstack, \stackl) &&\land \regmapofof{\pcssp}{\rega} = \regap &&\land \freeof{\pcp}{\regap} \\
        \exists \rega.&\regmapofof{\pcp}{\rega} = \regbp &&\land \regmapofof{\pcssp}{\rega} = (\varstack, \stackl) &&\land \freeof{\pcp}{(\varstack, \stackl)} \\
        \exists \rega.&\regmapofof{\pcp}{\rega} = \regap &&\land \regmapofof{\pcssp}{\rega} = \regap
    \end{alignedat}
\end{align*}

\myparagraph{Obeying Liveness} means that all live variables in $\prog$ must be allocated.
There must be a Liveness solution $\flow$ (\Cref{lem:liveness}) for $\prog$
so that for all locations $\pc \in \pcsof{\prog}$,
all registers $\rega \in \flowof{\pc}$ live at $\pc$
are allocated, i.e.\ $\regmapofof{(\instinjof{\pc})}{\rega} \neq \bot$.
Further, a location cannot be allocated twice,
i.e.\ for all $\pc \in \pcsof{\tra}$, $\regmapof{\pc}$ forms an injection on the live registers at $\pc$.

\begin{definition}
    A transformation from $\prog$ to $\tra$ is a register allocation
    if there are instruction matching, shuffle conform, and Liveness obeying $(\instinj, \regmap)$.
\end{definition}
\begin{example}\label{example:ra}
    \Cref{code:ra} contains an example register allocation.
    It is a simplified version of \Cref{code:specv1}, which still exhibits the weakness of register allocation.
    The left program starts with a secret value in register \makereg{secret} and public values in \makereg{b} and \makereg{bytes}.
    It stores the secret into \makevar{buf} when the address \makereg{b} is in bounds of $\sizeof{\makevar{buf}}$.
    Then, it leaks \makereg{bytes} at \makepc{4}.
    The right program is after register allocation.
    Register allocation has inserted the spill and fill instructions \makepc{b} and \makepc{e}.
    The instruction injection $\instinj$ can be seen from side by side alignment.
    It is the mapping $\set{\makepc{1} \mapsto \makepc{b},\makepc{2} \mapsto \makepc{c},\makepc{3} \mapsto \makepc{d},\makepc{4} \mapsto \makepc{f},\makepc{5} \mapsto \makepc{g}}$.
    The register mapping $\regmap$ at \makepc{a} and \makepc{b} makes no relocations: $\regmapof{\makepc{a}} = \regmapof{\makepc{b}} = \identity{\regs}$.
    The spill instruction at \makepc{b} relocates \makereg{bytes} to the stack.
    It remains spilled from \makepc{c} to~\makepc{e}: $\regmapofof{\makepc{c}}{\makereg{bytes}} = \ldots = \regmapofof{\makepc{e}}{\makereg{bytes}} = (\makevar{stk},0)$.
    Further, because \makereg{a} is not live after \makepc{2}
    it is not allocated from \makepc{c} to \makepc{g}: $\regmapofof{\makepc{c}}{\makereg{a}} = \ldots = \regmapofof{\makepc{g}}{\makereg{a}} = \bot$.
    Instead, the register allocation reuses \makereg{a} in \makepc{f}:
    The fill instruction at \makepc{e} relocates \makereg{bytes} to \makereg{a}: $\regmapofof{\makepc{f}}{\makereg{bytes}} = \makereg{a}$.
\end{example}

\subsection{Register Allocation is not \texorpdfstring{$\propsnip$}{SNIP}}
\label{section:counterexample}

Transformations that satisfy the constraints for a valid register allocation
are known to be non-interference preserving, provided that the semantics are speculation-free \parencite{bartheStructuredLeakageApplications2021,bartheFormalVerificationConstanttime2019}.
We first assumed that the same is true for speculative semantics as well,
because register allocation only inserts shuffle instructions.
While shuffle instructions produce leakages,
they are constant address loads and stores
which means they only leak constant values.
To our surprise, when we tried to form a snippy simulation relation for register allocation,
we realized that it was not $\propsnip$.
\Cref{code:ra} demonstrates a minimalistic example for how register allocation introduces weaknesses into $\tra$.
The left program is $\propsni$,
because the leakage at Line \makepc{4} depends on the register \makereg{bytes} that holds a public value untouched since the start of the execution.
The right program, however, is susceptible to the same attack as described in \Cref{section:intro} for \Cref{code:specv1}:
Speculatively executing \makepc{c} when \makereg{b} holds a value out of $\sizeof{\makevar{buf}}$ can store the secret value from \makereg{secret} to \texttt{\makevar{stk}[0]}.
That value is then loaded at \makepc{e} into \makereg{a} and leaked at~\makepc{f}.
The fault for this attack is not with the constant leakage of shuffle instructions.
The weakness occurs from spilling itself, i.e.\ from the relocation of registers to memory:
Conceptually, speculative execution can access memory everywhere because it performs unsafe memory accesses.
But it cannot read or write register contents.
Register allocation moves registers into memory, effectively granting unsafe memory operations access to spilled registers.

We were able to reproduce the weakness with a real compiler.
\Cref{code:ra} is inspired by \Cref{code:specv1}
which is an excerpt of the \texttt{libsodium} function \texttt{chacha20\_encrypt\_bytes} with slight modifications.
The function is responsible for encryption of data using the \texttt{Chacha20} stream cipher and, similar to \Cref{code:specv1}, first copies the data into a stack-local buffer before executing the encryption algorithm on it.
Our tests were done on \texttt{LLVM 17}, and we tested each of \texttt{LLVM}'s register allocators (i.e. \texttt{basic}, \texttt{greedy}, \texttt{fast}, and \texttt{pbqp}).
They \emph{all} insert the instructions \makepc{b} and \makepc{e} (\Cref{code:ra}) when compiling our program without other optimizations.%
\footnote{Targeting \texttt{x86-64}. We used the \texttt{opt} passes \texttt{mem2reg,simplify-cfg,module-inline} before register allocation with \texttt{llc}.}
This means the vulnerability is inherent to register allocation, and cannot avoided by just opting towards a particular register allocator.

\subsection{Poison-Tracking Product}

We develop a static analysis that reveals weaknesses like the one in \Cref{section:counterexample}.
The analysis operates on a product construction between source program and register-allocated program.
To motivate the construction,
we inspect the simulation
for register allocation under speculation-free semantics,
and explain what fails under speculative semantics.
Under speculation-free semantics, a simulation for register allocation matches a state~$\statec$ of the source program $\prog$
to a state~$\stated$ of the target program $\tra$
whenever the instruction is matched, $\statec \rinstinj \stated$,
and the values of all registers and memory locations in $\statec$ and $\stated$ are equal up to relocation~$\regmap$.
When $\prog$ is memory safe,
one can then prove that any instruction executed from $\stated$ can be replayed from $\statec$.
The simulation preserves non-interference:
Apart from the leakages inserted by shuffle instructions (which are constant addresses),
the leakage that arises from execution in $\stated$ is equal to the leakage from $\statec$.
The previous section showed that it is not enough to just extend this approach to speculating states:
Speculation introduces unsafe memory operations that access the $\regmap$-mapped stack locations in $\tra$.
Such an access cannot be simulated by the source program~$\prog$, as it has the accessed value in a register where the memory operation cannot access it.
As a result, if we executed $\prog$ and $\tra$ in parallel we will eventually see speculating states $\nstatec$ and $\nstated$ with different values in registers of $\nstatec$ and their $\regmap$-mapped location in $\nstated$.
Further execution propagates the differences to other locations.

The goal in fixing register allocation is to make sure that differences in value do not lead to leakage of sensitive data.
An immediate mitigation would be to insert an $\IWsfence$ instruction before every load and store operation.
This would eliminate speculating unsafe memory accesses altogether.
However, this introduces many $\IWsfence$ instructions and reduces performance more than necessary.
Instead, we construct a product of $\prog$ and $\tra$ that tracks differences in values between $\nstatec$ and~$\nstated$.
We call registers and memory locations that hold different values \emph{poisoned},
because leaking them in $\tra$ might be unsafe:
If $\nstatec$ and $\nstated$ execute an instruction that leaks the value of a poisoned register,
we cannot rely on $\prog$ being $\propsni$ to justify that the leakage from $\nstated$ is safe.
Leakage of healthy (not poisoned) registers, however, is safe for that reason.
As a result, we only need to protect instructions that leak a poisoned register's value.

\subsubsection*{Poison types}
Poison types represent the registers and memory locations where $\prog$ and $\tra$ can have different values.
Consider states $\statec \in \statesof{\prog}$ and $\stated \in \statesof{\tra}$.
A poison type $\poison \in \ptypes$ is a function $(\regs \cup \mems \setminus \stack) \to \pvals$.
It assigns a poison value to each register and memory location of $\statec$.
The poison values $\pvals = \set{\poisoned, \weakpoisoned, \healthy}$ have the following meaning:
Healthy registers and memory locations~($\healthy$) are equal between $\statec$ and $\stated$ up to relocation by $\regmap$.
Poisoned registers and memory locations ($\poisoned$) can differ between $\statec$ and $\stated$ up to relocation by $\regmap$.
Finally, registers and memory locations can be weakly poisoned ($\weakpoisoned$).
This poison value is introduced because of $\IWslh{\rega}$.
If $\IWslh{\rega}$ occurs as a shuffle instruction, executing it speculatively sets $\rega$ to 0.
But because it is a shuffle instruction it does not occur in $\prog$.
That makes $\rega$'s value different between $\prog$ and $\tra$.
However, \emph{every} execution in $\tra$ that executes this shuffle instruction speculatively also sets $\rega$ to~0.
This makes it safe to leak in $\tra$, even though the value differs between $\prog$ and $\tra$.
We thus type registers and memory locations that hold a 0 due to an $\IWslh{\rega}$ instruction weakly poisoned~($\weakpoisoned$).
Formally, we express that states $\statec \in \statesof{\prog}$ and $\stated \in \statesof{\tra}$ are equal up to a poison-type $\poison$ and relocation~$\regmap$
with the notation $\statec \nrrmp{\poison} \stated$.
We then extend the notation to speculating states $\nstatec \in \nstatesof{\prog}$ and $\nstated \in \nstatesof{\tra}$ of equal speculation depth $\sizeof{\nstatec} = \sizeof{\nstated}$.
For that, we have sequences of poison types $\pseq \in \ptypes^*$,
one poison type for each level of speculation $\sizeof{\nstatec} = \sizeof{\nstated} = \sizeof{\pseq}$.
To formalize the notation $\nstatec \nrrmp{\pseq} \nstated$,
write $\pvof{\pval}{\poison}$ for the set of all registers and memory locations typed $\pval$ by~$\poison$,
$\pvof{\pval}{\poison} = \setcond{ \rega, (\vara, \adr) }{\poisonof{\rega} = \pval, \pmemof{\vara}{\adr} = \pval}$.
Further, we write $\evalin{\stated}{\rega} = \rasgniof{\stated}{\rega}$ and $\evalin{\stated}{(\vara, \adr)} = \masgniof{\stated}{\vara}{\adr}$.
Let $\statec = (\pc, \rasgn, \masgn) \in \statesof{\prog}$ and $\stated = (\pcp, \rasgnp, \masgnp) \in \statesof{\tra}$ range over states of $\prog$ and $\tra$.
We define:
\begin{align*}
        \begin{aligned}
            & \statec \defnrrmp{\poison} \stated \\
            & \ifftxt
        \end{aligned}
        &
        \begin{alignedat}{3}
            && \forall \rega \in \pvof{\healthy}{\poison}.\, &\rrmat{\stated}{\rega} = \rasgnof{\rega} \\
            &{}\land{}&\forall \rega \in \pvof{\weakpoisoned}{\poison}.\,& \rrmat{\stated}{\rega} = 0 \\
            &{}\land{}&\forall (\vara, \adr) \in \pvof{\healthy}{\poison}.\,& \masgnpof{\vara}{\adr} = \masgnof{\vara}{\adr} \\
            &{}\land{}&\forall (\vara, \adr) \in \pvof{\weakpoisoned}{\poison}.\,& \masgnpof{\vara}{\adr} = 0
        \end{alignedat} \label{def:nrrmp}
        &
        \begin{aligned}
                &\nstatec\lseq\statec \nrrmp{\pseq\lseq\poison} \nstated\lseq\stated \\
                &\ifftxt
        \end{aligned}
        &
        \begin{alignedat}{3}
            &&&\nstatec \nrrmp{\pseq} \nstated \\
            &{}\land{}&& \statec \nrrmp{\poison} \stated
        \end{alignedat}
\end{align*}
The definition of $\statec \nrrmp{\poison} \stated$ meets the intuition:
Healthy registers and memory locations need to coincide between $\statec$ and $\stated$ and weakly poisoned locations must be 0 in $\stated$.
For poisoned locations, there are no requirements.
For speculating states, the definition is applied at each speculation level.
We write $\healthy$, $\weakpoisoned$, and $\poisoned$ for the poison types that assign the respective poison value everywhere.
\begin{example}\label{example:ra-ptypes}
    We observe the poisoned values in the attack from \Cref{section:counterexample} in \Cref{code:ra} when running $\prog$ and $\tra$ side by side.
    The register allocation $(\instinj, \regmap)$ is described in \Cref{example:ra}.
    The initial states are $\statec = (\makepc{1}, \rasgn, \masgn)$ and $\stated = (\makepc{a}, \rasgn, \masgn)$,
    with $\rasgnof{\makereg{b}} \notin \sizeof{\makevar{buf}}$ and $\rasgnof{\makereg{secret}} = \val \neq \rasgnof{\makereg{bytes}}$.
    We also assume that already $\rasgnof{\makereg{a}} = \bfalse$ (so that the updates in \makepc{1} and \makepc{a} have no effect).
    After executing the store instruction at \makepc{3} with the directive $\Dstore{\makevar{buf}}{0}$,
    $\prog$'s state is $\statee = (\makepc{4}, \rasgn, \subst{\masgn}{(\makevar{buf},0)}{\val})$.
    Similarly, after executing the store instruction at $\makepc{d}$ with the directive~$\Dstore{\makevar{stk}}{0}$,
    $\tra$'s state is $\statef = (\makepc{e},\rasgn,\subst{\masgn}{(\makevar{stk}, 0)}{\val})$.
    We see a poisoned value for \makereg{bytes}:
    At $\makepc{e}$, it is still located in $\regmapofof{\makepc{e}}{\makereg{bytes}} = (\makevar{stk}, 0)$ due to the spill at \makepc{b}.
    But the values differ with $\rasgniof{\statef}{\makereg{bytes}} \neq \val = \masgnpof{\makevar{stk}}{0}$.
    Further, $\masgniof{\statee}{\makevar{buf}}{0} \neq \masgniof{\statef}{\makevar{buf}}{0}$ holds a poisoned value.
    With a poison type that is fully healthy except on $\makereg{bytes}$ and $(\makevar{buf}, 0)$, $\poison = \subst{\healthy}{\makereg{bytes},(\makevar{buf}, 0)}{\poisoned}$,
    we have $\statee \nrrmp{\poison} \statef$.
\end{example}
\begin{figure}
    \begin{ruleframes}
        \begin{ruleframe}[label=rules:pstack,top=4.3pt]{Stack}

            \definerule{poison-load-stkunsafe}
            {
                \instof{\statec} = \Iload{\rega}{\vara}{\regb}{\pcsuc}
                \\\\
                \poisonof{\regb} = \healthy
                \\
                \poisonp = \subst{\poison}{\rega}{\poisoned}
                \\\\
                \statec \trans{\Eload{\adr}}{\Dload{\varb}{\adrppp}} \statee
                \\
                \stated \trans{\Eload{\adr}}{\Dload{\varstack}{\adrp}} \statef
            }
            {
                (\statec, \stated, \poison)
                \prodtrans{\Eload{\adr}}{\Dload{\varb}{\adrppp}}{\Eload{\adr}}{\Dload{\varstack}{\adrp}}
                (\statee, \statef, \poisonp)
            }
            {\label{rule:pintro-load}}

            \definerule{poison-store-stkunsafe}
            {
                \instof{\statec} = \Istore{\vara}{\regb}{\regc}{\pcsuc}
                \\
                \regmapofof{\statef}{\regd} = (\varstack, \adrp)
                \\\\
                \poisonof{\regb} = \healthy
                \\
                \poisonp = \subst{\poison}{\regd,(\vara,\adrppp)}{\poisoned}
                \\\\
                \statec \trans{\Estore{\adr}}{\Dstore{\vara}{\adrppp}} \statee
                \\
                \stated \trans{\Estore{\adr}}{\Dstore{\varstack}{\adrp}} \statef
            }
            {
                (\statec, \stated, \poison)
                \prodtrans{\Estore{\adr}}{\Dstore{\vara}{\adrppp}}{\Estore{\adr}}{\Dstore{\varstack}{\adrp}}
                (\statee, \statef, \poisonp)
            }
            {\label{rule:pintro-store}}
        \end{ruleframe}

        \begin{ruleframewithoverlay}[label=rules:ploads,top=7pt]{Loads}{$\instof{\statec} = \Iload{\rega}{\vara}{\regb}{\pcsuc}$}
            \definerule{poison-load-safe}
            {
                \poisonof{\regb} = \weakpoisoned
                \\
                \statec \trans{\Eload{\adrpp}}{\Dstep} \statee
                \\
                \stated \trans{\Eload{0}}{\Dstep} \statef
            }
            {
                (\statec, \stated, \poison)
                \prodtrans{\Eload{\adrpp}}{\Dstep}{\Eload{0}}{\Dstep}
                (\statee, \statef, \subst{\poison}{\rega}{\poisoned})
            }
            {\label{rule:pload-ss}}

            \definerule{poison-load-unsafe}
            {
                \poisonof{\regb} = \weakpoisoned
                \\
                \statec \trans{\Eload{\adrpp}}{\Dload{\vara}{\adrppp}} \statee
                \\
                \stated \trans{\Eload{0}}{\Dstep} \statef
            }
            {
                (\statec, \stated, \poison)
                \prodtrans{\Eload{\adrpp}}{\Dload{\vara}{\adrppp}}{\Eload{0}}{\Dstep}
                (\statee, \statef, \subst{\poison}{\rega}{\poisoned})
            }
            {\label{rule:pload-us}}

            \definerule{healthy-load-safe}
            {
                \poisonof{\regb} = \healthy
                \\
                \statec \trans{\Eload{\adr}}{\Dstep} \statee
                \\
                \stated \trans{\Eload{\adr}}{\Dstep} \statef
            }
            {
                (\statec, \stated, \poison)
                \prodtrans{\Eload{\adr}}{\Dstep}{\Eload{\adr}}{\Dstep}
                (\statee, \statef, \subst{\poison}{\rega}{\pmemof{\vara}{\adr}})
            }
            {\label{rule:pload-hsafe}}

            \definerule{healthy-load-unsafe}
            {
                \poisonof{\regb} = \healthy
                \\
                \poisonp = \subst{\poison}{\rega}{\pmemof{\varb}{\adrp}}
                \\\\
                \statec \trans{\Eload{\adr}}{\Dload{\varb}{\adrp}} \statee
                \\
                \stated \trans{\Eload{\adr}}{\Dload{\varb}{\adrp}} \statef
                \\
                \varb \neq \varstack
            }
            {
                (\statec, \stated, \poison)
                \prodtrans{\Eload{\adr}}{\Dload{\varb}{\adrp}}{\Eload{\adr}}{\Dload{\varb}{\adrp}}
                (\statee, \statef, \poisonp)
            }
            {\label{rule:pload-hunsafe}}
        \end{ruleframewithoverlay}

        \begin{ruleframewithoverlay}[label=rules:pstores,top=4.3pt]{Stores}{$\instof{\statec} = \Istore{\vara}{\regb}{\regc}{\pcsuc}$}
            \definerule{poison-store-safe}
            {
                \poisonp = \subst{\poison}{(\vara, \adrpp), (\vara, 0)}{\poisoned}
                \\\\
                \poisonof{\regb} = \weakpoisoned
                \\
                \statec \trans{\Estore{\adrpp}}{\Dstep} \statee
                \\
                \stated \trans{\Estore{0}}{\Dstep} \statef
            }
            {
                (\statec, \stated, \poison)
                \prodtrans{\Estore{\adrpp}}{\Dstep}{\Estore{0}}{\Dstep}
                (\statee, \statef, \poisonp)
            }
            {\label{rule:pstore-ss}}

            \definerule{poison-store-unsafe}
            {
                \poisonof{\regb} = \weakpoisoned
                \\
                \statec \trans{\Estore{\adrpp}}{\Dstore{\vara}{\adrppp}} \statee
                \\
                \stated \trans{\Estore{0}}{\Dstep} \statef
            }
            {
                (\statec, \stated, \poison)
                \prodtrans{\Estore{\adrpp}}{\Dstore{\vara}{\adrppp}}{\Estore{0}}{\Dstep}
                (\statee, \statef, \subst{\poison}{(\vara,\adrppp),(\vara,0)}{\poisonof{\regc}})
            }
            {\label{rule:pstore-us}}

            \definerule{healthy-store-safe}
            {
                \poisonp = \subst{\poison}{(\vara, \adr)}{\poisonof{\regc}}
                \\\\
                \poisonof{\regb} = \healthy
                \\
                \statec \trans{\Estore{\adr}}{\Dstep} \statee
                \\
                \stated \trans{\Estore{\adr}}{\Dstep} \statef
            }
            {
                (\statec, \stated, \poison)
                \prodtrans{\Estore{\adr}}{\Dstep}{\Estore{\adr}}{\Dstep}
                (\statee, \statef, \poisonp)
            }
            {\label{rule:pstore-hsafe}}

            \definerule{healthy-store-unsafe}
            {
                \poisonof{\regb} = \healthy
                \\
                \statec \trans{\Estore{\adr}}{\Dstore{\varb}{\adrp}} \statee
                \\
                \stated \trans{\Estore{\adr}}{\Dstore{\varb}{\adrp}} \statef
                \\
                \varb \neq \varstack
            }
            {
                (\statec, \stated, \poison)
                \prodtrans{\Estore{\adr}}{\Dstore{\varb}{\adrp}}{\Estore{\adr}}{\Dstore{\varb}{\adrp}}
                (\statee, \statef, \subst{\poison}{(\varb, \adrp)}{\poisonof{\regc}})
            }
            {\label{rule:pstore-hunsafe}}
        \end{ruleframewithoverlay}

        \begin{ruleframe}[label=rules:pshuffle,top=4.3pt]{Shuffles}
            \definerule{poison-fill}
            {
                \instof{\stated} = \Ifill{\regap}{\stackl}{\pcsuc}
                \\
                \stated \trans{\Eload{\stackl}}{\Dstep} \statef
            }
            {
                (\statec, \stated, \poison)
                \prodtrans{\varepsilon}{\varepsilon}{\Eload{\stackl}}{\Dstep}
                (\statec, \statef, \poison)
            }
            {\label{rule:pfill}}

            \definerule{poison-spill}
            {
                \instof{\stated} = \Ispill{\stackl}{\regbp}{\pcsuc}
                \\
                \stated \trans{\Estore{\stackl}}{\Dstep} \statef
            }
            {
                (\statec, \stated, \poison)
                \prodtrans{\varepsilon}{\varepsilon}{\Estore{\stackl}}{\Dstep}
                (\statec, \statef, \poison)
            }
            {\label{rule:pspill}}

            \definerule{poison-move}
            {
                \instof{\stated} = \Imove{\regap}{\regbp}{\pcsuc}
                \\
                \stated \trans{\Enone}{\Dstep} \statef
            }
            {
                (\statec, \stated, \poison)
                \prodtrans{\varepsilon}{\varepsilon}{\Enone}{\Dstep}
                (\statec, \statef, \poison)
            }
            {\label{rule:pmove}}

            \definerule{poison-shuffle-sfence}
            {
                \instof{\stated} = \Isfence{\pcsuc}
                \\
                \stated \trans{\Enone}{\Dstep} \statef
            }
            {
                (\statec, \stated, \poison)
                \sprodtrans{\varepsilon}{\varepsilon}{\Enone}{\Dstep}
                (\statec, \statef, \poison)
            }
            {\label{rule:pshuf-sfence}}

            \definerule{poison-shuffle-slh}
            {
                \instof{\nstated} = \Islh{\regap}{\pcsuc}
                \\
                \regmapofof{\nstated}{\rega} = \regap
                \\
                \nstated \trans{\Enone}{\Dstep} \nstatef
                \\
                \pval = \inlineife{\sizeof{\pseq} \sameas 0}{\poisonof{\rega}}{\weakpoisoned}
            }
            {
                (\nstatec, \nstated, \pseq\lseq\poison)
                \sprodtrans{\varepsilon}{\varepsilon}{\Enone}{\Dstep}
                (\nstatec, \nstatef, \pseq\lseq\subst{\poison}{\rega}{\pval})
            }
            {\label{rule:pshuf-slh}}
        \end{ruleframe}
        \begin{ruleframe}[label=rules:pspec,top=4.3pt,bottom=0pt]{Speculation Sensitive}
            \definerule{poison-step}
            {
                (\statec, \stated, \poison)
                \prodtrans{\leak}{\direct}{\leakp}{\directp}
                (\statee, \statef, \poisonp)
            }
            {
                (\nstatec\lseq\statec, \nstated\lseq\stated, \pseq\lseq\poison)
                \sprodtrans{\leak}{\direct}{\leakp}{\directp}
                (\nstatec\lseq\statee, \nstated\lseq\statef, \pseq\lseq\poisonp)
            }
            {\label{rule:pstep}}

            \definerule{poison-rollback}
            {
                \sizeof{\nstatec} = \sizeof{\nstated} = \sizeof{\pseq} \geq 1
            }
            {
                (\nstatec\lseq\statec, \nstated\lseq\stated, \pseq\lseq\poison)
                \sprodtrans{\Erbless}{\Drb}{\Erbless}{\Drb}
                (\nstatec, \nstated, \pseq)
            }
            {\label{rule:prb}}

            \definerule{healthy-spec}
            {
                \instof{\nstatec} = \Iif{\regb}{\pcsuc_{\btrue}}{\pcsuc_{\bfalse}}
                \\
                \poisonof{\regb} = \healthy
                \\
                \nstatec \trans{\Eif{\bvalue}}{\Dspec} \nstatec\lseq\statee
                \\
                \nstated \trans{\Eif{\bvalue}}{\Dspec} \nstated\lseq\statef
            }
            {
                (\nstatec, \nstated, \pseq\lseq\poison)
                \sprodtrans{\Eif{\bvalue}}{\Dspec}{\Eif{\bvalue}}{\Dspec}
                (\nstatec\lseq\statee, \nstated\lseq\statef, \pseq\lseq\poison\lseq\poison)
            }
            {\label{rule:hspec}}

            \definerule{poison-sfence}
            {
                \instof{\statec} = \Isfence{\pcsuc}
                \\
                \statec \trans{\Enone}{\Dstep} \statee
                \\
                \stated \trans{\Enone}{\Dstep} \statef
            }
            {
                (\statec, \stated, \poison)
                \sprodtrans{\Enone}{\Dstep}{\Enone}{\Dstep}
                (\statee, \statef, \poison)
            }
            {\label{rule:psfence}}

            \definerule{poison-slh}
            {
                \instof{\nstatec} = \Islh{\rega}{\pcsuc}
                \\
                \nstatec \ntrans{\Enone}{\Dstep} \nstatee
                \\
                \nstated \ntrans{\Enone}{\Dstep} \nstatef
                \\
                \pval = \inlineife{\sizeof{\pseq} \sameas 0}{\pregof{\rega}}{\healthy}
            }
            {
                (\nstatec, \nstated, \pseq\lseq\poison)
                \sprodtrans{\Enone}{\Dstep}{\Enone}{\Dstep}
                (\nstatee, \nstatef, \pseq\lseq\subst{\poison}{\rega}{\pval})
            }
            {\label{rule:pslh}}
        \end{ruleframe}
    \end{ruleframes}
\end{figure}

\subsubsection*{Poison product}
The poison product $\prodra$ tracks how poison types are updated
when executing $\prog$ and $\tra$ side by side.
We design the side by side execution so that transitions in $\tra$ are replayed by $\prog$.
We later want to craft a simulation from the product,
so every transition of $\tra$ must be included in $\prodra$.
However, when a transition of $\tra$ can be replayed by multiple transitions in $\prog$,
which happens when $\prog$ performs an unsafe memory access,
we choose a memory location for the access so that the transition poisons as few registers and memory locations as possible.
The states of $\prodra$ take the shape $(\nstatec,\nstated,\pseq)$
with $\nstatec \in \nstatesof{\prog}$, $\nstated \in \nstatesof{\tra}$, $\pseq \in \ptypes^*$,
and~$\nstatec \nrrmp{\pseq} \nstated$
(which already implies $\sizeof{\nstatec} = \sizeof{\nstated} = \sizeof{\pseq}$).
There are two types of states:
Instruction-matched states and shuffling states.
Instruction-matched states satisfy $\nstatec \rinstinj \nstated$.
Shuffling states have $\nstated$ at a shuffle sequence after which the states would be instruction-matched again.
That is, $\nstatec = \nstatecp\lseq\statec$ and $\nstated = \nstatedp\lseq\stated$,
so that $\traof{\stated} = \shufseq{\pcsuc}$ with $\instinjof{\statec} = \pcsuc$.
Each type of state has transitions in $\prodra$:
\begin{align*}
    (\nstatec, \nstated, \pseq) &\sprodtrans{\leakp}{\directp}{\leak}{\direct} (\nstatee, \nstatef, \pseqp) & 
    (\nstatec, \nstated, \pseq) &\sprodtrans{\varepsilon}{\varepsilon}{\leak}{\direct} (\nstatec, \nstatef, \pseqp)
\end{align*}
The first type of transition is enabled only for instruction-matched $\nstatec \rinstinj \nstated$.
It executes a transition simultaneously in both programs via $\nstatec \ntrans{\leakp}{\directp} \nstatee$ and $\nstated \ntrans{\leak}{\direct} \nstatef$.
The second type is enabled only for shuffling states.
In that case, $\nstatec$ stutters while $\nstated$ progresses through the shuffle sequence.
In both cases, the transition needs to update the poison values depending on the transition rules taken in $\prog$ and $\tra$.
We present how $\prodra$ updates poison values in \Cref{rules:pstack,rules:ploads,rules:pstores,rules:pspec,rules:pshuffle}.
We mirror the approach in the semantics and provide the updates in two steps.
In a first step, we define the updates for instruction-matched, speculation-free states in \Cref{rules:pstack,rules:ploads,rules:pstores}.
They represent the transitions for the speculation-free semantics (\Cref{rules:spec-free}).
In the second step we lift the speculation-free updates with \Cref{rules:pspec}
to define the transitions of the product on the speculative semantics (\Cref{rules:spec}).
\Cref{rules:pshuffle} defines the transitions for shuffling states.
The instruction-matched transitions for $\IWnop$, $\IWasgn{\rega}{\regb\op\regc}$ and $\IWif{\regb}$ can be found in the appendix.
We explain the transitions in detail.
The transitions \labelcref{rule:pintro-load,rule:pintro-store} are the source of poison values.
They represent the situation from above, where we argued that values between $\prog$ and $\tra$ can differ:
In \labelcref{rule:pintro-load} a target program's speculating unsafe load (recognizable from the directive $\Dload{\varstack}{\adrp}$) loads from a spilled register's stack location.
The source program cannot perform that load and thus has to perform any other unsafe load.
This leads to different loaded values and poisons the register loaded to.
While we could choose the location of the unsafe access in $\prog$,
the register is poisoned in either case, so we allow arbitrary unsafe loads in $\prog$.
Similarly, in \labelcref{rule:pintro-store} a speculating unsafe store poisons the source program's register $\regd$ by overwriting it on the stack.
Again, the source program cannot replay that store as $\regd$ is a register and needs to store somewhere else.
In this case, we always let the source program store to $\vara$ instead,
because this reduces the number of poisoned memory locations for static analysis.
The overwritten stack-allocated register $\regd$ and the memory location in $\vara$ are both poisoned.
Notice how both rules require the source program to access memory unsafely, as well.
This is because the register holding the offset address must be healthy, $\pregof{\regb} = \healthy$,
and thus has the same value in $\statec$ and $\stated$.
The address must be healthy, because a poisonous address could be unsafe to leak.
A weakly poisoned address, on the other hand, would be 0 in $\tra$ which is always a safe access.
This is because we chose $\IWslh{\rega}$ to wipe a register to 0.
Any other constant would work the same way, but the number of rules in the product would increase.

\Cref{rules:ploads,rules:pstores} for instruction-matched transitions just propagate already poisoned registers and memory locations.
For example, \labelcref{rule:pload-hsafe} considers the case where the addressing register is healthy.
That means that both source and target program load from the same memory location.
We can recognize from the premise that the memory access is safe:
Source and target program execute the load with the directive $\Dstep$, so they execute with \labelcref{rule:load}.
The rule then propagates the poison value from the memory location loaded from to the register loaded to.
As before, the product transition can only be taken when the leaked content is not poisoned as that could represent an unsafe leakage.
The case of a weakly poisoned address is separately handled by \labelcref{rule:pload-ss,rule:pload-us}.
\Cref{rules:pstores} propagate poison values for stores.

\Cref{rules:pspec} provides the poison updates for instruction-matched states that execute speculation sensitive instructions.
\Cref{rule:pstep} lifts the speculation-free transitions to the speculative setting.
\labelcref{rule:hspec} forbids weakly poisoned branching conditions even though they are safe to leak.
This is to avoid that $\prog$ and $\tra$ arrive at different program points (up to $\instinj$).\footnote{Branching on weakly poisoned registers could be supported but further complicates the product definition.}
%
%
\labelcref{rule:pslh} creates healthy values when the product is speculating, because source and target program both wipe the register to 0.
If not speculating, the poison values are untouched.

\Cref{rules:pshuffle} provides the second type of transitions, for shuffling states, in $\prodra$.
The speculation insensitive shuffle instructions are again brought to speculating states with \labelcref{rule:pstep}.
The insensitive shuffle instructions only move values and don't modify them.
There is no need to update the poison type because the relocation is already included in $\regmap$ with shuffle conformity.
Thus, the poison type is untouched.
Only \labelcref{rule:pshuf-slh} can be a new source of poisoned values.
As discussed earlier, when speculatively executed, it wipes the register's value to 0, which makes it weakly poisoned.
On speculation-free states, $\IWsfence$ and $\IWslh{\regap}$ do not update the poison type.

The transitions of $\prodra$ are well-defined in the sense that executing a transition leads to a state that again satisfies the constraints of being a state (\Cref{lem:prod-well-def}).
Also, the speculation-free states never become poisoned,
because \labelcref{rule:pintro-load,rule:pintro-store} cannot happen:
$\prog$ is memory safe under speculation-free semantics.
Further, the only other rule to introduce poison values is \labelcref{rule:pshuf-slh},
but it introduces no poison value in speculation-free states.

\begin{lemma}\label{lem:prod-well-def}
    Given $(\nstatec, \nstated, \pseq)$ and a transition $(\nstatec, \nstated, \pseq) \sprodtrans{\leakp}{\directp}{\leak}{\direct} (\nstatee, \nstatef, \pseqp)$,
    then $\nstatee \nrrmp{\pseqp} \nstatef$.
\end{lemma}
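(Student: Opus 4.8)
\textbf{Proof of \Cref{lem:prod-well-def} --- proposal.}
The plan is an induction on (equivalently, a case split on the last rule of) the derivation of $(\nstatec, \nstated, \pseq) \sprodtrans{\leakp}{\directp}{\leak}{\direct} (\nstatee, \nstatef, \pseqp)$, using the standing hypothesis that the source triple is a product state, i.e.\ $\nstatec \nrrmp{\pseq} \nstated$. Since $\nrrmp{}$ is defined level-wise along the speculation stack, I would first peel off the rules that merely restructure the stack. \Cref{rule:pstep} pushes a speculation-free product step onto the top level and leaves the lower levels (states and poison types) verbatim, so it reduces the goal to a single-level claim about speculation-free product steps; \Cref{rule:prb} pops a level, and the invariant on the surviving prefix is inherited unchanged; \Cref{rule:hspec} duplicates the top level with an unchanged poison type and with register/memory contents (and, across a branch, the relocation $\regmap$) carried over, so the $\nrrmp{}$ obligation for the new level is exactly the one already known for the old one; and the matched-instruction speculation-sensitive rules \Cref{rule:psfence} and \Cref{rule:pslh} leave every level's poison type untouched except that, for $\IWslh{\rega}$ while speculating, \Cref{rule:pslh} re-types $\rega$ to $\healthy$ --- which is correct because both $\prog$ and $\tra$ then wipe $\rega$ to the constant $0$. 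What remains is the core single-level sub-lemma: for every speculation-free product rule (\Cref{rules:pstack,rules:ploads,rules:pstores}, the appendix rules for $\IWnop$, $\IWasgn{\rega}{\regb}{\regc}$, $\IWif{\regb}$) and every shuffle rule (\Cref{rules:pshuffle}), $\statec \nrrmp{\poison} \stated$ is preserved.

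For the core sub-lemma the uniform argument is bookkeeping: starting from $\statec \nrrmp{\poison} \stated$, the only way $\statee \nrrmp{\poisonp} \statef$ can fail is at a location whose value changed in $\statec\to\statee$ or in $\stated\to\statef$ but which $\poisonp$ still types $\healthy$ (resp.\ $\weakpoisoned$). For each rule I would verify that the locations written on the two sides are precisely those re-typed by $\poisonp$: a location left $\healthy$ by $\poisonp$ is either untouched by the transition --- so by \emph{shuffle conformity} (registers and stack slots not in $\usesof{\cdot}/\defof{\cdot}$ keep their $\regmap$-location) it keeps both its value and its relocation and still agrees by the induction hypothesis --- or it is written on both sides with the same value, which for the load/store rules \Cref{rule:pload-hsafe,rule:pload-hunsafe,rule:pstore-hsafe,rule:pstore-hunsafe} holds because a $\healthy$ addressing register forces $\prog$ and $\tra$ to touch the same cell, whose (propagated) poison value then fixes the destination's. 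A location newly typed $\weakpoisoned$ occurs only in the $\IWslh{\rega}$ cases \Cref{rule:pshuf-slh} (shuffle-$\IWslh$ while speculating), where $\tra$ writes $0$ and $\prog$ stutters. For the data-movement shuffles \Cref{rule:pfill,rule:pspill,rule:pmove,rule:pshuf-sfence} the poison type is untouched, and I would check, again via shuffle conformity, that $\rrmat{\statef}{\cdot}$ is unchanged on every source register (the moved register arrives at its new $\regmap$-location carrying the right value; every other register keeps both location and value), while the only memory cell that can change is a cell of $\varstack$, which lies in $\stack$ and hence outside the domain $\regs \cup \mems \setminus \stack$ of poison types, so carries no obligation.

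The main obstacle --- and where I would spend the most care --- are the poison-\emph{introducing} rules \Cref{rule:pintro-load,rule:pintro-store} and the weakly-poisoned-address rules \Cref{rule:pload-ss,rule:pload-us,rule:pstore-ss,rule:pstore-us}. Here I must exploit the premise that the addressing register $\regb$ is $\healthy$ (resp.\ $\weakpoisoned$): in the $\healthy$ case $\statec \nrrmp{\poison} \stated$ forces $\regb$ to hold the same out-of-bounds address up to relocation, so $\prog$ and $\tra$ genuinely leak the same $\Eload{\adr}$/$\Estore{\adr}$, yet $\tra$ resolves it to a spilled register's stack slot $(\varstack,\adrp)$ which $\prog$ cannot match because it holds that value in a register; consequently the register loaded to (\Cref{rule:pintro-load}) --- resp.\ both the overwritten source cell $(\vara,\adrppp)$ and the register $\regd$ with $\regmapofof{\statef}{\regd}=(\varstack,\adrp)$ (\Cref{rule:pintro-store}) --- may now disagree, and these are exactly the locations $\poisonp$ re-types to $\poisoned$, while every other location is untouched on both sides and handled by the induction hypothesis. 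In the $\weakpoisoned$ case the target's address is the in-bounds constant $0$ (since $\IWslh$ wipes to $0$), so $\tra$'s access is ordinary and poison is simply propagated from the stored register. Once these cases are settled the remaining rules are routine instances of the bookkeeping above and the induction closes; the length equalities $\sizeof{\nstatee}=\sizeof{\nstatef}=\sizeof{\pseqp}$ are then read off for free, since they are built into $\nrrmp{}$.
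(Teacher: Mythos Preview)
Your proposal is correct and follows the same approach as the paper: a case split on the product transition rule, first peeling off the stack-manipulating rules (\labelcref{rule:pstep}, \labelcref{rule:prb}, \labelcref{rule:hspec}) and the speculation-sensitive matched-instruction rules, then handling the single-level cases by checking that every location whose value can change is re-typed appropriately by $\poisonp$.

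One point worth noting: your treatment of the shuffle rules is in fact more careful than the paper's. The paper lumps \labelcref{rule:pfill}, \labelcref{rule:pspill}, \labelcref{rule:pmove} together with \labelcref{rule:pnop} and claims that ``$\stated$ and $\statef$ are fully equal except for the program counter,'' which is literally false for moves, fills, and spills. You correctly observe that the target state's registers (or the $\varstack$ cell) do change, and that the invariant is preserved because shuffle conformity simultaneously updates $\regmap$ so that $\rrmat{\statef}{\cdot}$ stays fixed on each source register, while writes to $\varstack$ lie outside the domain of poison types. That is the right argument. A minor imprecision in your write-up: for the weakly-poisoned store rules you say ``poison is simply propagated from the stored register,'' but \labelcref{rule:pstore-ss} actually sets both touched cells to $\poisoned$ regardless of $\poisonof{\regc}$; this does not affect well-definedness since $\poisoned$ imposes no constraint, but it is worth stating precisely.
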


\begin{lemma}\label{lem:spec-free-prod}
    Speculation-free states are never poisoned.
    I.e.\ $(\statec, \stated, \healthy) \sprodtranss{\tracek}{\dtracee}{\tracel}{\dtraced} (\statee, \statef, \poison)$
    implies $\poison = \healthy$.
\end{lemma}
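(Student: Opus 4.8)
The plan is to prove \Cref{lem:spec-free-prod} by induction on the length of the product run $(\statec, \stated, \healthy) \sprodtranss{\tracek}{\dtracee}{\tracel}{\dtraced} (\statee, \statef, \poison)$, after strengthening it so that the induction also passes through the speculating intermediate states. The structural observation that makes this work is that the \emph{bottommost} entry of the poison-type stack is modified only by product transitions taken at speculation depth one: by \labelcref{rule:pstep} and the speculation-free product rules of \Cref{rules:pstack,rules:ploads,rules:pstores} that it lifts, which update only the topmost poison type once the stack is deeper than one; by \labelcref{rule:pslh} and \labelcref{rule:pshuf-slh}, which likewise touch only the top; by \labelcref{rule:hspec}, which merely duplicates the topmost type when speculation begins; and by \labelcref{rule:prb}, which discards the topmost type on rollback. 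Symmetrically, the bottommost configuration of $\nstatec$ evolves along the whole run exactly as $\prog$'s speculation-free semantics would (\labelcref{rule:pstep} at depth one mirrors a speculation-free step of $\prog$, and \labelcref{rule:hspec}, \labelcref{rule:prb} copy and restore it), so it ranges over speculation-free reachable --- hence memory-safe --- states of $\prog$. The strengthened invariant carried by the induction is: if $(\nstatec, \nstated, \pseq) \sprodtranss{\tracek}{\dtracee}{\tracel}{\dtraced} (\nstatee, \nstatef, \pseqp)$ and the bottommost entry of $\pseq$ is $\healthy$, then so is the bottommost entry of $\pseqp$; the lemma is the instance where $\nstatec, \nstated$ have depth one.

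For the inductive step I would inspect the first transition of the run. If it leaves the bottommost poison entry untouched --- this covers any transition at stack depth $\geq 2$ (i.e.\ \labelcref{rule:pstep}, \labelcref{rule:pslh}, \labelcref{rule:pshuf-slh} in that case), together with \labelcref{rule:hspec} (which copies the still-healthy bottom) and \labelcref{rule:prb} (which restores it) --- then the resulting state again has a healthy bottommost entry and the induction hypothesis applies to the shorter remaining run. The only other possibility is a transition taken at stack depth one, where the poison-type stack is the single type $\poison = \healthy$, and it remains to show that the successor type is $\healthy$ as well.

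For that depth-one case I would do a case analysis on the applicable rule, exhaustive by \Cref{lem:prod-well-def}. The rules \labelcref{rule:pintro-load,rule:pintro-store,rule:pload-hunsafe,rule:pstore-hunsafe} all require the source component to perform an \emph{unsafe} memory access, and are therefore disabled, because $\prog$ is memory safe under speculation-free semantics and, by the observation above, the source configuration is a reachable state of $\prog$. The rules \labelcref{rule:pload-ss,rule:pload-us,rule:pstore-ss,rule:pstore-us} require the addressing register to be weakly poisoned, which cannot hold under $\poison = \healthy$. Every remaining rule --- \labelcref{rule:pload-hsafe,rule:pstore-hsafe}, the appendix rules for $\IWnop$, assignment and conditional branching, the shuffle rules \labelcref{rule:pfill,rule:pspill,rule:pmove,rule:pshuf-sfence}, \labelcref{rule:psfence}, and \labelcref{rule:pslh,rule:pshuf-slh} at $\sizeof{\pseq} = 0$ (where $\IWslh{\rega}$ is a no-op whose new poison value is $\poisonof{\rega}$) --- either leaves the poison type unchanged or overwrites one entry with (a join of) entries that are $\healthy$ by assumption. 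Hence the successor type is $\healthy$, and the induction hypothesis finishes the run.

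I expect the main obstacle to be not any individual case but the bookkeeping that justifies the strengthened invariant: one has to verify that \emph{every} rule acting at stack depth $\geq 2$ --- notably \labelcref{rule:pstep}, \labelcref{rule:hspec}, \labelcref{rule:prb}, and the $\sizeof{\pseq} \neq 0$ instances of \labelcref{rule:pslh,rule:pshuf-slh} --- leaves the bottom entry of $\pseq$ (and the bottom configuration of $\nstatec$) intact. A lesser subtlety is the appeal to memory safety of $\prog$ at the source configuration, which is a non-initial state in general; this is legitimate because the product is run from an initial configuration and its source component only ever takes speculation-free steps of $\prog$, so it stays a reachable and hence safe state throughout.
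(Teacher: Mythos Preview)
Your proposal is correct and follows essentially the same idea the paper sketches in the paragraph preceding the lemma: the only rules that can introduce non-$\healthy$ values are \labelcref{rule:pintro-load,rule:pintro-store} (blocked by memory safety of $\prog$ at speculation-free reachable states) and \labelcref{rule:pshuf-slh} (which at depth one returns $\poisonof{\rega}=\healthy$), and every other rule preserves $\healthy$. Your explicit strengthening to an invariant about the bottommost entry of the poison stack is the right way to make this rigorous across intermediate speculating states, a detail the paper glosses over; the paper does not give a formal proof beyond that paragraph.
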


\begin{example}
    Consider the execution of $\prodra$ in \Cref{fig:prodrun}.
    It depicts the side by side execution from \Cref{example:ra-ptypes} in $\prodra$.
    Each state of $\prodra$ is depicted as a small table listing the program counters of $\prog$ and $\tra$,
    the values of the registers, and the values of relevant memory locations (we use $\btrue,\bfalse$ for Boolean typed values).
    The right column contains the poison type associated with the pair of states.
    The first transition simultaneously executes the instruction-matched assignment at \makepc{1} and~\makepc{a}.
    The second transition executes the shuffle instruction that spills \makereg{bytes} to $\makevar{stk}[0]$ at \makepc{b}, while $\prog$ waits.
    The third transition speculates.
    Notice that \labelcref{rule:hspec} only admits this transition because \makereg{a} is $\healthy$.
    The next transition is the unsafe store to \makevar{stk} in $\tra$,
    overwriting the spilled \makereg{bytes}.
    Notice that \labelcref{rule:pintro-store} has changed the poison value for \makereg{bytes} to $\poisoned$,
    because the memory location written to was $\regmapofof{\makepc{e}}{\makereg{bytes}} = (\makevar{stk},0)$.
    The rule also poisons the memory location where $\prog$ writes to.
    The last transition is the shuffle instruction that relocates the poisoned \makereg{bytes} to~\makereg{a}.
    The product is now stuck: \labelcref{rule:hspec} is disabled because the poison value for \makereg{bytes} is $\poisoned$ and thus not safe to leak in $\tra$.
    Indeed, $\regmapofof{\makepc{f}}{\makereg{bytes}} = \makereg{a}$ holds the secret value 42.
\end{example}
\begin{figure}
    \begin{tikzpicture}
        [every node/.style={inner sep=0pt,outer sep=0pt}]
        \tikzmath{ \distv = 1; \disth = 0.5; }

        \node[scale=0.7] (legend) {%
                \begin{NiceTabular}{r}[color-inside]
                    \hline
                    \\
                    \hline
                    \makereg{bytes} \\
                    \makereg{a} \\
                    \makereg{b} \\
                    \makereg{secret} \\
                    \makevar{buf}[0] \\
                    \makevar{stk}[0] \\
                    \hline
                \end{NiceTabular}
            };
        \node[scale=0.7,above=0 of legend] (legend2) {%
                \begin{NiceTabular}{r}[color-inside]
                    \hline
                    \makereg{bytes} \\
                    \makereg{a} \\
                    \makereg{b} \\
                    \makereg{secret} \\
                    \makevar{buf}[0] \\
                    \makevar{stk}[0]
                \end{NiceTabular}
            };
        \node[scale=0.7,anchor=south,right=0.2 of legend] (a) {%
                \begin{NiceTabular}{|r|r|l|}[color-inside]
                    \hline
                    \makepc{1} & \makepc{a} & \\
                    \hline
                    32 & 32 & $\healthy$\\
                    \deadcell $\bfalse$ & \deadcell $\bfalse$ & $\healthy$\\
                    8 & 8 & $\healthy$\\
                    42 & 42 & $\healthy$\\
                    0 & 0 & $\healthy$\\
                     & 0 & \\
                    \hline
                \end{NiceTabular}
            };
        \node[scale=0.7,anchor=south,right=\disth of a] (b) {%
                \begin{NiceTabular}{|r|r|l|}[color-inside]
                    \hline
                    \makepc{2} & \makepc{b} & \\
                    \hline
                    32 & 32 & $\healthy$\\
                    \cellcolor{peach!60}$\bfalse$ & \cellcolor{peach!60}$\bfalse$ & $\healthy$\\
                    8 & 8 & $\healthy$\\
                    42 & 42 & $\healthy$\\
                    0 & 0 & $\healthy$\\
                     & 0 \\
                    \hline
                \end{NiceTabular}
            };
        \node[scale=0.7,anchor=south,right=\disth of b] (c) {%
                \begin{NiceTabular}{|r|r|l|}[color-inside]
                    \hline
                    \makepc{2}& \makepc{c} & \\
                    \hline
                    32 & \deadcell 32 & $\healthy$\\
                    $\bfalse$ & $\bfalse$ & $\healthy$\\
                    8 & 8 & $\healthy$\\
                    42 & 42 & $\healthy$\\
                    0 & 0 & $\healthy$\\
                     & \cellcolor{peach!60}32 & \\
                    \hline
                \end{NiceTabular}
            };
        \node[scale=0.7,right=\disth of c] (d1) {%
                \begin{NiceTabular}{|r|r|l|}[color-inside]
                    \hline
                    \makepc{2}& \makepc{c} & \\
                    \hline
                    32 & \deadcell 32 & $\healthy$\\
                    $\bfalse$ & $\bfalse$ & $\healthy$\\
                    8 & 8 & $\healthy$\\
                    42 & 42 & $\healthy$\\
                    0 & 0 & $\healthy$\\
                     & 32 & \\
                    \hline
                \end{NiceTabular}
            };
        \node[scale=0.7,anchor=north,above=0 of d1,fill=peach!60] (d2) {%
                \begin{NiceTabular}{|r|r|l|}[color-inside]
                    \hline
                    \makepc{3} & \makepc{d} & \\
                    \hline
                    32 & \deadcell 32 & $\healthy$\\
                    \deadcell $\bfalse$ & \deadcell  $\bfalse$ & $\healthy$\\
                    8 & 8 & $\healthy$\\
                    42 & 42 & $\healthy$\\
                    0 & 0 & $\healthy$\\
                      & 32 &
                \end{NiceTabular}
            };

        \node[scale=0.7,right=\disth of d1] (e1) {%
                \begin{NiceTabular}{|r|r|l|}[color-inside]
                    \hline
                    \makepc{2}& \makepc{c} & \\
                    \hline
                    32 & \deadcell 32 & $\healthy$\\
                    $\bfalse$ & $\bfalse$ & $\healthy$\\
                    8 & 8 & $\healthy$\\
                    42 & 42 & $\healthy$\\
                    0 & 0 & $\healthy$\\
                      & 32 & \\
                    \hline
                \end{NiceTabular}
            };
        \node[scale=0.7,anchor=north,above=0 of e1] (e2) {%
                \begin{NiceTabular}{|r|r|l|}[color-inside]
                    \hline
                    \makepc{4} & \makepc{e} & \\
                    \hline
                    32 & \deadcell 32 & \cellcolor{peach!60}$\poisoned$\\
                    \deadcell $\bfalse$ & \deadcell  $\bfalse$ & $\healthy$\\
                    8 & 8 & $\healthy$\\
                    42 & 42 & $\healthy$\\
                    \cellcolor{peach!60} 42 & 0 & \cellcolor{peach!60} $\poisoned$\\
                     & \cellcolor{peach!60} 42
                \end{NiceTabular}
            };

        \node[scale=0.7,right=\disth of e1] (f1) {%
                \begin{NiceTabular}{|r|r|l|}[color-inside]
                    \hline
                    \makepc{2}& \makepc{c} & \\
                    \hline
                    32 & \deadcell 32 & $\healthy$\\
                    $\bfalse$ & $\bfalse$ & $\healthy$\\
                    8 & 8 & $\healthy$\\
                    42 & 42 & $\healthy$\\
                    0 & 0 & $\healthy$\\
                     & 32 & \\
                    \hline
                \end{NiceTabular}
            };
        \node[scale=0.7,anchor=north,above=0 of f1] (f2) {%
                \begin{NiceTabular}{|r|r|l|}[color-inside]
                    \hline
                    \makepc{4} & \textcolor{peach}{\makepc{f}} & \\
                    \hline
                    32 & \deadcell 32 & $\poisoned$\\
                    \deadcell $\bfalse$ & \cellcolor{peach!60} 42 & $\healthy$\\
                    8 & 8 & $\healthy$\\
                    42 & 42 & $\healthy$\\
                    42 & 0 & $\poisoned$ \\
                     & 42 & 
                \end{NiceTabular}
            };
        \tikzmath{ \offset = 0.03;}
        \draw[ntrans] ([shift={(0,-\offset)}]a.north east) -- ([shift={(0.1,-\offset)}]a.north east)
            -- node[right=0.2,scale=.7,anchor=south east,pos=0.9] {\rotatebox{270}{\labelcref{rule:pasgn}}} ([shift={(0.1,\offset)}]a.south east) -- ([shift={(0,\offset)}]b.south west);
        \draw[ntrans] ([shift={(0,-\offset)}]b.north east) -- ([shift={(0.1,-\offset)}]b.north east)
            -- node[right=0.2,scale=.7,anchor=south east,pos=0.9] {\rotatebox{270}{\labelcref{rule:pspill}}} ([shift={(0.1,\offset)}]b.south east) -- ([shift={(0,\offset)}]c.south west);
        \draw[ntrans] ([shift={(0,-\offset)}]c.north east) -- ([shift={(0.1,-\offset)}]c.north east)
            -- node[right=0.2,scale=.7,anchor=south east,pos=0.9] {\rotatebox{270}{\labelcref{rule:hspec}}} ([shift={(0.1,\offset)}]c.south east) -- ([shift={(0,\offset)}]d1.south west);
        \draw[ntrans] ([shift={(0,-\offset)}]d2.north east) -- ([shift={(-0.1,-\offset)}]e2.north west)
            -- node[left=0.2,scale=.7,anchor=north west,pos=0.1] {\rotatebox{270}{\labelcref{rule:pintro-store}}} ([shift={(-0.1,\offset)}]e2.south west) -- ([shift={(0,\offset)}]e2.south west);
        \draw[ntrans] ([shift={(0,-\offset)}]e2.north east) -- ([shift={(-0.1,-\offset)}]f2.north west)
            -- node[left=0.2,scale=.7,anchor=north west,pos=0.1] {\rotatebox{270}{\labelcref{rule:pfill}}} ([shift={(-0.1,\offset)}]f2.south west) -- ([shift={(0,\offset)}]f2.south west);
    \end{tikzpicture}
    \caption{\label{fig:prodrun} An execution of $\prodra$ on \Cref{code:ra}. Updated values are \textcolor{peach!60}{highlighted}. Dead registers are \textcolor{overlay2}{gray}.}
\end{figure}

\subsection{Static poison analysis of \texorpdfstring{$\prodra$}{the product}}

The poison product finds weaknesses:
Whenever a poisoned register's value would be leaked in~$\tra$,
the product cannot execute the transition.
To find the program points where a poisoned register can be leaked,
we design a static analysis that over-approximates the poison values any execution could produce.
The analysis constructs, for each program point pair $(\pc, \pcp)$, an approximate poison type~$\poison$.
This poison type is approximate in that a statically $\healthy$-typed register is always healthy in any execution that reaches $(\pc, \pcp)$ in a speculating state.
Similarly, if it is statically typed $\weakpoisoned$ it is always weakly poisoned in a speculating state.
However, if it is statically typed $\poisoned$ it might also be healthy or weakly poisoned.
Formally, the analysis constructs a function $\passigns : \prodpcs \to \ptypes$, where $\prodpcs$ are the program points of $\prodra$,
\begin{align*}
    \prodpcs & \defeq \setcond{(\pc, \pcp)}{\exists (\nstatec\lseq\statec, \nstated\lseq\stated, \pseq\lseq\poison).\, \pc \atpc \statec \land \pcp \atpc \stated} \\
             & = \setcond{(\pc, \pcp), (\pc, \pcssp)}{\instinjof{\pc} = \pcp \land \traof{\pcssp} = \shufseq{\pcp}}\,.
\end{align*}
The pairs $(\pc, \pcp)$ are instruction-matched program counters and $(\pc, \pcssp)$ are from shuffling states.
%

We design our analysis as a forward flow analysis (\Cref{eqn:flow-fwd}).
For that, we need to define a flow lattice $\lattice$, transfer functions, and the initial flow value $\flowinit$.
The flow analysis then yields $\passigns$ as a solution.
The lattice is constructed on $\ptypes$.
We create an ordering $\healthy < \poisoned$ and $\weakpoisoned < \poisoned$ on $\pvals$.
In order to arrive at a lattice,
we further extend it by an artificial least element $\bot \in \pvals$.\footnote{This is a standard construction. In the remainder, we assume that transfer functions preserve $\bot$.}
We then lift the ordering point-wise to poison types for the flow lattice $\lattice = (\ptypes, \leq)$.
The ordering is chosen with the intention that when $\poison \leq \poisonp$,
then $\statec \nrrmp{\poison} \stated$ implies $\statec \nrrmp{\poisonp} \stated$ (which would not be the case if we had set $\healthy < \weakpoisoned$).
The initial value is set to healthy, $\flowinit = \healthy$, because the initial states of $\prog$ and $\tra$ are fully equal up to $\regmap$.
The transfer functions $\ptransferat{(\pc, \pcp)} : \ptypes \to \ptypes$ need to approximate the poison types in $\prodra$.
Conceptually, they update the current poison type by simultaneously executing all updates that $\prodra$ could do.
For instruction-matched $\instinjof{\pc} = \pcp$ that means to look at the instruction $\inst = \progof{\pc}$ which is the same as $\traof{\pcp}$ up to $\regmap$.
Then, we poison all registers and memory locations that a rule for $\inst$ from \Cref{rules:pstack,rules:ploads,rules:pstores,rules:pspec} could poison.
This means the transfer function is solely dependent on the instruction $\inst = \progof{\pc}$, and we define it via $\ptransferat{(\pc, \pcp)} = \ptransferat{\inst}$ below.
For a shuffling product state, $\prodra$ offers for each instruction $\instp = \traof{\pcssp}$ only one update which we find in \Cref{rules:pshuffle}.
We again define transfer solely dependent on $\instp$ via $\ptransferat{(\pc, \pcssp)} = \ptransferpat{\instp}$.
For a shuffling $\instp = \IWslh{\regap}$ we assume that the source register for $\regap$ is $\rega$, i.e.\ $\regmapofof{\pcssp}{\rega} = \regap$.
We only present the interesting cases of $\ptransferat{\inst}$ and $\ptransferpat{\instp}$.
The initial poison type for $(\entry, \entryp)$ is healthy.
\begin{align*}
    &
    \begin{aligned}
        \ptransferof{\Iload{\rega}{\vara}{\regb}{\pcsuc}}{\poisons} & = \subst{\poisons}{\rega}{\poisoned} \\
        \ptransferof{\Islh{\rega}{\pcsuc}}{\poisons} & = \subst{\poisons}{\rega}{\healthy} \\
        \ptransferpof{\Islh{\regap}{\pcsuc}}{\poisons} & = \subst{\poisons}{\rega}{\weakpoisoned} \\
        \ptransferpof{\Isfence{\pcsuc}}{\poisons} &= \healthy
    \end{aligned}
    &
    \begin{aligned}
        \ptransferof{\Iasgn{\rega}{\regb \op \regc}{\pcsuc}}{\poisons} & =
        \begin{cases}
            \subst{\poisons}{\rega}{\healthy} & \poisonsof{\regb} = \poisonsof{\regc} = \healthy \\
            \subst{\poisons}{\rega}{\poisoned} & \text{otherwise}
        \end{cases} \\
        \ptransferof{\Istore{\vara}{\regb}{\regc}{\pcsuc}}{\poisons} & =
        \subst{\substjoin{\poisons}{\mems}{\poisonsof{\regc}}}{\regs, \vara}{\poisoned} \\
    \end{aligned}
\end{align*}
All transfer functions are monotonic.
All but one transfer function are easily defined in order to approximate the rules of $\prodra$.
The exception is for $\IWstore{\vara}{\regb}{\regc}$ instructions which we explain.
The first substitution $\substjoin{\poisons}{\mems}{\poisonsof{\regc}}$ sets all memory locations $(\varb, \adr)$ to $\poisonsof{(\varb, \adr)} \join \poisonof{\regc}$.
This approximates \labelcref{rule:pload-hsafe,,rule:pload-hunsafe}, because both $\prog$ and $\tra$ store to the same location which will have poison value $\poisonof{\regc}$.
All other locations maintain their poison value.
To approximate this behavior statically, we take the join on the two poison values.
The second substitution sets all of $\vara$ and all registers $\regs$ to $\poisoned$.%
\footnote{We could be more precise and poison only those registers spilled at the current program point.}
This approximates \labelcref{rule:pintro-store}: $\regs$ needs to be poisoned because a store to the stack overwrites the contents of a register in the source program.
At the same time, the rule overwrites $\vara$ in the source program $\prog$, which leads to $\vara$ being poisoned as well.
Additionally, \labelcref{rule:pstore-ss,rule:pstore-us} write to $\vara$ only, meaning the poisoning of $\vara$ already approximates them as well.

In order to formally express how a solution $\passigns$ approximates the poison values of all reachable states in $\prodra$,
we introduce shorthand notations for sequences of poison types fully consisting of poison types from $\passigns$.
Fix a solution $\passigns$ to the flow equations.
We define:
\begin{mathpar}[\lineskiplimit=0.6em\lineskip=0.5em plus 0.2em]
    \pseqofof{\statec}{\stated} = \healthy
    \and
    \pseqofof{\nstatec\lseq\statec}{\nstated\lseq\stated} = \pseqofof{\nstatec}{\nstated} \lseq \passignsof{(\statec, \stated)}
    \\
    \nstatec \defrelrm \nstated  \Leftrightarrow \nstatec \nrrmp{\pseqofof{\nstatec}{\nstated}} \nstated
    \and
    \nstatec\lseq\statec \defrrmp{\poison} \nstated\lseq\stated  \Leftrightarrow \nstatec\lseq\statec \nrrmp{\pseqofof{\nstatec}{\nstated}\lseq\poison} \nstated\lseq\stated 
    \and
    (\nstatec\lseq\statec, \nstated\lseq\stated, \poison) = (\nstatec\lseq\statec, \nstated\lseq\stated, \pseqofof{\nstatec}{\nstated}\lseq\poison)
\end{mathpar}
The notation $\pseqofof{\nstatec}{\nstated}$, where $\nstatec$, $\nstated$ are speculating states,
stands for a sequence of poison types of length $\sizeof{\pseqofof{\nstatec}{\nstated}} = \sizeof{\nstatec} = \sizeof{\nstated}$.
Intuitively, $\pseqofof{\nstatec}{\nstated}$ consists of the poison types of $\passigns$, applied to the program counters of $\nstatec$ and $\nstated$ at every level.
The only exception is the lowest level in the speculating states.
They are always set to $\healthy$, because we know from \Cref{lem:spec-free-prod}
that those states are reachable speculation-free
and can never have poisoned registers or memory locations.
The notation $\nstatec \relrm \nstated$, $\nstatec \rrmp{\poison} \nstated$, and $(\nstatec\lseq\statec, \nstated\lseq\stated, \poison)$
are shorthand notations to avoid a lot of repeating $\pseqofof{\nstatec}{\nstated}$.
%
%
\begin{lemma}\label{lem:solutionapprox}
    Whenever $(\nstatec, \nstated, \pseqofof{\nstatec}{\nstated}) \sprodtranss{\leakp}{\directp}{\leak}{\direct} (\nstatee, \nstatef, \pseqp)$
    then $\pseqp \leq \pseqofof{\nstatee}{\nstatef}$.
\end{lemma}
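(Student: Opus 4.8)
The plan is to reduce the statement to a single transition of $\prodra$ and then settle it by a case analysis on which product rule is applied. For the reduction I would prove the slightly stronger claim in which the starting poison sequence need only lie \emph{below} $\pseqofof{\nstatec}{\nstated}$, not equal it: if $\pseq \leq \pseqofof{\nstatec}{\nstated}$ and $(\nstatec, \nstated, \pseq) \sprodtranss{\leakp}{\directp}{\leak}{\direct} (\nstatee, \nstatef, \pseqp)$, then $\pseqp \leq \pseqofof{\nstatee}{\nstatef}$. This strengthening is what makes an induction on the length of $\sprodtranss{}{}{}{}$ go through, since after one step one only knows that the new poison sequence is below $\pseqofof{\cdot}{\cdot}$ rather than equal to it; the lemma as stated is then the instance $\pseq = \pseqofof{\nstatec}{\nstated}$.

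For a single step $(\nstatec, \nstated, \pseq) \sprodtrans{\leakp}{\directp}{\leak}{\direct} (\nstatee, \nstatef, \pseqp)$ I would observe that every rule of \Cref{rules:pstack,rules:ploads,rules:pstores,rules:pspec,rules:pshuffle}, together with the appendix rules for no-ops, assignments and branchings, does one of three things to the speculation stack: it leaves the stack shape intact and rewrites only the top poison type, moving the top state from a program-point pair $(\pc,\pcp)$ to a successor pair $(\pc',\pcp')$ (this is \Cref{rule:pstep} for instruction-matched states and the shuffle rules for shuffling states); it pops the top level (\Cref{rule:prb}); or it duplicates the top level onto a fresh speculation level (\Cref{rule:hspec}). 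In the first, main case I would use three facts: the rule's update of the top poison type is pointwise dominated by the transfer function of the executed instruction --- this is verified rule by rule against the definitions of the transfer functions, the delicate instance being $\IWstore{\vara}{\regb}{\regc}$ whose transfer is deliberately coarse since the stored-to cell is not statically known; transfer functions are monotone; and $\passigns$ solves the forward flow inequalities \Cref{eqn:flow-fwd}, so the transfer function applied to $\passignsof{(\pc,\pcp)}$ is at most $\passignsof{(\pc',\pcp')}$. Chaining these with the hypothesis that $\pseq$'s top component is below $\passignsof{(\pc,\pcp)}$ shows that $\pseqp$'s top component is below $\passignsof{(\pc',\pcp')}$, which is exactly the top component of $\pseqofof{\nstatee}{\nstatef}$, while the lower levels are carried over unchanged and remain bounded by the hypothesis.

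The remaining two stack manipulations are easy. A rollback leaves a prefix of $\pseq$ untouched and $\nstatee,\nstatef$ keep the matching prefix of program points, so the bound is inherited. For \Cref{rule:hspec} the new top is a copy of the old top $\poison$, now positioned at the mispredicted branch target $(\pc',\pcp')$; since the branch transfer writes no register and merely propagates the flow value, $\poison$ is below $\passignsof{(\pc,\pcp)}$ which is below $\passignsof{(\pc',\pcp')}$, bounding the new level, and the old-now-second level stays bounded as before. The lowest speculation level needs its own argument, because there $\pseqofof{\nstatec}{\nstated}$ is fixed to $\healthy$ rather than read off $\passigns$: a stack of depth one executes as the speculation-free semantics, so by memory safety of $\prog$ the poison-introducing rules \Cref{rule:pintro-load,rule:pintro-store} cannot fire there, and the $\IWslh{\rega}$ rules \Cref{rule:pslh,rule:pshuf-slh} leave the poison type untouched when not speculating; hence the level-$0$ component stays $\healthy$, as needed (by the argument underlying \Cref{lem:spec-free-prod}).

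I expect the main obstacle to be the breadth and delicacy of the case analysis rather than any single conceptual hurdle: for every product rule one must check that the executed instruction's transfer function over-approximates that rule's poison update, while simultaneously tracking the three kinds of stack manipulation, the special status of the lowest level, and the fact that $\weakpoisoned$ and $\healthy$ are incomparable in the poison lattice. The $\IWstore{\vara}{\regb}{\regc}$ transfer, which poisons all of the written variable and all registers, and the transfer for a shuffle $\IWslh{\rega}$, which introduces $\weakpoisoned$, are the two places where this bookkeeping is most error-prone.
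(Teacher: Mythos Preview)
Your proposal is correct and follows essentially the same route as the paper: induction over the transition sequence plus a rule-by-rule comparison of each product update against the corresponding transfer function, closing via monotonicity of the transfers and the forward flow inequalities \labelcref{eqn:flow-fwd}. You are in fact more explicit than the paper on two points the paper glosses over: the need to strengthen the induction hypothesis to $\pseq \leq \pseqofof{\nstatec}{\nstated}$ so that chaining works, and the separate treatment of the depth-one (speculation-free) level where $\pseqofof{}{}$ is $\healthy$ by definition and one must appeal to \Cref{lem:spec-free-prod} rather than to the flow solution.
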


\subsection{Fixing Register Allocation}

We can use our static analysis solution $\passigns$ to identify whether $\tra$ has weaknesses:
If $\passigns$ guarantees that leakages are never poisonous, then $\tra$ has no register allocation induced weaknesses.

\begin{definition}
    A register allocation $(\instinj, \regmap)$ between $\prog$ and $\tra$ is poison-typable,
    if there is a solution $\passigns$ to \labelcref{eqn:flow-fwd}
    which for every $(\pc, \pcp)$ with $\instinjof{\pc} = \pcp$
    satisfies the additional constraints
    \begin{align*}
        \poisonsof{\regb} &= \weakpoisoned \lor \poisonof{\regb} = \healthy &&\text{if} &
        \progof{\pc} &= \Iload{\rega}{\vara}{\regb}{\pcsuc}
            \lor \progof{\pc} = \Istore{\vara}{\regb}{\regc}{\pcsuc}\;,
            \\
        \poisonsof{\regb} & = \healthy
                        &&\text{if} &
        \progof{\pc} &= \Iif{\regb}{\pcsuc_{\btrue}}{\pcsuc_{\bfalse}}\;.
    \end{align*}
\end{definition}

Our fix is applicable to any register allocation $(\instinj, \regmap)$ between $\prog$ and $\tra$:
We check if $(\instinj, \regmap)$ is poison-typable.
For that, we solve the flow analysis (e.g. \parencite{kildallUnifiedApproachGlobal1973}) to obtain a static poison assignment $\passigns$ and check the additional constraints.
If it is not poison-typable, then an additional constraint is violated for some program point $(\pc, \pcp)$ and register $\regbp$ of $\prodra$.
We insert an $\IWsfence$ or $\IWslh{\regbp}$ instruction into $\tra$ at the end of the shuffle sequence right before $\pcp$ and obtain a new register allocation where that additional constraint is now satisfied.
We then repeat the process until we obtain a poison-typable register allocation.

\subsection{Poison-typable Register Allocation is \texorpdfstring{$\propsnip$}{SNiP}}
We now prove that making register allocation poison-typable already makes it $\propsnip$.
With our proof method from \Cref{thm:snippysound},
this reduces to crafting a snippy simulation.
Again, the crafted simulation is parametric, so that it works for all poison-typable register allocations.
\begin{theorem}
    If $(\instinj, \regmap)$ is a poison-typeable register allocation between $\prog$ and $\tra$,
    then there exists a snippy simulation $(\simrel, \dtffamily)$ between $\tra$ and $\prog$.
\end{theorem}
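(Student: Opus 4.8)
The plan is to build the snippy simulation directly out of the poison product $\prodra$. I would take $\nstated \simrel \nstatec$ to hold exactly when the triple $(\nstatec, \nstated, \pseqofof{\nstatec}{\nstated})$ is reachable in $\prodra$ from some initial product state, where an initial product state pairs an initial target state of $\tra$ with its $\regmap$-preimage in $\prog$ and the healthy poison type. For the directive transformation I would set $\dtracee \mathrel{\dtfat{\nstatec}{\nstated}} \dtraced$ precisely when $\prodra$ has a run $(\nstatec, \nstated, \pseqofof{\nstatec}{\nstated}) \sprodtranss{\tracek}{\dtracee}{\tracel}{\dtraced} (\nstatee, \nstatef, \pseqp)$, for suitable $\tracek, \tracel, \pseqp$, that ends again at an instruction-matched pair, fixing a canonical choice wherever a rule such as \labelcref{rule:pintro-load} leaves $\prog$ freedom in the unsafe access. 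A simulation interval then corresponds to one such run --- the remaining shuffle transitions of \Cref{rules:pshuffle} (in which $\prog$ stutters) followed by one instruction-matched transition, or a single rollback transition \labelcref{rule:prb} --- which in either case makes $\prog$ take at least one step, as \Cref{rules:simulation} requires, so the run never has to stop in the middle of a shuffle sequence.

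I would then discharge the three obligations of \Cref{def:snippy}: that $(\simrel, \dtffamily)$ is a simulation, that it is snippy, and that it respects $\lows$. For the simulation property I would argue coinductively along $\prodra$: given $\nstated \simrel \nstatec$ via a reachable triple and any target step $\nstated \ntrans{\leak}{\direct} \nstatef$, I must exhibit a matching product transition. The crux is that poison-typability keeps $\prodra$ from getting stuck: whenever $\tra$ leaks a value --- a memory address or a branch condition --- poison-typability forces the corresponding source register to be statically typed $\healthy$, or $\weakpoisoned$ for addresses, so by \Cref{lem:solutionapprox} its actual poison value lies in $\set{\healthy, \weakpoisoned}$, which is exactly the side condition that fires the relevant rule of \Cref{rules:pstack,rules:ploads,rules:pstores,rules:pspec,rules:pshuffle}. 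Memory safety of $\prog$ under the speculation-free semantics, together with the freedom of $\prog$ to access memory unsafely while speculating, supplies the source move; \Cref{lem:prod-well-def} certifies that the successor triple is again valid, and it remains reachable, so the simulation relation is re-established. Rollbacks are handled uniformly by \labelcref{rule:prb}, and $\lows$-respecting follows because an initial product state carries the healthy poison type, so $\statec \relrm \stated$ forces agreement up to $\regmap$, and $\regmap$ preserves security levels with the scratch variable $\varstack$ public on the target.

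Snippyness is the heart of the argument. Given $\nstatec_1 \samepoint \nstatec_2$, $\nstated_1 \samepoint \nstated_2$ with $\nstated_i \simrel \nstatec_i$, a simulation interval $(\nstatec_1, \nstated_1) \synctrans{\tracek}{\dtracee}{\tracel}{\dtraced} (\nstatee_1, \nstatef_1)$, and $\nstatec_2 \ntranss{\tracek}{\dtracee} \nstatee_2$, I would show the whole interval transfers to index $2$. The key observation is that under poison-typability the shape of a product run out of a pair of states --- which directives are taken, the leakages produced, and how the poison sequence evolves --- depends only on the two program counters, since $\pseqofof{\cdot}{\cdot}$ is a function of program counters, and, for the value-carrying leakages, only on the values of $\healthy$-typed registers, which by $\nstatec_i \relrm \nstated_i$ coincide with the matching source-register values. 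Hence $\nstated_2$ can execute the same target directives $\dtraced$; its leakage equals $\tracel$ because every leaked address or branch condition in $\tra$ is either a constant ($\weakpoisoned$ addresses become $0$) or equal to the matching source value, while $\nstatec_1 \samepoint \nstatec_2$ executing $\dtracee$ with the common source leakage $\tracek$ pins down those source values and hence the target leakages. \Cref{lem:programpoint-by-leakage} then gives $\nstatee_1 \samepoint \nstatee_2$ and $\nstatef_1 \samepoint \nstatef_2$, and mirroring the index-$1$ product run (same program counters, same poison sequences, same $\healthy$-register values) produces a run $(\nstatec_2, \nstated_2, \pseqofof{\nstatec_2}{\nstated_2}) \sprodtranss{\tracek}{\dtracee}{\tracel}{\dtraced} (\nstatee_2, \nstatef_2, \pseqp)$ ending at an instruction-matched pair, i.e.\ exactly the simulation interval $(\nstatec_2, \nstated_2) \synctrans{\tracek}{\dtracee}{\tracel}{\dtraced} (\nstatee_2, \nstatef_2)$.

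I expect the main obstacle to be this rigidity claim --- that poison-typability makes the product's behaviour out of a pair of states depend only on the pair's program counters, modulo the harmless dependence on $\healthy$-register values --- together with the bookkeeping needed to cleanly package the stuttering shuffle transitions and the rollback transitions into simulation intervals that respect the guarded/unguarded discipline of \Cref{rules:simulation}. Once this is in place, the remainder is a rule-by-rule comparison of \Cref{rules:pstack,rules:ploads,rules:pstores,rules:pspec,rules:pshuffle} against \Cref{rules:spec-free,rules:spec}, using \Cref{lem:solutionapprox,lem:prod-well-def,lem:spec-free-prod}.
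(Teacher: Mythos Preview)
Your plan is the paper's: build $(\simrel,\dtffamily)$ straight out of the poison product, let simulation intervals be the product's instruction-matched--to--instruction-matched runs (the paper calls these $\instinj$-intervals), use poison-typability to show the product never blocks on a target step (your simulation argument; the paper's \Cref{lem:syncedrelrmp,lem:shufflerelrmp}), and use the fact that the product's behaviour out of a pair is determined by program counters plus $\healthy$-register values to transfer an interval between same-point pairs (your rigidity claim; the paper's \Cref{lem:repeat-instinj,lem:repeat-shuffle}).

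One concrete slip would bite you. You describe an interval as ``the remaining shuffle transitions \ldots\ followed by one instruction-matched transition''. From an instruction-matched pair there are no pending shuffles: the target's current instruction \emph{is} the matched one, and the shuffle sequence begins only \emph{after} it executes. The paper's $\instinj$-interval therefore runs the matched instruction first on both sides, then lets the target traverse the shuffle sequence while the source stutters, and stops either when the next matched program counter is reached or when a rollback is taken mid-sequence (the source then rolls back too). With your order the interval would terminate right after the matched step, with the target possibly sitting at a shuffle program counter, so you would not land back in $\simrel$. Your separate ``single rollback'' interval is fine and is already a one-step $\instinj$-interval under the paper's definition.

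A smaller point: defining $\simrel$ via reachability of $(\nstatec,\nstated,\pseqofof{\nstatec}{\nstated})$ is needlessly delicate, because the product tracks the \emph{actual} poison evolution, which by \Cref{lem:solutionapprox} is only $\leq \pseqofof{\nstatec}{\nstated}$, not equal. The paper instead takes $\simrel$ to be the closed condition $\nstatec \rinstinj \nstated \wedge \nstatec \relrm \nstated$; no reachability needs to be threaded through, and snippyness reduces to a local case analysis on the product rule, using only that $\pseqofof{\cdot}{\cdot}$ depends on program counters alone.
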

For the remainder of the section, fix a poison-typable register allocation $(\regmap, \instinj)$ between $\prog$ and $\tra$,
and the static poison assignment $\passigns$.
Further, let $\entry$ be the entry point for $\prog$ and $\entryp$ for~$\tra$.

\subsubsection*{Defining \texorpdfstring{$(\simrel, \dtffamily)$}{(≺,◃)}}
\begin{figure}
    \begin{tikzpicture}
        [
        intervalpath1/.style = {line width=0.8pt, teal,rounded corners, shorten <=4pt, shorten >=5pt},
        intervalpath2/.style = {line width=0.8pt, red,rounded corners, shorten <=4pt, shorten >=5pt},
        ntranscube/.style = {->,line width=1pt,dashed,dash pattern=on 3pt off 3pt}
        ]
       \tikzmath{\disthorizontal = 2; \distvertical = 0.7; }

        \node (c) {$\nstatec$};
        \node[right=\disthorizontal of c] (e) {$\nstatee$};
        \node[below=0.8 of c,anchor=south west,draw] (proglabel) {$\prog$};
        \node[below=\distvertical of e] (cp) {$\nstatee\rlap{$'$}$};

        \node[right=\disthorizontal of e] (d) {$\nstated$};
        \node[right=\disthorizontal of d] (h) {$\nstateh$};
        \node[right=\disthorizontal of h] (dots) {$\ldots\vphantom{X}$};
        \node[right=\disthorizontal of dots] (f) {$\nstatef$};
        \node[below=\distvertical of h] (hp) {$\nstateh\rlap{$'$}$};
        \node[below=0.8 of f,anchor=south east,draw] (proglabeltra) {$\tra$};

        \draw[intervalpath1] (c.center) -- (c.south east) -- (e.south west) -- (cp.north west) -- (cp.center);
        \draw[intervalpath2] (c.center) -- ([shift={(0,0.05)}]c.south east) -- ([shift={(0,0.05)}]e.south west) -- (e.center);
        \draw[intervalpath1] (d.center) -- (d.south east) -- (h.south west) -- (hp.north west) -- (hp.center);
        \draw[intervalpath1] (d.center) -- (d.south east) [rounded corners=10pt] -- ([shift={(-.37,0)}]dots.south west) [rounded corners] -- (hp.north east) -- (hp.center);
        \draw[intervalpath1] (d.center) -- (d.south east) [rounded corners=10pt] -- ([shift={(-0.01,0)}]dots.south) [rounded corners] -- ([shift={(0.1,-0.05)}]hp.east) -- ([shift={(0,-0.05)}]hp.center);
        \draw[intervalpath2] (d.center) -- ([shift={(0,0.05)}]d.south east) -- ([shift={(0,0.05)}]f.south west) -- (f.center);
        \draw[ntranscube] (c) -- node[above] {$\smash{\dtpair{\leakp}{\directp}}$} (e);
        \draw[ntranscube] (d) -- node[above] {$\smash{\dtpair{\leak}{\direct}}$} (h);
        \draw[ntranscube] (h) -- node[above] {$\smash{\dtpair{\leak_1}{\Dstep}}$} (dots);
        \draw[ntranscube] (dots) -- node[above] {$\smash{\dtpair{\leak_m}{\Dstep}}$} (f);

        \draw[ntranscube] (e) -- node[right] {$\smash{\dtpair{\Erb}{\Drb}}$} (cp);
        \draw[ntranscube] (h) -- node[right] {$\smash{\dtpair{\Erb}{\Drb}}$} (hp);
        \draw ([shift={(0,0.1)}]h.north) -- ([shift={(0,0.2)}]h.north) -- node[below=-1pt] {\tiny shuffle only} ([shift={(0,0.2)}]f.north) -- ([shift={(0,0.1)}]f.north);

        \draw[ntranscube] (dots) -- node[fill=white,below right] {$\dtpair{\Erb}{\Drb}$} (hp);
    \end{tikzpicture}
    \caption{\label{fig:phiinterval} The shape of $\instinj$-intervals. \textcolor{teal}{Teal} and \textcolor{red}{red} paths form separate intervals.}
\end{figure}
We define ${\simrel} \subseteq \nstatesof{\tra} \times \nstatesof{\prog}$ similar to the simulation for register allocation without speculative semantics:
Source state $\nstatec$ and target state $\nstated$ are instruction-matched and coincide in values up to relocation by $\regmap$.
The difference is that the states do not need to coincide on registers and memory locations poisoned by $\passigns$:
\begin{align*}
    \nstated \simrel \nstatec & \ifftxt \nstatec \rinstinj \nstated \land \nstatec \relrm \nstated \,.
\end{align*}
For the directive transformation $\dtfat{\nstatec}{\nstated}$,
we rely on $\prodra$'s transitions.
We say that a transition sequence $(\nstatec, \nstated, \poison) \sprodtranss{\tracek}{\dtracee}{\tracel}{\dtraced} (\nstatee, \nstatef, \poisonp)$ is a $\instinj$-interval
if $\nstatec \rinstinj \nstated$, $\nstatee \rinstinj \nstatef$, and no intermediary state pairs are instruction matched.
The intent is that $\instinj$-intervals are precisely the simulation intervals once we have proven that our defined relation is a simulation.
\Cref{fig:phiinterval} depicts the shape of $\instinj$-intervals:
The product first executes the instruction-matched instruction on both $\nstatec$ and $\nstated$.
Then, the target program can perform any number of shuffle steps,
until either the shuffle sequence is fully executed (the pair of \textcolor{red}{red} paths),
or a rollback happened before that (any pair of \textcolor{teal}{teal} paths).
In that case, $\prodra$ rolls back on $\nstatee$ as well.
We use the $\instinj$-intervals as directive transformations,
\[
    \dtfat{\nstatec}{\nstated} \quad\defeq\quad
    \setcond{(\dtracee, \dtraced)}
    {\text{$(\nstatec, \nstated, \poison) \sprodtranss{\tracek}{\dtracee}{\tracel}{\dtraced} (\nstatee, \nstatef, \poisonp)$ is a $\instinj$-interval}} \, .
\]
\begin{lemma}\label{lem:trasnippy}
    $(\simrel, \dtffamily)$ is a snippy simulation between $\tra$ and $\prog$.
\end{lemma}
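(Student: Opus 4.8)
The plan is to check the three clauses of \Cref{def:snippy} in turn — that $(\simrel,\dtffamily)$ is a simulation (\Cref{def:simulation}), that it is $\lows$-respecting (\Cref{def:loweqrespecting}), and that it satisfies the snippy diagram — with essentially all the work concentrated in one auxiliary statement about $\prodra$, which I will call \emph{target totality}: for every instruction-matched or shuffling state $(\nstatec,\nstated,\pseqofof{\nstatec}{\nstated})$ with $\nstatec\relrm\nstated$ and every transition $\nstated\ntrans{\leak}{\direct}\nstatef$ of $\tra$ there is a product step $(\nstatec,\nstated,\pseqofof{\nstatec}{\nstated})\sprodtrans{\tracek}{\dtracee}{\leak}{\direct}(\nstatee,\nstatef,\poisonp)$ whose source side $\nstatec\ntrans{\tracek}{\dtracee}\nstatee$ is a genuine $\prog$-step. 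I expect the proof of target totality to be the main obstacle: it is a case analysis over the instruction $\tra$ executes at $\nstated$, following \Cref{rules:pstack,rules:ploads,rules:pstores,rules:pspec,rules:pshuffle}, and at every leaking instruction — memory accesses and branches — it consumes a poison-typability constraint, which forces the relevant addressing or branching source register to be typed $\healthy$ by $\passigns$ (so, by $\nstatec\relrm\nstated$, it holds the same value in $\nstatec$ as its relocation in $\nstated$) or $\weakpoisoned$ (so its relocation holds $0$ in $\nstated$, making the target access safe); since additionally the bottom speculation level is always healthy (\Cref{lem:spec-free-prod}, which is why $\pseqofof{\cdot}{\cdot}$ places $\healthy$ there), one of the product rules always applies, and the nondeterministic choice of an unsafe source access in \labelcref{rule:pintro-load,rule:pintro-store} stays unconstrained.

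Given target totality, the simulation clause follows by building the derivation of \Cref{rules:simulation} one $\instinj$-interval at a time. For $\nstated\simrel\nstatec$ I explore via \labelcref{rule:tgt} all target runs from $\nstated$ up to the first instruction-matched (or final) state; this is finite because an $\instinj$-interval consists of the single matched step — possibly a misprediction via \labelcref{rule:hspec} — followed by shuffle-only steps, which introduce no new speculation since shuffle sequences are finite straight-line code without branches, and is then terminated by exhausting the shuffle sequence or by a rollback to the branch point, which is again instruction-matched (the shape is \Cref{fig:phiinterval}). Target totality turns each explored $\dtraced$ into a product $\instinj$-interval whose source directives $\dtracee$ are $\prog$-executable; I plug $(\dtracee,\dtraced)\in\dtfat{\nstatec}{\nstated}$ into \labelcref{rule:direct-tf}, replay $\dtracee$ with \labelcref{rule:src} — arriving, by \Cref{lemma:directive-det}, at exactly the product's $\nstatee$ — and close with \labelcref{rule:coind}; its side condition $\nstatef\simrel\nstatee$ holds since \Cref{lem:prod-well-def} gives $\nstatee\nrrmp{\poisonp}\nstatef$, \Cref{lem:solutionapprox} gives $\poisonp\le\pseqofof{\nstatee}{\nstatef}$, and monotonicity of $\nrrmp$ in the poison argument upgrades this to $\nstatee\relrm\nstatef$, while $\nstatee\rinstinj\nstatef$ is the terminal condition of the interval. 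Final states match because instruction matching carries $\Iexit$ to $\Iexit$, so \labelcref{rule:final} applies; each interval is nontrivial on both sides, so guardedness is respected. As a by-product, simulation intervals coincide with the $\instinj$-intervals of $\prodra$.

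For the snippy diagram, take $\nstatec_1\samepoint\nstatec_2$, $\nstated_1\samepoint\nstated_2$ with $\nstated_i\simrel\nstatec_i$, a simulation interval $(\nstatec_1,\nstated_1)\synctrans{\tracek}{\dtracee}{\tracel}{\dtraced}(\nstatee_1,\nstatef_1)$ — i.e.\ an $\instinj$-interval from $(\nstatec_1,\nstated_1,\pseqofof{\nstatec_1}{\nstated_1})$ — and the hypothesis $\nstatec_2\ntranss{\tracek}{\dtracee}\nstatee_2$. Since $\nstatec_1,\nstatec_2$ and $\nstated_1,\nstated_2$ share program counters and $\passigns$ depends only on program counters, $\pseqofof{\nstatec_1}{\nstated_1}=\pseqofof{\nstatec_2}{\nstated_2}$. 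I then replay the $\nstated_1$-interval step by step from $(\nstatec_2,\nstated_2,\pseqofof{\nstatec_2}{\nstated_2})$: at each step the applicable product rule is pinned down by the current instruction and poison types (shared, by same program point and same poison sequence) and, for memory accesses and branches, by the value of a $\healthy$ source register and whether the matching $\prog$-step is safe or unsafe — and $\nstatec_2\ntranss{\tracek}{\dtracee}\nstatee_2$ records exactly these data (the same source leakages and the same nondeterministic unsafe-access choices used in the $\nstatec_1$-interval), while the unsafe target accesses in \labelcref{rule:pintro-load,rule:pintro-store} remain executable because the $\healthy$ offset register carries the same out-of-bounds value in $\nstatec_2$ and hence in $\nstated_2$. \Cref{lem:programpoint-by-leakage} keeps this run in lock-step program-point-wise with the $\nstatec_1$-run, so it emits the same $\tracel$ and $\dtraced$, its intermediate states have the same matched/unmatched status, and its endpoint $(\nstatee_2,\nstatef_2,\poisonp_2)$ is instruction-matched; it is therefore an $\instinj$-interval, so $(\dtracee,\dtraced)\in\dtfat{\nstatec_2}{\nstated_2}$ and we obtain $(\nstatec_2,\nstated_2)\synctrans{\tracek}{\dtracee}{\tracel}{\dtraced}(\nstatee_2,\nstatef_2)$ with $\nstatef_2\simrel\nstatee_2$ again by \Cref{lem:prod-well-def} and \Cref{lem:solutionapprox}.

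Finally, $\lows$-respecting is argued as for dead code elimination (\Cref{thm:dcsnippy}), modulo the same kind of mild restriction on initial states — here, that the entry relocation $\regmap$ allocates every register and that initial $\tra$-states fix the spill frame $\varstack$ — under which each initial $\stated$ of $\tra$ determines a unique initial $\statec$ of $\prog$ with $\stated\simrel\statec$ (reading registers and non-$\varstack$ memory back through $\regmap$) and $\seceq$ corresponds in both directions. Together with \Cref{thm:snippysound} this also yields the theorem.
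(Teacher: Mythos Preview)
Your plan coincides with the paper's: your ``target totality'' packages the paper's \Cref{lem:syncedrelrmp} (the matched first step from $\pseqofof{\nstatec}{\nstated}$) and \Cref{lem:shufflerelrmp} (the shuffle tail) into one single-step statement, and your step-by-step snippy replay is exactly \Cref{lem:repeat-instinj} plus \Cref{lem:repeat-shuffle}; the use of \Cref{lem:prod-well-def,lem:solutionapprox} to recover $\nstatef\simrel\nstatee$, the identification of simulation intervals with $\instinj$-intervals, and the red/teal shape of \Cref{fig:phiinterval} are all the same.

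One point to tighten: as you phrase target totality, every step fires from a fresh $\pseqofof{\cdot}{\cdot}$, but to chain these into a single product run --- which is what an $\instinj$-interval and hence $\dtfat{\nstatec}{\nstated}$ require --- the shuffle steps must fire from whatever $\poisonp$ the matched step actually produced, not from a reset $\pseqofof{\nstatee}{\nstateh}$. The paper handles this by stating \Cref{lem:shufflerelrmp} for an arbitrary top-level $\poison$; you can do the same, since the shuffle rules in \Cref{rules:pshuffle} (and \labelcref{rule:prb}) carry no equality preconditions on poison values and hence fire from any $\poison$ once the target step exists. With that adjustment your argument goes through unchanged.
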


\section{Related Work}
\label{section:related}

We already discussed the closely related work in the context of compiler correctness.
Here, we give a broader picture and elaborate on methods for proving non-interference for single programs.
Note the difference: when reasoning about compiler passes, we reason over all programs.
For a broader overview, we defer the reader to a recent survey \parencite{cauligiSoKPracticalFoundations2022}.

\subsubsection*{Speculation Sources}

Since its discovery in 2018,
Spectre attacks have been rediscovered in multiple variants.
The main difference between the variants lies in the hardware feature that is trained in order to trigger a misspeculation.
We call the respective feature the source of speculation.
The first version of the attack trains the Prediction History Table (\spht) of the processor, in order to inflict a mispredicted branching speculation \parencite{kocherSpectreAttacksExploiting2019}.
%
Other variants train the Branch Target Buffer (\sbtb) to mispredict indirect branching instructions \parencite{kocherSpectreAttacksExploiting2019},
the Return Stack Buffer~(\srsb) to mispredict return points \parencite{koruyehSpectreReturnsSpeculation2018},
both of which highjack the speculative control flow to execute leaking gadgets speculatively.
They can be mitigated in software with a \texttt{retpoline} gadget \cite{turnerRetpolineSoftwareConstruct2018}.
The Speculative Store Bypass (\sssb) mechanism, also called Store-To-Load Forwarding (\sstl), reads from memory even though pending stores have unresolved addresses \parencite{hornSpeculativeExecutionVariant2018},
and the Predictive Store Forwarding (\spsf) mechanism forwards pending stores to loads with unresolved address \parencite{guancialeInSpectreBreakingFixing2020}.
These mechanisms speculatively load values that either should have been overwritten in the meantime, or should never arrive in memory at the loaded address,
creating further potential for unwanted information-flow.
Memory speculation sources can be disabled in hardware with mediocre performance penalty.
The Meltdown attack introduces speculation through an out-of-order read from elevated-permission memory regions, racing against the MMU to detect the violation before the read memory can be leaked~\parencite{lippMeltdownReadingKernel2018}.
While the speculation sources are varying, all attack variants leak the secrets through the side-channels covered by the constant-time leakage model \parencite{guarnieriHardwareSoftwareContractsSecure2021}.
In this paper, we consider only the \spht{} speculation source.
We believe that our notion of simulation also holds for other speculation sources, but we are less sure about what compiler passes satisfy $\propsnip$ when they are considered.

\subsubsection*{Properties}

The constant time programming guideline requires no influence of sensitive data towards the leakage observable by the attacker.
Formally, \emph{non-interference} \parencite{goguenSecurityPoliciesSecurity1982} (\Cref{def:sni}) expresses this property.
%
%
%
Non-interference is a hyper-property \parencite{clarksonHyperproperties2010}:
It requires to reason about two executions of the program.
Hyper-properties tend to be harder to verify than single trace properties due to synchronization issues with the traces compared.
Non-interference for side-channel leakage evades these issues:
The constant time leakage model exposes the program counter to the attacker.
Thus, when two traces from attacker-indistinguishable initial states do not coincide in their control flow, the program can immediately be rejected as insecure.
Such a high degree of synchronization allows for a sound approximation we call taint safety \parencite{sabelfeldLanguagebasedInformationflowSecurity2003,myersJFlowPracticalMostlystatic1999}.

\begin{tablecites}[table:tools]{l||X[r]|X[r]|X[r]|X[r]|X[r]|X[r]}{
        Tools that check a program against speculative non-interference:
        Pitchfork \cite{cauligiConstanttimeFoundationsNew2020},
        Spectector \cite{guarnieriSpectectorPrincipledDetection2020},
        RelSE \cite{danielHuntingHaunterEfficient2021},
        Blade~\cite{vassenaAutomaticallyEliminatingSpeculative2021},
        Jasmin~SCT \cite{bartheHighAssuranceCryptographySpectre2021},
        Typing V1 \cite{shivakumarTypingHighSpeedCryptography2023}.
    }{}
                    & Pitchfork
                    & Spectector
                    & RelSE
                    & Blade
                    & Jasmin~SCT
                    & Typing V1
                    \\ \hline
Source              & PHT, SSB      & PHT           & PHT, SSB  & PHT       & PHT           & PHT                   \\
Property            & TS            & NI            & NI        & TS        & TS            & TS                    \\
Method              & EX            & SE            & SE        & SA        & SA            & SA                    \\
Speculation         & SW            & SW            & SW / SB   & US        & US            & US                    \\
Directives          & Y             & Y (Oracle)    & N         & Y         & Y             & Y                     \\
Memory Safety       & U             & U             & U         & U         & SS            & S                     \\
Bug / Proof         & Bug           & Bug           & Bug       & Proof     & Proof         & Proof
\end{tablecites}
\subsubsection*{Tools}

Existing tools check single implementations for non-interference.
Our work on compiler transformations complements this line of work.
It is now possible to check source programs for non-interference,
and rely on the compiler that guarantees to preserve non-interference to the executed program for \emph{any source program}.
The tools deal with two main challenges:
Non-determinism from speculation and the two executions required for non-interference.
%
%
\Cref{table:tools} presents a list of tools and classifies their approaches.
The
\begin{enumerate*}[font=\bfseries]
    \item[Source] lists the considered speculation sources.
    \item[Property] is the formal property checked:
        \textsf{NI} is non-interference under speculative semantics.
        \textsf{TS} is taint safety.
    \item[Method] is either \textsf{SA} for static analysis such as flow- or type-systems,
        \textsf{SE} for symbolic execution,
        or \textsf{EX} for state space exploration.
    \item[Speculation] lists how the tool models and copes with speculation based non-determinism.
        \textsf{SW} means that the semantics have a bounded speculation window, i.e.\ a bound to the number of speculatively executed instructions.
        \textsf{SB} means that the semantics have a bounded store buffer that limits the number of speculatively executed store instructions.
        \textsf{US} means unbounded speculation, i.e.\ the semantics speculate arbitrarily long.
    \item[Mem Safety] describes the memory model and memory requirements on the source program.
        \textsf{U} stands for unstructured memory,
        \textsf{S} for structured memory and memory safety under speculation-free semantics,
        and \textsf{SS} is memory safety even under speculative semantics.
\end{enumerate*}

\textsf{Jasmin SCT} (\Cref{table:tools}) comes along with the \textsf{Jasmin} compiler (\Cref{table:compilerworks}),
that is proven to preserve $\propni$ under speculation-free semantics.
In \parencite{bartheHighAssuranceCryptographySpectre2021}, the authors suggest an additional requirement to cope with speculative semantics:
Source programs need to be memory safe even when executed under speculative semantics (\textsf{SS}).
However, there is no proof that the compiler preserves non-interference under this requirement.
\textsf{SS} is also a performance breaking requirement:
They report that \textsf{SS}-implementations are \textasciitilde{}20\% slower than the insecure reference implementations;
in contrast to \textasciitilde{}1\% overhead reported for recently protected implementations without \textsf{SS}.

\subsubsection*{Similarities and differences to a simultaneously developed proof technique}

Another proof method for preserving speculative side-channel security through compilers has been developed independently of our work \cite{arranzolmosPreservationSpeculativeConstanttime2025}.%
\footnote{To be appearing in the same conference.}
There are subtle differences in the definition of semantics, security property, and simulation:
First, their semantics comes without rollbacks.
The idea is that a difference in leakage can always be obtained by an execution that contains no rollback.
This result has been obtained previously \cite{bartheHighAssuranceCryptographySpectre2021}.
Second, their security property, \textsf{SCT}, is the same as $\propsni$ from this work.
However, the authors employ a slightly modified semantics which allows them to phrase the property differently.
Finally, our simulation constraint is a constant-time cube (\Cref{fig:snippysim}).
The constraint formulated by \citet{arranzolmosPreservationSpeculativeConstanttime2025} instead requires the existence of two functions:
One back-translates directives similar to how our simulation finds a sequence of source directives to replay the target directives.
The other forward-translates leakage from source leakage to target leakage.
The existence of both functions already guarantees that our constant-time cube is satisfied.
%
%
The bigger difference is in the application of our proof methods.
Our work spots a weakness in register allocation, and we develop a static analysis to protect against security-threatening spills.
\citet{arranzolmosPreservationSpeculativeConstanttime2025} target the Jasmin compiler with their work.
Jasmin's source language already requires the programmer to tag variables as register or stack variable.
Therefore, the Jasmin compiler performs no spilling, avoiding the weakness by demanding the programmer to appropriately choose register-allocated variables.
Their work instead focuses on proving nine other passes secure by extending preservation proofs from leakage semantics to speculative semantics.

\section{Conclusion}
\label{section:conclusion}

We have developed a method for proving that compiler transformations preserve non-interference from source to target programs \emph{under speculative semantics}.
When experimenting with our method, we found that it worked well on simple transformations like dead code elimination,
but we had trouble applying it to register allocation.
As it turned out, the fault was not on our side but register allocation is actually insecure.
Our method led us to discover a new vulnerability introduced by register allocation.
We have confirmed the existence of this vulnerability in the mainstream compiler \texttt{LLVM} on code from \texttt{libsodium}, a modern cryptographic library.
Interestingly, our proof method also guided us towards a fix:
We have presented a new static analysis to identify weaknesses introduced by register allocation, and an automated procedure to fix them.
With these additions, we have been able to prove that register allocation preserves non-interference.

As future work, we would like to integrate our proof method with certified compilers, investigate transformations that we left out so far, and consider further sources of speculation in the semantics.

\printbibliography

\appendix

\section{Missing proofs from Section 2}

\begin{proof}[Proof of \Cref{lem:programpoint-by-leakage}]
    Let $\nstatec_1 = \nstatecp_1\lseq(\pc, \rasgn_1, \masgn_1)$,
    $\nstatec_2 = \nstatecp_2\lseq(\pc, \rasgn_2, \masgn_2)$,
    We do the proof by induction on the first transition $\nstatec_1 \ntrans{\leak}{\direct} \nstated_1$ and $\nstatec_2 \ntrans{\leak}{\direct} \nstated_1$.
    We do case distinction.

    \begin{pcases}
        \pcasep{$\direct = \Drb = \leak$}, then $\sizeof{\nstatec_1} = \sizeof{\nstatec_2} > 1$.
        Further, $\nstated_1 = \nstatecp_1$ and $\nstated_2 = \nstatecp_2$.
        Thus, $\nstated_1 \samepoint \nstated_2$ follows immediately from $\nstatec_1 \samepoint \nstatec_2$

        \pcasep{$\direct = \Dspec$ and $\leak = \Eif{\bvalue}$}
        Then, $\nstatec_1 \ntrans{\Eif{\bvalue}}{\Dspec} \nstatec_1\lseq(\pcsuc_{\lnot \bvalue}, \rasgn_1, \masgn_1)$
        and $\nstatec_2 \ntrans{\Eif{\bvalue}}{\Dspec} \nstatec_2\lseq(\pcsuc_{\lnot \bvalue}, \rasgn_2, \masgn_2)$.
        Clearly, $\nstated_2 = \nstatec_2 \lseq (\pcsuc_{\lnot \bvalue}, \rasgn_2, \masgn_2)
        \samepoint \nstatec_1 \lseq (\pcsuc_{\lnot \bvalue}, \rasgn_1, \masgn_1)$.
        The case of $\direct = \Dif$ and $\leak = \Eif{\bvalue}$ is similar.

        \pcasep{All other cases} Because $\instof{\pc}$ has only a single successor $\pcsuc$, and only modifies the executing state,
        it is clear from $\nstatecp_1 \samepoint \nstatecp_2$ and both executing states being in $\pcsuc$, that $\nstated_1 \samepoint \nstated_2$. \qedhere
    \end{pcases}
\end{proof}

\section{Missing Proofs from Section 4}

\subsection*{\Cref{lem:behaviorininterval}}
\begin{proof}
    The $\supseteq$ direction for terminating behavior is immediate: ${\simtranst{}{}} \subseteq {\ntranss{}{}}$.
    For diverging behavior, consider any $\dbeh{\dtpair{\itrace}{\idtrace}}$ with $\nstated \simtransti{\itrace}{\idtrace}$.
    By definition, there is $\nstated \simtranst{\tracel}{\dtraced} \nstatef \simtransti{\itracep}{\idtracep}$.
    Notably, $\simtranst{\tracel}{\dtraced}$ cannot do stuttering transitions,
    thus there are transitions $\nstated \ntrans{\leak_0}{\direct_0} \ldots \ntrans{\leak_n}{\direct_n} \nstatef$
    and $\leak_0\lseq \ldots \lseq \leak_n = \tracel$, $\direct_0\lseq\ldots\lseq\direct_n = \dtraced$.
    Further, if there is $\nstated \simtranst{\tracel}{\dtraced} \nstatef$ then also $(\nstatec, \nstated) \synctrans{\tracek}{\dtracee}{\tracel}{\dtraced} (\nstatee, \nstatef)$.
    Thus, $\nstatec \simtranss{\tracek}{\dtracee} \nstatee$ and $\nstatef \simrel \nstatee$.
    By coinduction, $\nstatef \ntransi{\itracep}{\idtracep}$, which completes this direction.

    For $\subseteq$, be given the proof that justifies $\nstated \simrel \nstatec$.
    Show that when $\nstated \ntransi{\itrace}{\idtrace}$ ($\nstated \ntranss{\tracel}{\dtraced} \nstateh$, $\nstateh$ final),
    then (either $\nstated = \nstateh$ are final, or)
    there is $\nstated \simtranst{\tracem}{\dtracef} \nstatef \simrel \nstatee$,
    with $\nstatef \simtransti{\itracep}{\idtracep}$ ($\nstatef \simtransts{\tracek}{\dtracee} \nstateh$)
    so that $\itrace = \tracem\lseq\itracep$ and $\idtrace = \dtracef\lseq\idtracep$ ($\tracel = \tracem\lseq\tracek$ and $\dtraced = \dtracef\lseq\dtracee$).

    To do so, we first show the following:
    For every $\nstated \simrel \nstatec$ with $\nstated \ntranss{\tracel}{\dtraced} \nstateh$ (note that $\nstateh$ need not be final),
    either $\nstated \ntranss{\tracem}{\dtracef} \nstatef \ntranss{\tracek}{\dtracee} \nstateh$
    so that $\nstated \simtranst{\tracem}{\dtracef} \nstatef$ or there is a proof node
    $\proofnode{\sderivetgtsim{\eitherguarded{\simrel}}{\dtfat{\nstatec}{\nstated}}{\nstateh}{\nstatec}{\dtraced}}$.
    We do so by induction on the structure of $\nstated \ntranss{\tracel}{\dtraced} \nstateh$.
    For the base case ($\nstated \ntranss{\varepsilon}{\varepsilon} \nstated$), we already know that the proof has a node
    $\proofnode{\sderivetgtsim{\guarded{\simrel}}{\dtfat{\nstatec}{\nstated}}{\nstated}{\nstatec}{\dtraced}}$ for justification of $\nstated \simrel \nstatec$.
    For the inductive case, consider $\nstated \ntranss{\tracel}{\dtraced} \nstatehp \ntrans{\leak}{\direct} \nstateh$.
    Then, by induction, either
    $\nstated \simtranst{\tracem}{\dtracef} \nstatef \ntranss{\tracek}{\dtracee} \nstatehp$ or there is a proof node
    $\proofnode{\sderivetgtsim{\eitherguarded{\simrel}}{\dtfat{\nstatec}{\nstated}}{\nstatehp}{\nstatec}{\dtraced}}$.
    In the first case, we are done.
    In the second case, we do case distinction by the rule that derives
    $\proofnode{\sderivetgtsim{\eitherguarded{\simrel}}{\dtfat{\nstatec}{\nstated}}{\nstatehp}{\nstatec}{\dtraced}}$.
    In case of \cref{rule:direct-tf}, we get $\nstated \simtranst{\tracel}{\dtraced} \nstatehp$ (because the proof also derives a source sequence).
    Otherwise, the case is \cref{rule:tgt}.
    In that case, 
    $\proofnode{\sderivetgtsim{\eitherguarded{\simrel}}{\dtfat{\nstatec}{\nstated}}{\nstateh}{\nstatec}{\dtraced\lseq\direct}}
    \proofedges{\direct}
    \proofnode{\sderivetgtsim{\eitherguarded{\simrel}}{\dtfat{\nstatec}{\nstated}}{\nstateh'}{\nstatec}{\dtraced}}$ as desired.

    Then, we utilize this to first prove the slightly different statement:
    For every $\nstated \simrel \nstatec$
    with $\nstated \ntransi{\itrace}{\idtrace}$ ($\nstated \ntranss{\tracel}{\dtraced} \nstateh$, $\nstateh$ final),
    (either $\nstated = \nstateh$ is final, or)
    there is $\nstated \simtranst{\tracem}{\dtracef} \nstatef \simrel \nstatee$,
    \underline{with $\nstatef \ntransi{\itracep}{\idtracep}$ ($\nstatef \ntranss{\tracek}{\dtracee} \nstateh$)}
    so that $\itrace = \tracem\lseq\itracep$ and $\idtrace = \dtracef\lseq\idtracep$ ($\tracel = \tracem\lseq\tracek$ and $\dtraced = \dtracef\lseq\dtracee$).
    Towards contradiction, assume it is not the case. 
    Then, the previous fact yields us an infinite chain of nodes in the proof tree,
    a contradiction to the well-foundedness of the proof:
    First, due to $\nstated \ntranss{\varepsilon}{\varepsilon} \nstated$,
    $\proofnode{\sderivetgtsim{\guarded{\simrel}}{\dtfat{\nstatec}{\nstated}}{\nstated}{\nstatec}{\varepsilon}}$
    is part of the proof.
    And when by induction the transitions
    $\nstated \ntranss{\tracem}{\dtracef} \nstatefp \ntrans{\leak}{\direct} \nstatef \ntransi{\itracep}{\idtracep}$,
    $\dtracef = \direct_0\lseq \ldots \lseq \direct_n$ imply that
    the proof contains
    $\proofnode{\sderivetgtsim{\guarded{\simrel}}{\dtfat{\nstatec}{\nstated}}{\nstated}{\nstatec}{\varepsilon}}
    \rproofedges{\direct_0} \ldots \rproofedges{\direct_n}
    \proofnode{\sderivetgtsim{\eitherguarded{\simrel}}{\dtfat{\nstatec}{\nstated}}{\nstatefp}{\nstatec}{\dtracef}}$,
    then it cannot be proven by \cref{rule:direct-tf} as that would create
    $\nstated \simtranst{\tracem}{\dtracef} \nstatef$.
    Thus, it must be proven via \cref{rule:tgt} which further requires 
    $\proofnode{\sderivetgtsim{\eitherguarded{\simrel}}{\dtfat{\nstatec}{\nstated}}{\nstatefp}{\nstatec}{\dtracef}}
    \rproofedges{\direct}
    \proofnode{\sderivetgtsim{\unguarded{\simrel}}{\dtfat{\nstatec}{\nstated}}{\nstatef}{\nstatec}{\dtracef\lseq\direct}}$.
    
    Finally, we derive the result by coinduction, which yields $\nstatef \simtransti{\itracep}{\idtracep}$.
    With $\nstated \simtranst{\tracem}{\dtracef} \nstatef$, we compose $\nstated \simtransti{\tracem \lseq \itracep}{\dtracef\lseq\idtracep}$.
\end{proof}

\section{Missing proofs from section 5 and Transfer for Liveness Analysis}

We define the transfer functions for Liveness analysis on instructions.
Write $\transferat{\inst}$ for $\transferat{\pc}$ with $\progof{\pc} = \inst$.
\begin{align*}
    &
    \begin{aligned}
        \transferof{\Iload{\rega}{\vara}{x}{\pcsuc}}{\latelem} & =
        \begin{cases}
            \latelem & \rega \notin \latelem \\
            (\latelem \setminus \set{\rega}) \cup \set{(\vara,\adr)} & x = \adr \\
            (\latelem \setminus \set{\rega}) \cup \mems) & x = \regb
        \end{cases} \\
        \transferof{\Istore{\vara}{\rega}{x}{\pcsuc}}{\latelem} & =
        \begin{cases}
            \latelem & x = \adr \in \vsizeof{\vara}, (\vara, \adr) \notin \latelem \\
            (\latelem \setminus \set{(\vara,\adr)}) & x = \adr \in \vsizeof{\vara}, (\vara, \adr) \in \nmem \\
            \latelem \cup \set{\regb} & x = \regb
        \end{cases}
    \end{aligned}
    &
    \begin{aligned}
        \transferof{\Inop{\pcsuc}}{\latelem} & = \latelem \\
        \transferof{\Iasgn{\rega}{\regb \op \regc}{\pcsuc}}{\latelem} & =
        \begin{cases}
            (\latelem \setminus \set{\rega}) \cup \set{\regb, \regc} & \rega \in \latelem \\
            \latelem & \rega \notin \nregis
        \end{cases} \\
        \transferof{\Iif{\rega}{\pcsuc_{\btrue}}{\pcsuc_{\bfalse}}}{\latelem} &
        = \latelem \cup \set{\rega} \\
        \transferof{\Iexit}{\latelem} &= \flowinit = \mems
    \end{aligned}
\end{align*}

\subsection*{Remaining cases for \Cref{thm:dcsnippy}}

\begin{pcases}
    \pcasep{$\langle\Iasgn{\rega}{\regb \op \regc}{\pcsuc},\Inop{\pcsuc}\rangle$}
    For the first part, we have $\stated_1 \trans{\Enone}{\Dstep} \statef_1$ and need to show $\statec_1 \trans{\Enone}{\Dstep} \statee_1$ and $\statef_1 \simrel \statee_1$.
    The former is immediate from semantics.
    For the latter,
    we know that $\rega \notin \flowof{\pc}$ because the instruction was replaced by $\IWnop$.
    The transfer of $\flowof{\pc}$ along $\inst$ is then $\transferof{\pc}{(\flowof{\pc})} = \flowof{\pc}$
    because an assignment to a non-live register does not make any registers live.
    Together with \labelcref{eqn:flow-bwd}, we get $\transferof{\pc}{(\flowof{\pc})} = \flowof{\pc} \supseteq \transferof{\pcsuc}{(\flowof{\pcsuc})}$.
    Because registers and memory of $\stated_1$ and $\statef_1$ are equal and $\statec_1$ and $\statee_1$ only differ on $\rega$,
    $\stated_1 \simrel \statec_1$ implies $\statef_1 \simrel \statee_1$ because $\rega \notin \flowof{\pc} \supseteq \transferof{\pcsuc}{(\flowof{\pcsuc})}$.
    For the second part,
    further assume $\statec_2 \trans{\Enone}{\Dstep} \statee_2$.
    We need to show that $\stated_2 \trans{\Enone}{\Dstep} \statef_2$ and $\statef_2 \simrel \statee_2$.
    The former is again immediate from semantics.
    For the latter, we have the same arguments as for the first part:
    Memory and register contents of $\stated_2$ and $\statef_2$ are equal and $\statec_2$ and $\statee_2$ only differ in $\rega$.

    \pcasep{$\Istore{\vara}{\regb}{\regc}{\pcsuc},\Istore{\vara}{\regb}{\regc}{\pcsuc}$}
    There are two subcases - either safe or unsafe store.

    For safe store we get $\stated_1 \trans{\Estore{\adr}}{\Dstep} \statef_1$
    and since $\regb \in \transferof{\pc}{(\flowof{\pc})}$, 
    $\rasgniof{\stated_1}{\regb} = \rasgniof{\statec_1}{\regb}$,
    so $\statec_1 \trans{\Estore{\adr}}{\Dstep} \statee_1$.
    Also, $\transferof{\pc}{(\flowof{\pc})} \cup \set{(\vara,\adr)} \geq \flowof{\pc} \geq \transferof{\pcsuc}{(\flowof{\pcsuc})}$
    and $\masgni{\stated_1} = \masgni{\statef_1}$ and $\masgni{\statec_1} = \masgni{\statee_1}$ on all slots except $(\vara,\adr)$.
    But due to $\regc \in \transferof{\pc}{(\flowof{\pc})}$, we also have $\masgniof{\stated_1}{\vara}{\adr} = \masgniof{\statef_1}{\vara}{\adr}$,
    so $\statef_1 \simrel \statee_1$.
    Further $\stated_2 \trans{\Estore{\adr}}{\Dstep} \statef_2$ and $\Dstep \dtfat{\statec_2}{\stated_2} \Dstep$ are a given.

    For unsafe store we have an analogue proof, except both directives are $\Dstore{\varb}{\adrp}$.

    \pcasep{$\Istore{\vara}{\regb}{\regc}{\pcsuc},\Inop{\pcsuc}$}
    Analogue to the previous case, except $\leakp = \Enone$ and the argument for $\masgniof{\stated_1}{\vara}{\adr} = \masgniof{\statef_1}{\vara}{\adr}$ does not hold anymore.
    However, since $(\vara,\adr) \notin \flowof{\pc} \geq \transferof{\pcsuc}{(\flowof{\pcsuc})}$, we don't need it for $\statef_1 \simrel \statee_1$.

    \pcasep{$\Iload{\rega}{\vara}{\regb}{\pcsuc},\Iload{\rega}{\vara}{\regb}{\pcsuc}$ and $\Iload{\rega}{\vara}{\regb}{\pcsuc},\Inop{\pcsuc}$}
    The argumentation is analogue to the previous two cases.

    \pcasep{$\Inop{\pcsuc},\Inop{\pcsuc}$}
    We get $\stated_1 \trans{\Enone}{\Dstep} \statef_1$ and $\statec_1 \trans{\Enone}{\Dstep} \statee_1$.
    Also, $\transferof{\pc}{(\flowof{\pc})} = \flowof{\pc} \geq \transferof{\pcsuc}{(\flowof{\pcsuc})}$
    and $\rasgni{\stated_1} = \rasgni{\statef_1}$ and $\rasgni{\statec_1} = \rasgni{\statee_1}$,
    so $\statef_1 \simrel \statee_1$.
    Further $\stated_2 \trans{\Enone}{\Dstep} \statef_2$ and $\Dstep \dtfat{\statec_2}{\stated_2} \Dstep$ are a given.

    \pcasep{$\Iasgn{\rega}{\regb \op \regc}{\pcsuc},\Iasgn{\rega}{\regb \op \regc}{\pcsuc}$}
    We get $\stated_1 \trans{\Enone}{\Dstep} \statef_1$ and $\statec_1 \trans{\Enone}{\Dstep} \statee_1$.
    Also, $\transferof{\pc}{(\flowof{\pc})} \cup \set{\rega} \geq \flowof{\pc} \geq \transferof{\pcsuc}{(\flowof{\pcsuc})}$
    and $\rasgni{\stated_1} = \rasgni{\statef_1}$ and $\rasgni{\statec_1} = \rasgni{\statee_1}$ on all registers except $\rega$.
    But due to $\regb, \regc \in \transferof{\pc}{(\flowof{\pc})}$, we also have $\rasgniof{\stated_1}{\rega} = \rasgniof{\statef_1}{\rega}$,
    so $\statef_1 \simrel \statee_1$.
    Further $\stated_2 \trans{\Enone}{\Dstep} \statef_2$ and $\Dstep \dtfat{\statec_2}{\stated_2} \Dstep$ are a given.

    \pcasep{$\Iexit$} No transition can be made by $\stated_1$, the case does not exist.

    \pcasep{$\Iif{\rega}{\pcsuc_{\btrue}}{\pcsuc_{\bfalse}},\Iif{\rega}{\pcsuc_{\btrue}}{\pcsuc_{\bfalse}}$}
    We get $\nstated_1\lseq\stated_1 \ntrans{\Eif{\lnot\bvalue}}{\Dspec{\lnot\bvalue}} \nstated_1\lseq\statef_{1,\bvalue}\lseq\statef_{1,\lnot\bvalue} = \nstatef_1$ where $\bvalue = (\rasgniof{\stated_1}{\rega} \sameas 0)$
    and by $\rega \in \transferof{\pc}{(\flowof{\pc})}$,
    $\rasgniof{\stated_1}{\rega} = \rasgniof{\statec_1}{\rega}$ so $\nstatec_1\lseq\statec_1 \ntrans{\Eif{\bvalue}}{\Dspec{\lnot\bvalue}} \nstatec_1\lseq\statee_{1,\bvalue}\lseq\statee_{1,\lnot\bvalue} = \nstatee_1$.
    Further, $\transferof{\pc}{(\flowof{\pc})} \geq \flowof{\pc} \geq \transferof{\pcsuc_{\bvalue}}{(\flowof{\pcsuc_{\bvalue}})}$.
    Thus, $\statef_{1,\bvalue} \simrel \statee_{1,\bvalue}$ and $\statef_{1,\lnot\bvalue} \simrel \statee_{1,\lnot\bvalue}$.
    Further, if $\nstatec_2 \ntrans{\Eif{\lnot\bvalue}}{\Dspec{\lnot\bvalue}} \nstatee_2$, by same arguments, we have $\nstated_2 \ntrans{\Eif{\lnot\bvalue}}{\Dspec{\lnot\bvalue}} \nstatef_2$
    and $\Dspec{\lnot\bvalue} \dtfat{\statec_2}{\stated_2} \Dspec{\lnot\bvalue}$.
\end{pcases}

\section{Additional Material and Missing proofs for Section 6}

\begin{ruleframe}[label=rules:shuffle]{Shuffle Semantics}
    \definerule{move}
    {\progof{\pc} = \Imove{\rega}{\regb}{\pcsuc}}
    {(\pc, \rasgn, \masgn) \trans{\Enone}{\Dstep} (\pcsuc, \subst{\rasgn}{\rega}{\rasgnof{\regb}}, \masgn)}
    {\label{rule:move}}

    \definerule{fill}
    {\progof{\pc} = \Ifill{\rega}{\stackl}{\pcsuc}}
    {(\pc, \rasgn, \masgn) \trans{\Eload{\stackl}}{\Dstep} (\pcsuc, \subst{\rasgn}{\rega}{\masgnof{\varstack}{\stackl}}, \masgn)}
    {\label{rule:fill}}

    \definerule{spill}
    {\progof{\pc} = \Ispill{\stackl}{\rega}{\pcsuc}}
    {(\pc, \rasgn, \masgn) \trans{\Estore{\stackl}}{\Dstep} (\pcsuc, \rasgn, \subst{\masgn}{(\varstack, \stackl)}{\rasgnof\rega})}
    {\label{rule:spill}}
\end{ruleframe}

\subsection*{Missing rules for \texorpdfstring{$\prodra$}{P :: [P]ra}}

\begin{ruleframe}[label=rules:pbasics]{Basic}
    \definerule{poison-nop}
    {
        \instof{\statec} = \Inop{\pcsuc}
        \\
        \statec \trans{\Enone}{\Dstep} \statee
        \\
        \stated \trans{\Enone}{\Dstep} \statef
    }
    {
        (\statec, \stated, \poison)
        \prodtrans{\Enone}{\Dstep}{\Enone}{\Dstep}
        (\statee, \statef, \poison)
    }
    {\label{rule:pnop}}

    \definerule{healthy-branch}
    {
        \instof{\statec} = \Iif{\regb}{\pcsuc_{\btrue}}{\pcsuc_{\bfalse}}
        \\
        \poisonof{\regb} = \healthy
        \\
        \statec \trans{\Eif{\bvalue}}{\Dif} \statee
        \\
        \stated \trans{\Eif{\bvalue}}{\Dif} \statef
    }
    {
        (\statec, \stated, \poison)
        \sprodtrans{\Eif{\bvalue}}{\Dif}{\Eif{\bvalue}}{\Dif}
        (\statee, \statef, \poison)
    }
    {\label{rule:hbranch}}

    \definerule{poison-asgn}
    {
        \instof{\statec} = \Iasgn{\rega}{\regb \op \regc}{\pcsuc}
        \\
        \statec \trans{\Enone}{\Dstep} \statee
        \\
        \stated \trans{\Enone}{\Dstep} \statef
        \\
        \pval = \inlineife{(\pregof{\regb} = \pregof{\regc} = \healthy)}{\healthy}{\poisoned}
    }
    {
        (\statec, \stated, \poison)
        \prodtrans{\Enone}{\Dstep}{\Enone}{\Dstep}
        (\statee, \statef, \subst{\poison}{\rega}{\pval})
    }
    {\label{rule:pasgn}}
\end{ruleframe}

\Cref{rule:pasgn} propagates poison values for assignments.
Notice that weakly poisoned values become poisoned because the operator's result value is most likely a non-zero value.
Thus, only if both arguments are healthy, we propagate $\healthy$ to the assigned register and $\poisoned$ otherwise.
\labelcref{rule:hbranch} forbids weakly poisoned branching conditions even though they are safe to leak.
This is to avoid that $\prog$ and $\tra$ arrive at different program points (up to $\instinj$).
Branching with weakly poisoned registers could be supported by letting one state speculate while the other does regular branching to keep them at the same program point.
However, this would violate the condition $\sizeof{\nstatec} = \sizeof{\nstated}$ and leads to a less intuitive product definition.

\subsection*{\Cref{lem:prod-well-def}}
\begin{proof}
    We do a case distinction on the transition rule for $(\nstatec, \nstated, \pseq) \sprodtrans{\leakp}{\directp}{\leak}{\direct} (\nstatee, \nstatef, \pseqp)$.
    To that end, let $\pseq = \pseq'\lseq\poison$ and $\pseqp = \pseqp'\lseq \poisonp$,
    $\nstatee = \nstateep\lseq\statee$ and $\nstatef = \nstatefp\lseq\statef$,
    and $\nstatec = \nstatecp\lseq\statec$ and $\nstated = \nstatedp\lseq\stated$.
    Let further be $\inst = \progof{\statec}$ and $\instp = \traof{\stated}$.
    For all cases except \labelcref{rule:hspec,rule:prb} it suffices to show that $\statee \nrrmp{\poisonp} \statef$.
    All those rules only modify the top states and poison values, so $\pseqp' = \pseqp'$.
    The definition of $\nrrmp{\pseqp}$ is then satisfied from $\nstatecp \nrrmp{\pseqp'} \nstatedp$.

    We first do the separate two cases who change the size of the speculating states.
    \begin{pcases}
        \pcasep{\labelcref{rule:hspec}}
        Then, $\inst = \IWif{\regb}$, $\poisonof{\regb} = \healthy$, and $\pseqp = \pseq'\lseq\poison\lseq\poison$.
        Further, $\statee$ and $\statec$ as well as $\stated$ and $\statef$ are fully equal except for the program counter.
        Because $\statec \nrrmp{\poison} \stated$ and $\nrrmp{\poison}$ is indifferent to the program counter, $\statee \nrrmp{\poison} \statef$.

        \pcasep{\labelcref{rule:prb}}
        Then, $\nstatee = \nstatecp \nrrmp{\pseq'} \nstatedp = \nstatef$.
    \end{pcases}

    \noindent Now to the other cases.
    \begin{pcases}
        \pcasep{\labelcref{rule:pfill,,rule:pspill,,rule:pmove,,rule:pshuf-sfence,,rule:pnop,,rule:hbranch,,rule:psfence,,rule:pslh}}
        $\poisonp = \poison$.
        Further, $\statee$ and $\statec$ as well as $\stated$ and $\statef$ are fully equal except for the program counter.
        Because $\statec \nrrmp{\poison} \stated$ and $\nrrmp{\poison}$ is indifferent to the program counter, $\statee \nrrmp{\poison} \statef$.

        \pcasep{\labelcref{rule:pintro-load,,rule:pload-ss,,rule:pload-us}}
        Then, $\inst = \IWload{\rega}{\vara}{\regb}$ and
        $\poisonp = \subst{\poison}{\rega}{\poison}$.
        Further, $\statee$ and $\statec$ as well as $\stated$ and $\statef$ are equal except for the program counter and $\rega$ as well as $\regap = \regmapofof{\statef}{\rega}$.
        But $\rega$ is poisoned in $\poisonp$, so $\nrrmp{\poisonp}$ is indifferent to its value, $\statee \nrrmp{\poisonp} \statef$.

        \pcasep{\labelcref{rule:pload-hsafe}}
        Then, $\inst = \IWload{\rega}{\vara}{\regb}$ and $\adr = \rasgniof{\statec}{\regb}$ and
        $\poisonp = \subst{\poison}{\rega}{\pmemof{\vara}{\adr}} \leq \subst{\poison}{\rega}{\poison}$.
        Further, $\statee$ and $\statec$ as well as $\stated$ and $\statef$ are equal except for the program counter and $\rega$ as well as $\regap = \regmapofof{\statef}{\rega}$.
        The value of $\rega$ is $\masgniof{\statec}{\vara}{\adr}$ and of $\regap$ is $\masgniof{\stated}{\vara}{\adr}$.
        But $\rega$ has the poison value of $\poison$ for $(\vara,\adr)$ in $\poisonp$, so $\statee \nrrmp{\poisonp} \statef$ follows from $\statec \nrrmp{\poison} \stated$.

        \pcasep{\labelcref{rule:pload-hunsafe}} Analogue to the previous case, but $(\vara, \adr)$ swapped to $(\varb, \adrp)$.

        \pcasep{\labelcref{rule:pintro-store}}
        Then, $\inst = \IWstore{\vara}{\regb}{\regc}$ and $\directp = \Dstore{\vara}{\adrppp}$ and $\direct = \Dstore{\varstack}{\adrp}$ and $\regmapofof{\statef}{\regd} = (\varstack, \adrp)$.
        We get $\poisonp = \subst{\poison}{\regd, (\varstack, \adrp)}{\poisoned}$.
        Further, values of $\statee$ and $\statec$ are equal except for $\regd$ and
        values of $\stated$ and $\statef$ are equal except $\regmapofof{\statef}{\regd} = (\varstack, \adrp)$.
        Again, those are poisoned in $\poisonp$,  so $\statee \nrrmp{\poisonp} \statef$ follows from $\statec \nrrmp{\poison} \stated$.

        \pcasep{\labelcref{rule:pstore-ss}}
        Then, $\inst = \IWstore{\vara}{\regb}{\regc}$ and $\rasgniof{\statec}{\regb} = \adrpp$ and $\rasgniof{\stated}{(\regmapofof{\stated}{\regb}} = 0$ due to weak poisonedness.
        We get $\poisonp = \subst{\poison}{(\vara, \adrpp), (\vara, 0)}{\poisoned}$.
        Further, values of $\statee$ and $\statec$ are equal except for $(\vara, \adrpp)$ and
        values of $\stated$ and $\statef$ are equal except $(\vara, 0)$.
        Again, those are poisoned in $\poisonp$,  so $\statee \nrrmp{\poisonp} \statef$ follows from $\statec \nrrmp{\poison} \stated$.

        \pcasep{\labelcref{rule:pstore-us}} Analogue to the previous case, but $\adrpp$ swapped to $\adrppp$.

        \pcasep{\labelcref{rule:pstore-hsafe}}
        Then, $\inst = \IWstore{\vara}{\regb}{\regc}$ and $\rasgniof{\statec}{\regb} = \rasgniof{\stated}{(\regmapofof{\stated}{\regb})} = \adr \in \sizeof{\vara}$ due to healthiness.
        Let further $\regmapofof{\stated}{\regc} = \regcp$.
        We get $\poisonp = \subst{\poison}{(\vara, \adr)}{\poisonof{\regc}}$.
        Further, values of $\statee$ and $\statec$ are equal except for $(\vara, \adrpp)$ and
        values of $\stated$ and $\statef$ are equal except $(\vara, 0)$.
        The value of $(\vara, \adr)$ in $\statee$ is $\rasgniof{\statec}{\regc}$ and in $\statef$ is $\rasgniof{\stated}{\regcp}$.
        But $(\vara, \adr)$ has the poison value of $\poison$ for $\regc$ in $\poisonp$,
        so $\statee \nrrmp{\poisonp} \statef$ follows from $\statec \nrrmp{\poison} \stated$.

        \pcasep{\labelcref{rule:pstore-hunsafe}}
        Analogue, but $(\vara, \adr)$ swapped for $(\varb, \adrp)$.

        \pcasep{\labelcref{rule:pshuf-slh}}
        Then, $\instp = \IWslh{\regap}$, where $\regmapofof{\statec}{\rega} = \regmapofof{\stated}{\rega} = \regap$.
        We get $\poisonp = \subst{\poison}{\rega}{\weakpoisoned}$.
        $\stated$ is equal to $\statef$ and if $\nstatec$ is speculating, then $\poisonp$ is set to $\weakpoisoned$ and $\rasgniof{\statef}{\regap} = 0$.
        Otherwise, $\stated$ is equal to $\statef$ in values.
        In both cases, $\statee \nrrmp{\poisonp} \statef$ follows from $\statec \nrrmp{\poison} \stated$. \qedhere
    \end{pcases}
\end{proof}

\subsection*{Remaining transfer functions for \texorpdfstring{$\prodra$}{P :: [P]ra}}

\begin{align*}
    \ptransferof{\Inop{\pcsuc}}{\poisons} &
    = \ptransferpof{\Imove{\regap}{\regbp}{\pcsuc}}{\poisons}
    = \ptransferpof{\Ifill{\regap}{\stackl}{\pcsuc}}{\poisons}
    = \ptransferpof{\Ispill{\stackl}{\regbp}{\pcsuc}}{\poisons} = \poisons \\
    \ptransferof{\Isfence{\pcsuc}}{\poisons} & = \healthy \\
    \ptransferof{\Iif{\regb}{\pcsuc_{\btrue}}{\pcsuc_{\bfalse}}}{\poisons} & =
    \begin{cases}
        \poisons & \pregsof{\regb} = \healthy \\
        \poisoned & \pregsof{\regb} \geq \weakpoisoned
    \end{cases}
\end{align*}

\subsection*{\Cref{lem:solutionapprox}}
\begin{proof}
    By induction, all transitions update poison types monotonically.
    For the induction step, consider $\nstatec = \nstatecp\lseq\statec$ and $\nstated = \nstatedp\lseq\stated$.
    We do a case distinction on the transition rule for $(\nstatec, \nstated, \pseqofof{\nstatec}{\nstated}) \sprodtrans{\leakp}{\directp}{\leak}{\direct} (\nstatee, \nstatef, \pseqp)$.
    To that end, let $\pseqofof{\nstatec}{\nstated} = \pseq'\lseq\poison$ and $\pseqp = \pseqp'\lseq \poisonp$, $\nstatee = \nstateep\lseq\statee$ and $\nstatef = \nstatefp\lseq\statef$.
    Let further be $\inst = \progof{\statec}$ and $\instp = \traof{\stated}$.
    We need to show for all cases except \labelcref{rule:hspec,rule:prb} that $\poisonp \leq \transferof{\inst}{\poison}$ (respectively $\poisonp \leq \ptransferpof{\instp}{\poison}$ for shuffling states).
    Then, by \cref{eqn:flow-fwd}, $\poisonp \leq \transferof{\inst}{\poison} \leq \passignsof{(\statee, \statef)}$ 
    (respectively $\poisonp \leq \ptransferpof{\instp}{\poison} \leq \passignsof{(\statee, \statef)}$).
    All those rules only modify the top states and poison values, so $\pseq' = \pseqp' = \pseqofof{\nstateep}{\nstatefp}$.
    Together, we get $\pseqp \leq \pseqofof{\nstatee}{\nstatef}$.

    We first do the separate two cases who change the size of the speculating states.
    \begin{pcases}
        \pcasep{\labelcref{rule:hspec}}
        Then, $\inst = \IWif{\regb}$, $\poisonof{\regb} = \healthy$, and $\pseqp = \pseq\lseq\poison$.
        We have $\poison = \ptransferof{\inst}{\poison} \leq \passignsof{(\statec, \stated)}$.
        And because $\nstatee = \nstatec\lseq\statee$ and $\nstatef = \nstated\lseq\statef$, $\pseq \leq \pseq$, we have $\pseq\lseq\poison \leq \pseq\lseq\passignsof{(\statec,\stated)} = \pseqofof{\nstatee}{\nstatef}$.

        \pcasep{\labelcref{rule:prb}}
        Then, $\nstatee = \nstatecp$ and $\nstatef = \nstatedp$, and since $\pseqofof{\nstatec}{\nstated} = \pseqofof{\nstatecp}{\nstatedp}\lseq\poison$,
        $\pseqp' = \pseqofof{\nstatee}{\nstatef}$.
    \end{pcases}

    \noindent Now to the other cases.
    \begin{pcases}

        \pcasep{\labelcref{rule:pfill,,rule:pspill,,rule:pmove,,rule:pshuf-sfence}} $\poisonp = \poison = \ptransferpof{\instp}{\poison}$.

        \pcasep{\labelcref{rule:pnop,,rule:hbranch,,rule:psfence,,rule:pslh}} $\poisonp = \poison = \transferof{\inst}{\poison}$.

        \pcasep{\labelcref{rule:pintro-load,,rule:pload-ss,,rule:pload-us}}
        Then, $\inst = \IWload{\rega}{\vara}{\regb}$ and
        $\poisonp = \subst{\poison}{\rega}{\poison} = \transferof{\inst}{\poison}$.

        \pcasep{\labelcref{rule:pload-hsafe}}
        Then, $\inst = \IWload{\rega}{\vara}{\regb}$ and $\adr = \rasgniof{\statec}{\regb}$ and
        $\poisonp = \subst{\poison}{\rega}{\pmemof{\vara}{\adr}} \leq \subst{\poison}{\rega}{\poison} = \transferof{\inst}{\poison}$.

        \pcasep{\labelcref{rule:pload-hunsafe}}
        Then, $\inst = \IWload{\rega}{\vara}{\regb}$ and $\directp = \Dload{\varb}{\adrp}$ and
        $\poisonp = \subst{\poison}{\rega}{\pmemof{\varb}{\adrp}} \leq \subst{\poison}{\rega}{\poison} = \transferof{\inst}{\poison}$.

        \pcasep{\labelcref{rule:pintro-store}}
        Then, $\inst = \IWstore{\vara}{\regb}{\regc}$ and $\directp = \Dstore{\vara}{\adrppp}$ and $\direct = \Dstore{\varstack}{\adrp}$ and $\regmapofof{\stated}{\regd} = (\varstack, \adrp)$.
        We get $\poisonp = \subst{\poison}{\regd, (\varstack, \adrp)}{\poisoned} \leq \subst{\substjoin{\poison}{\mems}{\poisonof{\regc}}}{\regs,\vara}{\poisoned} = \transferof{\inst}{\poison}$.

        \pcasep{\labelcref{rule:pstore-ss}}
        Then, $\inst = \IWstore{\vara}{\regb}{\regc}$ and $\rasgniof{\statec}{\regb} = \adrpp$ and $\rasgniof{\stated}{(\regmapofof{\stated}{\regb}} = 0$ due to weak poisonedness.
        We get $\poisonp = \subst{\poison}{(\vara, \adrpp), (\vara, 0)}{\poisoned} \leq \subst{\substjoin{\poison}{\mems}{\poisonof{\regc}}}{\regs,\vara}{\poisoned} = \transferof{\inst}{\poison}$.

        \pcasep{\labelcref{rule:pstore-us}}
        Then, $\inst = \IWstore{\vara}{\regb}{\regc}$ and $\directp = \Dstore{\vara}{\adrppp}$ and $\rasgniof{\stated}{(\regmapofof{\stated}{\regb}} = 0$ due to weak poisonedness.
        We get $\poisonp = \subst{\poison}{(\vara, \adrppp), (\vara, 0)}{\poisoned} \leq \subst{\substjoin{\poison}{\mems}{\poisonof{\regc}}}{\regs,\vara}{\poisoned} = \transferof{\inst}{\poison}$.

        \pcasep{\labelcref{rule:pstore-hsafe}}
        Then, $\inst = \IWstore{\vara}{\regb}{\regc}$ and $\rasgniof{\statec}{\regb} = \rasgniof{\stated}{(\regmapofof{\stated}{\regb}} = \adr \in \sizeof{\vara}$ due to healthiness.
        We get $\poisonp = \subst{\poison}{(\vara, \adr)}{\poisonof{\regc}} \leq \subst{\substjoin{\poison}{\mems}{\poisonof{\regc}}}{\regs,\vara}{\poisoned} = \transferof{\inst}{\poison}$.

        \pcasep{\labelcref{rule:pstore-hunsafe}}
        Then, $\inst = \IWstore{\vara}{\regb}{\regc}$ and $\direct = \directp = \Dstore{\varb}{\adrp}$.
        We get $\poisonp = \subst{\poison}{(\varb, \adrp)}{\poisonof{\regc}} \leq \substjoin{\subst{\poison}{\mems}{\poisonof{\regc}}}{\regs,\vara}{\poisoned} = \transferof{\inst}{\poison}$.

        \pcasep{\labelcref{rule:pshuf-slh}}
        Then, $\instp = \IWslh{\regap}$, where $\regmapofof{\statec}{\rega} = \regmapofof{\stated}{\rega} = \regap$.
        We get $\poisonp = \subst{\poison}{\rega}{\weakpoisoned} = \ptransferpof{\instp}{\poison}$. \qedhere
    \end{pcases}
\end{proof}

\subsection*{Full proof for \Cref{lem:trasnippy}}

Like for dead code elimination in \Cref{section:deadcode},
we need to prove that $(\simrel, \dtffamily)$
\begin{enumerate*}[label=\bfseries (\roman*)]
    \item is a simulation (\Cref{def:simulation}),
    \item respects $\lows$ (\Cref{def:loweqrespecting}),
        and \item is snippy (\Cref{def:snippy}).
\end{enumerate*}

The intermediary lemmas are proved subsequently in their own sections.

\noindent\begin{proof}[Proof that $(\simrel, \dtffamily)$ is $\sec$-respecting]
    We required $\lowsof{\varstack} = \low$ so when $\stated_1 \seceq \stated_2$,
    and $\statec_1 \nrrmp{\healthy} \stated_1$ as well as $\statec_2 \nrrmp{\healthy} \stated_2$,
    then $\statec_1$ and $\stated_1$ equal on all values in memory other than $\varstack$ (remember that $\poison$ is not defined on $\varstack$).
    Similarly, $\statec_2$ and $\stated_2$ equal on memory.
    So, $\statec_1 \seceq \statec_2$ if and only if $\stated_1 \seceq \stated_2$.
\end{proof}

\noindent\begin{proof}[Proof that $(\simrel, \dtffamily)$ is a simulation]
    In order to prove that $\simrel$ is a simulation (\Cref{def:simulation}),
    we first need to show that for all initial $\stated$ of $\tra$
    there is an initial $\statec$ for $\prog$ with $\stated \simrel \statec$.
    Secondly, for all pairs of states $\nstated \simrel \nstatec$, we need to derive $\proofnode{\sderivetgtsim{\guarded{\simrel}}{\dtfat{\nstatec}{\nstated}}{\nstated}{\nstatec}{\varepsilon}}$ in \Cref{rules:simulation}.
    For the initial states, let $\stated = (\entryp, \rasgnp, \masgnp)$.
    We construct $\statec = (\entry, \rasgn, \masgn)$.
    First, we know from instruction matching, that $\instinjof{\entry} = \entryp$, thus $\statec \rinstinj \stated$.
    We can then choose $\masgn = \masgnp$ and $\rasgnof{\rega} = \evalin{\stated}{\regmapofof{\entryp}{\rega}}$.

    For the second part, consider $\nstated \simrel \nstatec$.
    We need to provide a proof for $\proofnode{\sderivetgtsim{\guarded{\simrel}}{\dtfat{\nstatec}{\nstated}}{\nstated}{\nstatec}{\varepsilon}}$.
    The case where $\nstated$ is final is trivial, so consider non-final $\nstated$ and $\nstatec$.
    We need two small helping-lemmas that state that $\instinj$-injected instructions and shuffle instructions of~$\tra$ are preserved to the product.
    The proof of these we skip, but they rely on the fact that $(\instinj, \regmap)$ is poison-typable with $\passigns$.
    \begin{lemma}
        \label{lem:syncedrelrmp}
        Let $\nstated \simrel \nstatec$,
        $\nstated \ntrans{\leak}{\direct} \nstateh$,
        and $\pseq\lseq\poison = \pseqofof{\nstatec}{\nstated}$.
        Then, $(\nstatec, \nstated, \poison) \sprodtrans{\leakp}{\directp}{\leak}{\direct} (\nstatee, \nstateh, \poisonp)$
        with $\nstatee \rrmp{\poisonp} \nstateh$.
    \end{lemma}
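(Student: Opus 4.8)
The plan is to prove \Cref{lem:syncedrelrmp} by a case analysis on the rule of \Cref{rules:spec} used to derive the target step $\nstated \ntrans{\leak}{\direct} \nstateh$. In each case I would exhibit a matching product transition $(\nstatec,\nstated,\pseqofof{\nstatec}{\nstated}) \sprodtrans{\leakp}{\directp}{\leak}{\direct} (\nstatee,\nstateh,\pseqp)$, after which \Cref{lem:prod-well-def} immediately gives $\nstatee \nrrmp{\pseqp} \nstateh$; a short inspection of the rule used then shows $\pseqp$ coincides with $\pseqofof{}$ of the underlying stacks everywhere except possibly at the top level, which is exactly what $\nstatee \rrmp{\poisonp} \nstateh$ asks for. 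Throughout I unpack $\nstated \simrel \nstatec$ into $\nstatec \rinstinj \nstated$ together with $\nstatec \relrm \nstated$; by instruction matching $\instof{\nstated}$ is the same instruction as $\instof{\nstatec}$ with registers relocated through $\regmap$ (memory variables are not relocated). Writing $\poison$ for the top entry of $\pseqofof{\nstatec}{\nstated}$, note that $\poison$ is $\healthy$ when $\nstated$ is not speculating and $\passignsof{(\pc,\pcp)}$ when it is; combined with poison-typability this gives, at this instruction-matched point, that the address register $\regb$ of a load/store satisfies $\poisonof{\regb}\in\set{\healthy,\weakpoisoned}$ and the condition register of a branch satisfies $\poisonof{\regb}=\healthy$. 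This is the single fact that makes the replay possible.

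First I dispatch the easy rules. A rollback is mirrored by \labelcref{rule:prb}, whose depth premise holds because $\nstatec \relrm \nstated$ forces $\sizeof{\nstatec}=\sizeof{\nstated}\geq 2$; the produced poison sequence is the truncation of $\pseqofof{\nstatec}{\nstated}$, which is again of the required form. For $\IWsfence$ (it only fires non-speculating), $\IWnop$, and an assignment, the matching source step exists by instruction matching and the product rule is \labelcref{rule:psfence}, resp.\ the appendix rules; the poison update is the identity (nop, sfence) or the $\healthy$-versus-$\poisoned$ join of the operand registers (assignment), agreeing with the corresponding transfer function. For an instruction-matched $\IWslh{\rega}$ both programs execute the same rule of \Cref{rules:spec}, wiping $\rega$ (resp.\ its relocation) to $0$ exactly when the stack is non-empty, recorded by \labelcref{rule:pslh}. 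For a branch $\IWif{\regb}$: since $\poisonof{\regb}=\healthy$, the condition has equal value in $\nstatec$ and $\nstated$, so both take the same branch when $\direct=\Dif$ (appendix healthy-branch rule) and both miss-predict identically when $\direct=\Dspec$ (\labelcref{rule:hspec}); the new poison sequence copies or keeps $\poison$, consistent with $\pseqofof{}$ on the enlarged stack since $\poison=\passignsof{(\pc,\pcp)}$ (or $\healthy$ at depth one).

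The substantive work is the load and store cases; take $\instof{\nstated}=\IWload{\rega}{\vara}{\regb}{\pcsuc}$, stores being symmetric. If $\poisonof{\regb}=\healthy$, then $\nstatec$ and $\nstated$ compute the same address $\adr$ for the same $\vara$. If $\nstated$'s step is safe ($\direct=\Dstep$) then $\adr\in\sizeof{\vara}$, so the source load is safe too, and \labelcref{rule:pload-hsafe} applies, propagating $\poison$'s value at $(\vara,\adr)$ into $\rega$. If it is unsafe, $\direct=\Dload{\varb}{\adrp}$: when $\varb\neq\varstack$ the source can replay the same directive and \labelcref{rule:pload-hunsafe} applies; when $\varb=\varstack$ the source has no such variable and instead performs an arbitrary unsafe load, so \labelcref{rule:pintro-load} applies and poisons $\rega$. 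If $\poisonof{\regb}=\weakpoisoned$, then by $\nstatec \relrm \nstated$ the relocation of $\regb$ holds $0$ in $\nstated$, so $\nstated$ does the safe load $\stated\trans{\Eload{0}}{\Dstep}\statef$ while the source's own address register evaluates somewhere, giving a safe source load (\labelcref{rule:pload-ss}) or an unsafe one (\labelcref{rule:pload-us}), both poisoning $\rega$. Stores repeat this with \labelcref{rule:pstore-hsafe,rule:pintro-store,rule:pstore-hunsafe,rule:pstore-ss,rule:pstore-us}, additionally tracking the overwritten memory locations; in particular \labelcref{rule:pintro-store} is what lets the source store into $\vara$ when the target's unsafe store hits $\varstack$.

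Having exhibited the product transition out of $(\nstatec,\nstated,\pseqofof{\nstatec}{\nstated})$, \Cref{lem:prod-well-def} yields $\nstatee \nrrmp{\pseqp}\nstateh$; each rule above either leaves the lower stack levels of the poison sequence untouched, pops one level (rollback), or pushes a copy of $\poison$ (speculation), all in accordance with the recursive definition of $\pseqofof{}$ (using $\poison=\passignsof{(\pc,\pcp)}$ in the speculating case and $\healthy$ at depth one), so $\pseqp$ has the form $\pseqofof{(\text{body of }\nstatee)}{(\text{body of }\nstateh)}\lseq\poisonp$, i.e.\ $\nstatee \rrmp{\poisonp}\nstateh$. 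I expect the main obstacle to be purely bookkeeping: lining up each memory-access sub-case (healthy vs.\ weakly poisoned address register, safe vs.\ unsafe access, $\varstack$ vs.\ other variables) with exactly the right rule of $\prodra$, and keeping the non-speculating stack level (always $\healthy$ by \Cref{lem:spec-free-prod}) separate from the speculating levels, where the poison-typability bound $\passignsof{(\pc,\pcp)}$ is precisely the thing that forces the address or condition register to be $\healthy$ or $\weakpoisoned$. No individual step is deep, but closing the invariant requires care that the produced poison sequence coincides with $\pseqofof{}$ below the top.
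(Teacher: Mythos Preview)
Your approach is essentially the same as the paper's: a case distinction on the rule deriving $\nstated \ntrans{\leak}{\direct} \nstateh$, exhibiting the matching product rule in each case and then invoking \Cref{lem:prod-well-def} for the $\nrrmp{}$ conclusion; your sub-case split for loads and stores (healthy vs.\ weakly poisoned address register, safe vs.\ unsafe access, $\varstack$ vs.\ other target) mirrors the paper's exactly. You are in fact slightly more careful than the paper in two places---you handle the rollback case explicitly and you verify that the resulting poison sequence has the shape $\pseqofof{\cdot}{\cdot}\lseq\poisonp$ required by the shorthand in the conclusion---both of which the paper leaves implicit.
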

    \begin{lemma}
        \label{lem:shufflerelrmp}
        Let $\nstatee \rrmp{\poison} \nstateh$
        and $\nstateh \ntranss{\tracel}{\dtraced} \nstatef$ be shuffle only in $\tra$.
        Then, $\prodra$ has a unique execution $(\nstatee, \nstateh, \poison)
        \sprodtranss{\varepsilon}{\varepsilon}{\tracel}{\dtraced}
        (\nstatee, \nstatef, \poisonp)$ with $\nstatee \rrmp{\poisonp} \nstatef$.
    \end{lemma}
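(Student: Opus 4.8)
The plan is to induct on the length of the shuffle-only execution $\nstateh \ntranss{\tracel}{\dtraced} \nstatef$. Before starting the induction I would settle a preliminary claim: since $\nstateh$ sits inside a shuffle sequence, which is straight-line code, the execution contains no conditional branch and hence no $\Dspec$; and since it is shuffle only in $\tra$ -- it never executes an $\instinj$-injected instruction before reaching $\nstatef$ -- it also contains no $\Drb$, because a rollback returns the topmost speculation level to an instruction-matched state (this is precisely the shape of $\instinj$-intervals in \Cref{fig:phiinterval}). Hence $\dtraced = \Dstep \lseq \dots \lseq \Dstep$. The base case $\dtraced = \varepsilon$ is immediate: $\nstatef = \nstateh$, the empty product run serves, and $\poisonp = \poison$ with $\nstatee \rrmp{\poison} \nstatef$ by assumption.

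For the inductive step, split the execution as $\nstateh \ntrans{\leak}{\Dstep} \nstateh'$ followed by $\nstateh' \ntranss{\tracel'}{\dtraced'} \nstatef$, which is still shuffle only. The first step executes the shuffle instruction at the top of $\nstateh$'s stack (it must be a shuffle instruction, since the execution is shuffle only), so I would case on that instruction. For a move, fill, or spill, shuffle conformity furnishes the relocated source register $\rega$, and $\prodra$ mirrors the step with \labelcref{rule:pmove}, \labelcref{rule:pfill}, or \labelcref{rule:pspill} (lifted through \labelcref{rule:pstep} when $\nstateh$ is speculating), leaving the poison type untouched while $\prog$ stutters. For an $\IWsfence$ -- which, by \labelcref{rule:nsfence}, fires only at speculation depth $0$ -- it mirrors with \labelcref{rule:pshuf-sfence}, again without changing the poison type. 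For an slh shuffle instruction $\IWslh{\regap}$, where $\regmapofof{\nstateh}{\rega} = \regap$, rule \labelcref{rule:nslh} zeroes $\regap$ exactly when $\nstateh$ is speculating, and \labelcref{rule:pshuf-slh} correspondingly retypes $\rega$ to $\weakpoisoned$ exactly when $\nstateh$ is speculating (leaving $\poisonof{\rega}$ untouched otherwise); this matches the new value. In every case the constructed step has the form $(\nstatee, \nstateh, \poison) \sprodtrans{\varepsilon}{\varepsilon}{\leak}{\Dstep} (\nstatee, \nstateh', \poison')$, and $\nstatee \rrmp{\poison'} \nstateh'$ follows from \Cref{lem:prod-well-def}, noting that a shuffle step touches only the top of the stack, so the lower speculation levels -- which $\rrmp{}$ reads off $\passigns$ -- are unchanged. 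Applying the induction hypothesis to $\nstatee \rrmp{\poison'} \nstateh'$ and $\nstateh' \ntranss{\tracel'}{\dtraced'} \nstatef$ and prepending the first step yields $(\nstatee, \nstateh, \poison) \sprodtranss{\varepsilon}{\varepsilon}{\tracel}{\dtraced} (\nstatee, \nstatef, \poisonp)$ with $\nstatee \rrmp{\poisonp} \nstatef$.

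For uniqueness I would argue the run is forced: its source projection must be $\varepsilon$ since the shuffle instructions of $\tra$ do not occur in $\prog$, its target projection is the given $\nstateh \ntranss{\tracel}{\dtraced} \nstatef$, and on shuffling states each shuffle instruction matches a single rule of \Cref{rules:pshuffle}, lifted deterministically through \labelcref{rule:pstep}; so product transitions are deterministic here and the mirroring run is unique.

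The main obstacle I expect is the slh case together with keeping the $\rrmp{}$-bookkeeping aligned with the speculation level: the poison retyping must occur at exactly the level whose value changes, the lower levels must be verified untouched by shuffle steps, and weak poisoning must be recorded precisely in the speculating sub-case (where the wiped value is $0$ in $\tra$). The other point requiring care is the preliminary claim that "shuffle only" forces $\dtraced$ to be a sequence of $\Dstep$s; this is short but relies on both the straight-line nature of shuffle sequences and the fact that rollbacks exit the shuffle region.
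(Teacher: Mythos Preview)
Your proposal is correct and follows essentially the same approach as the paper: induction on the length of the shuffle-only run, with a case distinction on the shuffle instruction at the top state, invoking the corresponding rule of \Cref{rules:pshuffle} in each case. The only difference is that you delegate preservation of $\rrmp{}$ to \Cref{lem:prod-well-def}, whereas the paper re-verifies it explicitly in each case by using shuffle conformity to track the relocated source register through $\regmap$; your shortcut is valid since that lemma already covers the shuffle rules.
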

    Let $\nstated = \nstatedp\lseq\stated$ and $\nstatec = \nstatecp\lseq\statec$, and $\passignsof{(\statec, \stated)} = \poison$.
    We sketch how to derive $\proofnode{\sderivetgtsim{\guarded{\simrel}}{\dtfat{\nstatec}{\nstated}}{\nstated}{\nstatec}{\varepsilon}}$.
    We will first explore executions from $\nstated$ in $\tra$ with \labelcref{rule:tgt},
    perform a directive transformation with \labelcref{rule:direct-tf},
    replay the execution from $\nstatec$ in $\prog$,
    and finally prove that the reached states belong to $\simrel$ to end the proof with \labelcref{rule:coind}.
    First, we explore executions from $\nstated$ in $\tra$ with \labelcref{rule:tgt}.
    Every execution in $\tra$ from $\nstated$ eventually enters a $\instinj$-injected state.
    So we explore executions $\nstated \ntrans{\leak}{\direct} \nstateh \ntranss{\tracel}{\dtraced} \nstatef$,
    where $\nstatef$ is again $\instinj$-injected but no intermediary state is.
    The explored executions have the shape of the \textcolor{teal}{teal} and \textcolor{red}{red} paths of $\tra$ in \Cref{fig:phiinterval}.
    We thus need to prove $\proofnode{\sderivetgtsim{\unguarded{\simrel}}{\dtfat{\nstatec}{\nstated}}{\nstatef}{\nstatec}{\direct\lseq\dtraced}}$.
    The next step is to perform directive transformation with $\dtfat{\nstatec}{\nstated}$.
    But we defined the transformation on $\instinj$-intervals.
    So we find the appropriate transitions for $\nstated \ntrans{\leak}{\direct} \nstateh \ntranss{\tracel}{\dtraced} \nstatef$ in $\prodra$.
    For the first transition,
    \Cref{lem:syncedrelrmp} yields $(\nstatec, \nstated, \poison) \sprodtrans{\leakp}{\directp}{\leak}{\direct} (\nstatee, \nstateh, \poisonp)$
    so that $\nstatee \rrmp{\poisonp} \nstateh$.
    For the remaining transitions, there are two cases:

    \pcase{\textcolor{red}{No $\Drb$ occurs in $\dtraced$.}}
    Then, $\nstateh \ntranss{\tracel}{\dtraced} \nstatef$ is shuffle-only,
    and we can apply \Cref{lem:shufflerelrmp} for transitions
    $(\nstatee, \nstateh, \poisonp)
    \sprodtranss{\varepsilon}{\varepsilon}{\tracel}{\dtraced}
    (\nstatee, \nstatef, \poisonpp)$ with $\nstatee \rrmp{\poisonpp} \nstatef$.
    Together with the first transition, $\prodra$ can execute the transitions $(\nstatec, \nstated, \poison)
    \sprodtrans{\leakp}{\directp}{\leak}{\direct}
    (\nstatee, \nstateh, \poisonp)
    \sprodtranss{\varepsilon}{\varepsilon}{\tracel}{\dtraced}
    (\nstatee, \nstatef, \poisonpp)$.
    This is precisely the \textcolor{red}{red} $\instinj$-interval, so $\directp \dtfat{\nstatec}{\nstated} \direct\lseq\dtraced$.
    We can thus use \labelcref{rule:direct-tf} and are left proving $\proofnode{\sderivesrcsim{\guarded{\simrel}}{\dtfat{\nstatec}{\nstated}}{\nstatef}{\nstatec}{\directp}}$.
    However, $(\nstatec, \nstated, \poison) \sprodtrans{\leakp}{\directp}{\leak}{\direct} (\nstatee, \nstateh, \poisonp)$ already provides us the replay $\nstatec \ntrans{\leakp}{\directp} \nstatee$ in $\prog$.
    We can use it with \labelcref{rule:src},
    and it remains to show $\proofnode{\sderivesrcsim{\unguarded{\simrel}}{\dtfat{\nstatec}{\nstated}}{\nstatef}{\nstatee}{\varepsilon}}$.
    In order to use \labelcref{rule:coind}, we need to establish $\nstatef \simrel \nstatee$,
    i.e.\ $\nstatee \rinstinj \nstatef$ and $\nstatee \relrm \nstatef$.
    The condition $\nstatee \rinstinj \nstatef$ holds because it has been the condition for terminating exploration in $\tra$.
    For the latter condition, we have already established $\nstatee \rrmp{\poisonpp} \nstatef$.
    Thanks to \Cref{lem:solutionapprox},
    we also have $\pseqofof{\nstateep}{\nstatefp}\lseq\poisonpp \leq \pseqofof{\nstatee}{\nstatef}$,
    where $\nstatee = \nstateep\lseq\statee$ and $\nstatef = \nstatefp\lseq\statef$.
    We designed the lattice so that $\nrrmp{\poison}$ is monotonic in $\poison$, so they yield $\nstatee \relrm \nstatef$.
    Thus, we get $\nstatef \simrel \nstatee$ and can use \labelcref{rule:coind}.

    \pcase{\textcolor{teal}{Otherwise, let $\Drb$ occur in $\dtraced$.}}
    Then, the explored execution from $\nstateh$ is $\nstateh \ntranss{\tracelp}{\dtracedp} \nstatey \ntrans{\Erbless}{\Drb} \nstatef$ with $\dtraced = \dtracedp\lseq\Drb$.
    Indeed, $\dtraced$ must end with $\Drb$, because $\nstatef$ is already $\instinj$-injected:
    When $\nstateh = \nstatehp\lseq\stateh$ and $\nstatee = \nstateep\lseq\statee$,
    then $\nstatef = \nstatehp$.
    Further, $\nstatehp$ and $\nstateep$ are already contained in $\nstated$ and $\nstatec$, respectively.
    Because $\nstatec \rinstinj \nstated$ we also have $\nstateep \rinstinj \nstatehp = \nstatef$.
    So $\nstatef$ is $\instinj$-injected and exploration stopped upon discovering it.
    Again, by \Cref{lem:shufflerelrmp}
    there is 
    $(\nstatee, \nstateh, \poisonp)
    \sprodtranss{\varepsilon}{\varepsilon}{\tracel}{\dtraced}
    (\nstatee, \nstatey, \poisonpp)$ with $\nstatee \rrmp{\poisonpp} \nstatey$.
    By definition of $\prodra$, $(\nstatee, \nstatey, \poisonp) \prodtrans{\Erb}{\Drb}{\Erb}{\Drb} (\nstateep, \nstatehp, \pseqofof{\nstateep}{\nstatehp})$.
    This is one of the \textcolor{teal}{teal} $\instinj$-intervals, so $\directp\lseq\Drb \dtfat{\nstatec}{\nstated} \direct\lseq\dtracedp\lseq\Drb$.
    The remaining arguments for \labelcref{rule:src,rule:coind} are similar to the previous case.
\end{proof}

\begin{corollary}
    For $\nstated \simrel \nstatec$, $(\nstatec, \nstated, \pseqofof{\nstatec}{\nstated}) \sprodtranss{\tracek}{\dtracee}{\tracel}{\dtraced} (\nstatee, \nstatef, \pseqp)$ is a $\instinj$-interval 
    iff $(\nstatec, \nstated) \synctrans{\tracek}{\dtracee}{\tracel}{\dtraced} (\nstatee, \nstatef)$.
\end{corollary}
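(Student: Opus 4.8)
Both directions unwind definitions; the mathematical content is already present in the proof that $(\simrel, \dtffamily)$ is a simulation, which builds, for each $\nstated \simrel \nstatec$, a proof tree whose uses of \labelcref{rule:direct-tf} occur exactly at the $\instinj$-matched states reached while exploring $\tra$. I would organize the argument as a correspondence between the \labelcref{rule:sync}-paths of that tree and the $\instinj$-intervals starting in $(\nstatec, \nstated, \pseqofof{\nstatec}{\nstated})$, treating the $\poison$ in the definition of $\dtffamily$ as this static starting value (as in \Cref{lem:syncedrelrmp}); by \Cref{lemma:directive-det} the $\prog$- and $\tra$-projections of an $\instinj$-interval do not depend on the starting poison sequence anyway, and by monotonicity of $\nrrmp{\poison}$ in $\poison$ together with \Cref{lem:solutionapprox} re-running the product from $\pseqofof{\nstatec}{\nstated}$ only ever keeps transitions enabled, so this choice is harmless.

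For the ``only if'' direction I would take an $\instinj$-interval $(\nstatec, \nstated, \pseqofof{\nstatec}{\nstated}) \sprodtranss{\tracek}{\dtracee}{\tracel}{\dtraced} (\nstatee, \nstatef, \poisonp)$, project its transitions onto the two programs to obtain $\nstated \ntranss{\tracel}{\dtraced} \nstatef$ and $\nstatec \ntranss{\tracek}{\dtracee} \nstatee$, and observe $\dtracee \dtfat{\nstatec}{\nstated} \dtraced$ by definition of $\dtffamily$. To get $\nstatef \simrel \nstatee$ I would use that $\nstatee \rinstinj \nstatef$ is built into the notion of $\instinj$-interval, and that $\nstatee \relrm \nstatef$ follows from \Cref{lem:prod-well-def} (a product transition from $\nstatec \nrrmp{\poison} \nstated$ lands in $\nstatee \nrrmp{\poisonp} \nstatef$), \Cref{lem:solutionapprox} ($\passigns$ dominates the reached poison sequence), and monotonicity of $\nrrmp{\poison}$ in $\poison$. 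The branch of the constructed proof tree that explores precisely $\nstated \ntranss{\tracel}{\dtraced} \nstatef$ --- which cannot terminate early, since no intermediary state pair of an $\instinj$-interval is $\instinj$-matched, hence none is final --- then supplies exactly the node path required by \labelcref{rule:sync} (repeated \labelcref{rule:tgt}, then \labelcref{rule:direct-tf} with $\dtracee \dtfat{\nstatec}{\nstated} \dtraced$, then \labelcref{rule:src} along the replay, then \labelcref{rule:coind} discharged by $\nstatef \simrel \nstatee$), so $(\nstatec, \nstated) \synctrans{\tracek}{\dtracee}{\tracel}{\dtraced} (\nstatee, \nstatef)$.

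For the ``if'' direction I would start from a simulation interval and read off from \labelcref{rule:sync} that $\dtracee \dtfat{\nstatec}{\nstated} \dtraced$ together with $\nstatec \ntranss{\tracek}{\dtracee} \nstatee$ and $\nstated \ntranss{\tracel}{\dtraced} \nstatef$. Unfolding the definition of $\dtffamily$ yields some $\instinj$-interval $(\nstatec, \nstated, \pseqofof{\nstatec}{\nstated}) \sprodtranss{\tracek'}{\dtracee}{\tracel'}{\dtraced} (\nstatee', \nstatef', \poisonpp)$, whose projections are $\nstatec \ntranss{\tracek'}{\dtracee} \nstatee'$ and $\nstated \ntranss{\tracel'}{\dtraced} \nstatef'$; since \Cref{lemma:directive-det}, iterated along the directive sequences, makes the $\prog$- and $\tra$-executions functions of $\dtracee$ and $\dtraced$ respectively, I would conclude $\tracek = \tracek'$, $\nstatee = \nstatee'$, $\tracel = \tracel'$, and $\nstatef = \nstatef'$, so this interval is exactly the one claimed.

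The hard part is not depth but reconciling two bookkeeping conventions: I would have to verify that the proof tree built in the simulation proof applies \labelcref{rule:direct-tf} \emph{only} at $\instinj$-matched states, so that its \labelcref{rule:sync}-paths are in bijection with $\instinj$-intervals rather than a refinement of them, and I would have to handle the rollback-terminated variant of an $\instinj$-interval (the \textcolor{teal}{teal} paths of \Cref{fig:phiinterval}) separately, where the interval closes by rolling $\prog$ and $\tra$ back simultaneously to a state pair that is $\instinj$-matched precisely because it already sits below the top of $\nstatec$ and $\nstated$.
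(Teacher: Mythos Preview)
The paper states this corollary without proof, treating it as immediate from the way the simulation proof tree was constructed (exploration via \labelcref{rule:tgt} stops exactly at the first $\instinj$-matched state, where \labelcref{rule:direct-tf} is applied with a $\dtracee$ coming from a $\instinj$-interval). Your proposal correctly unpacks this: you identify that the crux is the bijection between branches of the constructed proof tree and $\instinj$-intervals, and you use \Cref{lemma:directive-det} to pin down endpoints and leakages in the ``if'' direction. That is exactly the intended argument.

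One remark deserves tightening. Your aside that ``re-running the product from $\pseqofof{\nstatec}{\nstated}$ only ever keeps transitions enabled'' by monotonicity is not correct as a general claim: the product rules branch on whether $\poisonof{\regb}$ is $\healthy$, $\weakpoisoned$, or $\poisoned$, and moving from $\healthy$ to $\poisoned$ \emph{disables} all load/store/branch rules (there is no rule for a poisoned address register). Fortunately this remark is not load-bearing. With the paper's shorthand convention $(\nstatec\lseq\statec,\nstated\lseq\stated,\poison) = (\nstatec\lseq\statec,\nstated\lseq\stated,\pseqofof{\nstatec}{\nstated}\lseq\poison)$, the $\poison$ appearing in the definition of $\dtfat{\nstatec}{\nstated}$ is the top component of $\pseqofof{\nstatec}{\nstated}$ itself (as used in \Cref{lem:syncedrelrmp} and throughout the simulation proof), so there is no ``other'' starting poison to reconcile. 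You already say this (``treating the $\poison$ in the definition of $\dtffamily$ as this static starting value''); just drop the monotonicity justification and rely on the notational convention directly.

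A second small point: in the ``only if'' direction you say intermediary states are not final because they are not $\instinj$-matched. The real reason the proof tree cannot cut short is structural: \labelcref{rule:final} only applies at the root (guarded) node, and after one \labelcref{rule:tgt} step you are in the unguarded regime where only \labelcref{rule:tgt} and \labelcref{rule:direct-tf} are available; the constructed tree applies the latter precisely at the first $\instinj$-matched successor. Your conclusion is right, the stated reason just needs this adjustment.
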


\noindent\begin{proof}[Proof that $(\simrel, \dtffamily)$ is snippy]
    In order to prove the simulation snippy, we are given the black parts of \Cref{fig:snippysim}.
    So, consider $\nstated_1 \simrel \nstatec_1$, $\nstated_2 \simrel \nstatec_2$, $\nstated_1 \samepoint \nstated_2$, and $\nstatec_1 \samepoint \nstatec_2$.
    Further, consider a simulation interval $(\nstatec_1, \nstated_1) \synctrans{\tracek}{\dtracee}{\tracel}{\dtraced} (\nstatee_1, \nstatef_1)$ and $\nstatec_2 \ntrans{\tracek}{\dtracee} \nstatee_2$.
    We need to prove $(\nstatec_2, \nstated_2) \synctrans{\tracek}{\dtracee}{\tracel}{\dtraced} (\nstatee_2, \nstatef_2)$.
    As simulation intervals are the same as $\instinj$-intervals,
    we have the $\instinj$-interval $(\nstatec_1, \nstated_1, \pseqofof{\nstatec_1}{\nstated_1}) \sprodtranss{\tracek}{\dtracee}{\tracel}{\dtraced} (\nstatee_1, \nstatef_1, \poisonp)$,
    and want to prove that $(\nstatec_2, \nstated_2, \pseqofof{\nstatec_2}{\nstated_2}) \sprodtrans{\tracek}{\dtracee}{\tracel}{\dtraced} (\nstatee_2, \nstatef_2, \poisonp)$.
    We again split the interval into the instruction-matched transition and a shuffle sequence:
    $(\nstatec_1, \nstated_1, \pseqofof{\nstatec_1}{\nstatec_1}) \sprodtrans{\leakp}{\directp}{\leak}{\direct} (\nstateg_1, \nstateh_1, \poisonpp) \sprodtranss{\tracekp}{\dtraceep}{\tracelp}{\dtracedp} (\nstatee_1, \nstatef_1, \poisonp)$.
    The difficult part is to reproduce the instruction-matched transition from $(\nstatec_2, \nstated_2, \pseqofof{\nstatec_2}{\nstated_2})$.
    Because $\pseqofof{\nstatec_1}{\nstated_1}$ is only dependent on program counters, $\pseqofof{\nstatec_1}{\nstated_1} = \pseqofof{\nstatec_2}{\nstated_2}$.
    \begin{lemma}
        There is a transition $(\nstatec_2, \nstated_2, \pseqofof{\nstatec_2}{\nstated_2}) \sprodtrans{\leakp}{\directp}{\leak}{\direct} (\nstatee_2, \nstatef_2, \poisonp)$ in $\prodra$.
        \label{lem:repeat-instinj}
    \end{lemma}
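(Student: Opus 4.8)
The plan is a case analysis on the $\prodra$-rule that produced the instruction-matched step from state~1, transporting that rule to state~2; the shuffle suffix of the $\instinj$-interval then follows exactly as in the simulation proof (via \Cref{lem:shufflerelrmp} and \Cref{lem:prod-well-def}). First I would record the setup. From $\nstated_1 \samepoint \nstated_2$ and $\nstatec_1 \samepoint \nstatec_2$ the two product states sit at the same program counters and the same speculation depth, so $\inst \defeq \instof{\nstatec_1} = \instof{\nstatec_2}$, and since $\pseqofof{\cdot}{\cdot}$ depends only on program counters, $\pseqofof{\nstatec_1}{\nstated_1} = \pseqofof{\nstatec_2}{\nstated_2} =: \pseq\lseq\poison$. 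The rule that fired for state~1 is determined by $\inst$, by $\poison$, and---for a memory access or a branch---by the in-/out-of-bounds status of the addressing or condition register. Its poison-type premises speak only about $\poison$, so they hold verbatim for state~2; what remains is to reproduce the value-level premises.

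Next I would transport the source side. Snippyness hands us $\nstatec_2 \ntranss{\tracek}{\dtracee} \nstatee_2$ with the \emph{same} source directives $\dtracee$, and every $\instinj$-interval has $\dtracee$ starting with the source directive $\directp$ of its first step (the only other possibility being a trailing $\Drb$ from a rollback in the shuffle phase). Hence by \Cref{lemma:directive-det} there is a unique $\nstateg_2$ with $\nstatec_2 \ntrans{\leakp}{\directp} \nstateg_2$, and the shared source leakage $\tracek$ forces the same $\leakp$ as for state~1; \Cref{lem:programpoint-by-leakage} then gives $\nstateg_1 \samepoint \nstateg_2$. Crucially, for a load/store the leaked address \emph{is} the value of the addressing register, and for a branch the leaked bit is its zero-test, so the leakage equality already pins down the relevant information about $\regb$ in $\nstatec_1$ and $\nstatec_2$ to agree. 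This is where poison-typability pays off: its additional constraints force $\regb$ to be $\healthy$ or (for loads/stores) $\weakpoisoned$ under $\poison$. If $\healthy$, then $\nstatec_i \relrm \nstated_i$ gives $\evalin{\nstated_i}{\regmapofof{\nstated_i}{\regb}} = \rasgniof{\nstatec_i}{\regb}$ for $i \in \{1,2\}$, so the relocated register carries matching information in $\nstated_1$ and $\nstated_2$; consequently the target step from $\nstated_2$ can take the \emph{same} directive $\direct$ (same in-/out-of-bounds status, same branch direction for $\Dif$/$\Dspec$) and produces the \emph{same} target leakage $\leak$, and---being the same instruction---leaves the same shuffle sequence ahead of it. If $\weakpoisoned$ (only a load or store), then $\evalin{\nstated_i}{\regmapofof{\nstated_i}{\regb}} = 0$ in both, the target accesses offset $0$ (in bounds), leaking $\Eload{0}$ resp.\ $\Estore{0}$ irrespective of the state, and state~2 simply applies whichever of \labelcref{rule:pload-ss,rule:pload-us,rule:pstore-ss,rule:pstore-us} matches the source directive $\directp$ that it is forced to replay.

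The remaining---and main---difficulty is the pair of poison-introducing rules \labelcref{rule:pintro-load,rule:pintro-store}, where the product transition is genuinely non-lockstep: the healthy addressing register is out of bounds, the target touches $\varstack$, and the source must perform \emph{some} unsafe access that the target cannot mirror (an arbitrary unsafe load for \labelcref{rule:pintro-load}, a forced store to $\vara$ for \labelcref{rule:pintro-store}). The point is that this source choice is completely encoded in $\directp$, and state~2 replays exactly that $\directp$, so the same rule fires; the resulting poison type is then the same as for state~1 because it depends only on $\poison$, $\regmap$, the program counters, and the offset carried by $\directp$. Finally, \Cref{lem:prod-well-def} certifies that the reached triple is again a legal $\prodra$-state. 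I expect the bookkeeping around \labelcref{rule:pintro-store} (which registers and memory locations are poisoned, and matching those sets on both sides) to be the part that needs the most care.
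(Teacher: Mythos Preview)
Your proposal is correct and matches the paper's proof: both do a case analysis on the $\prodra$-rule that fired for the state-1 transition, use the assumption $\nstatec_2 \ntrans{\leakp}{\directp} \nstatee_2$ (the first source step of the replayed interval) to pin down the relevant register value on the source side, then invoke $\nstatec_2 \rrmp{\poison} \nstated_2$ together with the poison-typability constraint $\poisonof{\regb} \in \{\healthy,\weakpoisoned\}$ to transfer that value to $\nstated_2$ and fire the same target directive $\direct$ with the same leakage $\leak$. The paper also records, as you do, that $\poisonp$ depends only on the rule, $\poison$, and the directives---not on values---so equality of $\poisonp$ is automatic once the same rule fires. One small over-scoping: the lemma is only about the single instruction-matched step; the shuffle suffix is handled separately (in the paper by \Cref{lem:repeat-shuffle}, not here), so your reference to \Cref{lem:shufflerelrmp} in the plan belongs to the enclosing snippyness argument rather than to this lemma.
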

    We reproduce the second part of the $\instinj$-interval,
    $(\nstateg_1, \nstateh_1, \poisonpp_1) \sprodtranss{\tracekp}{\dtraceep}{\tracelp}{\dtracedp} (\nstatee_1, \nstatef_1, \poisonp_1)$.
    Once more, we perform case distinction on whether $\dtracedp$ and $\dtraceep$ follow the \textcolor{red}{red} or \textcolor{teal}{teal} paths in \Cref{fig:phiinterval}.
    \pcase{\textcolor{red}{No $\Drb$ occurs in $\dtracedp$.}}
    In this case, $\dtracedp$ is shuffle-only and $\dtraceep = \varepsilon$.
    Because shuffle semantics are deterministic and all have the same directive $\Dstep$,
    the following lemma is straightforward to show.
    Together with \Cref{lem:shufflerelrmp}, the shuffle sequence is reproduced from $(\nstateg_2, \nstateh_2, \poisonp)$.
    \begin{lemma}
        If $\nstatef_1 \samepoint \nstatef_2$ and $\instof{\nstatef_1} = \shufseq{\pcsucp}$ with shuffle-only $\nstatef_1 \ntranss{\tracel}{\dtraced} \nstateh_1$,
        then $\nstatef_2 \ntranss{\tracel}{\dtraced} \nstateh_2$.
        \label{lem:repeat-shuffle}
    \end{lemma}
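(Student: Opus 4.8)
The plan is to prove \Cref{lem:repeat-shuffle} by induction on the length of the shuffle-only execution $\nstatef_1 \ntranss{\tracel}{\dtraced} \nstateh_1$. The base case $\tracel = \dtraced = \varepsilon$ is immediate by taking $\nstateh_2 = \nstatef_2$. For the inductive step I would split off the first transition, writing $\nstatef_1 \ntrans{\leak}{\Dstep} \nstatef_1' \ntranss{\tracelp}{\dtracedp} \nstateh_1$; here the directive is necessarily $\Dstep$ because the execution is shuffle-only (there is no $\Drb$, and shuffle instructions neither branch nor access memory out of bounds, since $\varstack$ is sized to hold every spilled register). Since the current instruction of $\nstatef_1$ is part of a shuffle sequence ending in $\pcsucp$, it is a shuffle instruction $\shuffleinst$, and $\nstatef_1 \samepoint \nstatef_2$ gives both that $\nstatef_2$'s current instruction is the same $\shuffleinst$ and that $\nstatef_1$ and $\nstatef_2$ have the same speculation depth.

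The core of the argument is a case distinction on $\shuffleinst$, establishing in each case that $\nstatef_2$ can take the matching transition $\nstatef_2 \ntrans{\leak}{\Dstep} \nstatef_2'$ with $\nstatef_1' \samepoint \nstatef_2'$. For a move, a spill, or a fill the transition is unconditionally enabled, the target is the unique static successor $\pcss$ of $\shuffleinst$, and the leakage is $\Enone$, $\Estore{\stackl}$, or $\Eload{\stackl}$ respectively --- in each case determined by the program counter and a constant stack offset alone, never by register or memory contents. For $\IWslh{\rega}$ the transition is again always enabled with directive $\Dstep$, successor $\pcss$, and leakage $\Enone$; only the value written to $\rega$ depends on whether the state speculates, which is irrelevant to program point and leakage (and in any case coincides for $\nstatef_1$ and $\nstatef_2$ by equal speculation depth). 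For $\IWsfence$ the only side condition is that the current state be non-speculating; since $\nstatef_1$ performed the step it is non-speculating, hence so is $\nstatef_2$, and it can take the same step. In every case the leakage $\leak$ and the directive $\Dstep$ agree for the two executions, $\nstatef_1' \samepoint \nstatef_2'$, and, crucially, a shuffle step neither pushes nor pops the speculation stack, so the equal-depth invariant is maintained.

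To close the induction I would note that, since the instruction at $\nstatef_1$ belongs to a shuffle sequence ending in $\pcsucp$, the instruction reached at $\nstatef_1'$ is again such a (possibly empty) shuffle sequence ending in $\pcsucp$, so the residual execution $\nstatef_1' \ntranss{\tracelp}{\dtracedp} \nstateh_1$ again satisfies the hypotheses of the lemma; the induction hypothesis yields $\nstatef_2' \ntranss{\tracelp}{\dtracedp} \nstateh_2$, and prepending the first transition gives $\nstatef_2 \ntranss{\tracel}{\dtraced} \nstateh_2$. The only point requiring genuine care --- and the step I expect to be the main obstacle --- is the two speculation-sensitive shuffle instructions $\IWsfence$ and $\IWslh{\rega}$: their behaviour depends on the speculation state, so the argument rests entirely on the observations that $\samepoint$ between speculating states forces equal speculation depth and that a shuffle-only run preserves that depth; once these are in place the remaining cases are mechanical.
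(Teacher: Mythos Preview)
Your proposal is correct and follows essentially the same approach as the paper: induction on the length of the shuffle-only execution, with a case distinction on the shuffle instruction at each step, using that each shuffle transition's enabledness, successor, and leakage depend only on the program point (and, for $\IWsfence$/$\IWslh{\rega}$, the speculation depth, which $\samepoint$ already fixes). The paper's proof is terser---it only spells out the inductive step and omits the bookkeeping about $\samepoint$ preservation---but the argument is the same.
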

    Indeed, we complete reproduction in this case:
    With $(\nstatec_2, \nstated_2, \poison) \sprodtrans{\leakp}{\directp}{\leak}{\direct} (\nstateg_2, \nstateh_2, \poisonpp)$ from \Cref{lem:repeat-instinj}
    and $(\nstateg_2, \nstateh_2, \poisonpp_2) \sprodtranss{\varepsilon}{\varepsilon}{\tracelp}{\dtracedp} (\nstatee_2, \nstatef_2, \poisonp)$ from \Cref{lem:repeat-shuffle,lem:shufflerelrmp},
    we constructed $(\nstatec_2, \nstated_2, \pseqofof{\nstatec_2}{\nstated_2}) \sprodtrans{\tracek}{\dtracee}{\tracel}{\dtraced} (\nstatee_2, \nstatef_2, \poisonp)$.

    \pcase{\textcolor{teal}{Otherwise, let $\Drb$ occur in $\dtraced$.}}
    Then, $(\nstateg_1, \nstateh_1, \poisonpp) \sprodtranss{\tracekp}{\dtraceep}{\tracelp}{\dtracedp} (\nstatee_1, \nstatef_1, \poisonp)$
    is actually the teal path $(\nstateg_1, \nstateh_1, \poisonpp) \sprodtranss{\varepsilon}{\varepsilon}{\tracelp'}{\dtracedp'} (\nstategp_1, \nstatehp_1, \poisonpp') \sprodtrans{\Erb}{\Drb}{\Erb}{\Drb} (\nstatee_1, \nstatef_1, \poisonp)$,
    where $\dtracedp'$ is a shuffle sequence.
    Same as in the previous case, we can reproduce $(\nstateg_2, \nstateh_2, \poisonpp) \sprodtranss{\varepsilon}{\varepsilon}{\tracelp'}{\dtracedp'} (\nstategp_2, \nstatehp_2, \poisonpp')$.
    The product can clearly execute $(\nstategp_2, \nstatehp_2, \poisonpp') \sprodtrans{\Erb}{\Drb}{\Erb}{\Drb} (\nstatee_2, \nstatef_2, \poisonp)$
    because $\sizeof{\nstategp_2} = \sizeof{\nstatehp_2} = \sizeof{\nstategp_1} = \sizeof{\nstatehp_1} > 1$.
    Reproduction of $(\nstatec_2, \nstated_2, \pseqofof{\nstatec_2}{\nstated_2}) \sprodtrans{\tracek}{\dtracee}{\tracel}{\dtraced} (\nstatee_2, \nstatef_2, \poisonp)$ is done.
\end{proof}

\subsection*{\Cref{lem:syncedrelrmp}}
\begin{proof}
    By construction of the product.
    Case distinction on $\nstated_1 \ntrans{\leak}{\direct} \nstateh_1$.
    Throughout the proof, we assume $\regmapofof{\nstated}{\rega} = \regap$, 
    $\regmapofof{\nstated}{\regb} = \regbp$,
    and $\regmapofof{\nstated}{\regc} = \regcp$.
    Further, we invariantly let $\nstated = \nstatedp\lseq\stated$,
    $\nstatec = \nstatecp\lseq\statec$,
    $\nstateh = \nstatehp\lseq\stateh$,
    $\nstatee = \nstateep\lseq\statee$,
    Also $\instof{\nstated} = \instp$ and $\instof{\nstatec} = \inst$.
    $\nstatec \rrmp{\poisonp} \nstateh$ follows from \Cref{lem:prod-well-def}.

    \begin{pcases}
        \pcasep{\labelcref{rule:nop,,rule:asgn}} We skip \labelcref{rule:nop}.
        For \labelcref{rule:asgn}, let $\instp = \Iasgn{\regap}{\regbp \op \regcp}{\pcsucp}$.
        Then $\inst = \Iasgn{\rega}{\regb \op \regc}{\pcsuc}$.
        Clearly there is $\nstatee$ with $\nstatec \ntrans{\Enone}{\Dstep} \nstatee$.
        \Cref{rule:pasgn} yields
        $(\nstatec, \nstated, \poison) \sprodtrans{\Enone}{\Dstep}{\Enone}{\Dstep} (\nstatee, \nstateh, \poisonp)$.
        $\nstatee \rrmp{\poisonp} \nstateh$ follows from construction of \Cref{rule:pasgn}.

        \pcasep{\labelcref{rule:load}}
        For \labelcref{rule:load}, let $\instp = \Iload{\regap}{\vara}{\regbp}{\pcsucp}$ and $\leak = \Eload{\adr}$.
        We have $\poisonof{\regb} \geq \weakpoisoned$,
        thus $\rasgniof{\nstated}{\regbp} = 0$ or $\rasgniof{\nstated}{\regbp} = \rasgniof{\nstatec}{\regb}$.
        \begin{smcases}
            \smcase{$\rasgniof{\nstatec}{\regb} = \adrp \in \sizeof{\vara}$}
            Clearly there is $\nstatee$ with $\nstatec \ntrans{\Eload{\adrp}}{\Dstep} \nstatee$. \\
            \labelcref{rule:pload-ss} yields $(\nstatec, \nstated, \poison) \prodtrans{\Eload{\adrp}}{\Dstep}{\Eload{\adr}}{\Dstep} (\nstatee, \nstateh, \poisonp)$.

            \smcase{$\rasgniof{\nstatec}{\regb} = \adrp \notin \sizeof{\vara}$}
            Clearly there is $\nstatee$ with $\nstatec \ntrans{\Eload{\adrp}}{\Dload{\vara}{0}} \nstatee$. \\
            \labelcref{rule:pload-us} yields $(\nstatec, \nstated, \poison) \prodtrans{\Eload{\adrp}}{\Dload{\vara}{0}}{\Eload{\adr}}{\Dstep} (\nstatee, \nstateh, \poisonp)$.
        \end{smcases}

        \pcasep{\labelcref{rule:store}}
        For \labelcref{rule:store}, let $\instp = \Istore{\vara}{\regbp}{\regcp}{\pcsucp}$ and $\leak = \Estore{\adr}$.
        We have $\poisonof{\regb} \geq \weakpoisoned$,
        thus $\rasgniof{\nstated}{\regbp} = 0$ or $\rasgniof{\nstated}{\regbp} = \rasgniof{\nstatec}{\regb}$.

        \begin{smcases}
            \smcase{$\rasgniof{\nstatec}{\regb} = \adrp \in \sizeof{\vara}$}
            Clearly there is $\nstatee$ with $\nstatec \ntrans{\Estore{\adrp}}{\Dstep} \nstatee$. \\
            \labelcref{rule:pstore-ss} yields $(\nstatec, \nstated, \poison) \prodtrans{\Estore{\adrp}}{\Dstep}{\Estore{\adr}}{\Dstep} (\nstatee, \nstateh, \poisonp)$.

            \smcase{$\rasgniof{\nstatec}{\regb} = \adrp \notin \sizeof{\vara}$}
            Clearly there is $\nstatee$ with $\nstatec \ntrans{\Estore{\adrp}}{\Dstore{\vara}{0}} \nstatee$. \\
            \labelcref{rule:pstore-us} yields $(\nstatec, \nstated, \poison) \prodtrans{\Estore{\adrp}}{\Dstore{\vara}{0}}{\Estore{\adr}}{\Dstep} (\nstatee, \nstateh, \poisonp)$.
        \end{smcases}

        \pcasep{\labelcref{rule:load-unsafe}}
        For \labelcref{rule:load-unsafe}, let $\instp = \Iload{\regap}{\vara}{\regbp}{\pcsucp}$ and $\leak = \Eload{\adr}$ and $\direct = \Dload{\varb}{\adrp}$.
        We have $\poisonof{\regb} \geq \weakpoisoned$,
        thus $\rasgniof{\nstated}{\regbp} = 0$ or $\rasgniof{\nstated}{\regbp} = \rasgniof{\nstatec}{\regb}$.
        Due to $0 \in \sizeof{\vara}$, only $\rasgniof{\nstated}{\regbp} = \rasgniof{\nstatec}{\regb}$ needs to be considered.
        We consider two cases:

        \begin{smcases}
            \smcase{$\varb \neq \varstack$}
            There is $\nstatee$ with $\nstatec \ntrans{\Eload{\adrp}}{\Dload{\varb}{\adrp}} \nstatee$.\\
            \labelcref{rule:pload-hunsafe} yields $(\nstatec, \nstated, \poison) \prodtrans{\Eload{\adrp}}{\Dstep}{\Eload{\adr}}{\Dstep} (\nstatee, \nstateh, \poisonp)$.

            \smcase{$\varb = \varstack$}
            There is $\nstatee$ with $\nstatec \ntrans{\Eload{\adrp}}{\Dload{\vara}{0}} \nstatee$. \\
            \labelcref{rule:pintro-load} yields $(\nstatec, \nstated, \poison) \prodtrans{\Eload{\adrp}}{\Dload{\vara}{0}}{\Eload{\adr}}{\Dstep} (\nstatee, \nstateh, \poisonp)$.
        \end{smcases}

        \pcasep{\labelcref{rule:store-unsafe}}
        For \labelcref{rule:store-unsafe}, let $\instp = \Istore{\vara}{\regbp}{\regcp}{\pcsucp}$ and $\leak = \Estore{\adr}$ and $\direct = \Dstore{\varb}{\adrp}$.
        We have $\poisonof{\regb} \geq \weakpoisoned$,
        thus $\rasgniof{\nstated}{\regbp} = 0$ or $\rasgniof{\nstated}{\regbp} = \rasgniof{\nstatec}{\regb}$.
        Due to $0 \in \sizeof{\vara}$, only $\rasgniof{\nstated}{\regbp} = \rasgniof{\nstatec}{\regb}$ needs to be considered.
        We consider two cases:

        \begin{smcases}
            \smcase{$\varb \neq \varstack$}
            There is $\nstatee$ with $\nstatec \ntrans{\Estore{\adrp}}{\Dstore{\varb}{\adrp}} \nstatee$. \\
            \labelcref{rule:pstore-hunsafe} yields $(\nstatec, \nstated, \poison) \prodtrans{\Estore{\adrp}}{\Dstep}{\Estore{\adr}}{\Dstep} (\nstatee, \nstateh, \poisonp)$.

            \smcase{$\varb = \varstack$}
            There is $\nstatee$ with $\nstatec \ntrans{\Estore{\adrp}}{\Dstore{\vara}{0}} \nstatee$. \\
            \labelcref{rule:pintro-store} yields $(\nstatec, \nstated, \poison) \prodtrans{\Estore{\adrp}}{\Dstore{\vara}{0}}{\Estore{\adr}}{\Dstep} (\nstatee, \nstateh, \poisonp)$.
        \end{smcases}

        \pcasep{\labelcref{rule:branch}}
        For \labelcref{rule:branch}, let $\instp = \Iif{\regbp}{\pcsucp_{\btrue}}{\pcsucp_{\bfalse}}$ and $\leak = \Eif{\bvalue}$ and $\direct = \Dif$.
        We have $\poisonof{\regb} = \healthy$,
        thus $\rasgniof{\nstated}{\regbp} = \rasgniof{\nstatec}{\regb}$.
        Then there is $\nstatee$ with $\nstatec \ntrans{\Eif{\bvalue}}{\Dif} \nstatee$.
        \labelcref{rule:hbranch} yields $(\nstatec, \nstated, \poison) \prodtrans{\Eif{\bvalue}}{\Dstep}{\Eif{\bvalue}}{\Dstep} (\nstatee, \nstateh, \poisonp)$.

        \pcasep{\labelcref{rule:nspec}}
        For \labelcref{rule:nspec}, let $\instp = \Iif{\regbp}{\pcsucp_{\btrue}}{\pcsucp_{\bfalse}}$ and $\leak = \Eif{\bvalue}$ and $\direct = \Dspec$.
        We have $\poisonof{\regb} = \healthy$,
        thus $\rasgniof{\nstated}{\regbp} = \rasgniof{\nstatec}{\regb}$.
        Then there is $\nstatee$ with $\nstatec \ntrans{\Eif{\bvalue}}{\Dspec} \nstatee$.
        \labelcref{rule:hspec} yields $(\nstatec, \nstated, \poison) \prodtrans{\Eif{\bvalue}}{\Dspec}{\Eif{\bvalue}}{\Dspec} (\nstatee, \nstateh, \poisonp)$.

        \pcasep{\labelcref{rule:nsfence}}
        For \labelcref{rule:nsfence}, let $\instp = \Isfence{\pcsucp}$ and $\leak = \Enone$ and $\direct = \Dstep$.
        Then $\sizeof{\nstated} = 1 = \sizeof{\nstatec}$.
        Thus there is $\nstatee$ with $\nstatec \ntrans{\Enone}{\Dstep} \nstatee$.
        \labelcref{rule:psfence} yields $(\nstatec, \nstated, \poison) \prodtrans{\Enone}{\Dstep}{\Enone}{\Dstep} (\nstatee, \nstateh, \poisonp)$.

        \pcasep{\labelcref{rule:nslh}}
        For \labelcref{rule:nslh}, let $\instp = \Islh{\regap}{\pcsucp}$ and $\leak = \Enone$ and $\direct = \Dstep$.
        Thus there is $\nstatee$ with $\nstatec \ntrans{\Enone}{\Dstep} \nstatee$.
        \labelcref{rule:pslh} yields $(\nstatec, \nstated, \poison) \prodtrans{\Enone}{\Dstep}{\Enone}{\Dstep} (\nstatee, \nstateh, \poisonp)$. \qedhere
    \end{pcases}
\end{proof}

\subsection*{\Cref{lem:shufflerelrmp}}
\begin{proof}
    Prove this for a single step, the rest is induction.
    Let further be $\nstatee = \nstateep\lseq\statee$ and $\nstateh = \nstatehp\lseq\stateh$. Let further  be $\poison = (\preg, \pmem)$.
    Let $\statee = (\pc, \rasgn, \masgn)$, $\stateh = (\pcp, \rasgnp, \masgnp)$, $\nstatec \rrmp{\poison} \nstateh$ and $\traof{\pcp} = \shuffleinst{\pcssp}$.
    We do case distinction on $\shuffleinstcmd$.

    \begin{pcases}
        \pcasep{$\Imove{\regap}{\regbp}{\pcssp}$}
        Then, $\stateh = (\pcp, \rasgnp, \masgnp) \trans{\Enone}{\Dstep} (\pcssp, \rasgnpp, \masgnp) = \statef$ with $\subst{\rasgnp}{\regap}{\rasgnpof{\regbp}} = \rasgnpp$.
        By shuffle conformity, we know that there is $\regb$ with $\regmapofof{\pcp}{\regb} = \regbp$.
        Further, we know that $\regmapof{\pcssp}$ coincides with $\regmapof{\pcp}$ except for $\regb$.
        Also, $\poison = \poisonp$.
        By induction, $\nstatec \rrmp{\poison} \nstateh$,
        we have $\rrmat{\statef}{\regb} = \rasgnppof{\regap} = \rasgnpof{\regbp} = \rrmat{\stateh}{\regb} = \rasgnof{\regb}$
        as required for $\nstatec \rrmp{\poison} \nstatef$.
        \labelcref{rule:pmove} yields the transition.

        \pcasep{$\Ifill{\regap}{\stackl}{\pcssp}$}
        Then, $\stateh = (\pcp, \rasgnp, \masgnp) \trans{\Eload{\stackl}}{\Dstep} (\pcssp, \rasgnpp, \masgnp) = \statef$ with $\subst{\rasgnp}{\regap}{\masgnof{\varstack}{\stackl}} = \rasgnpp$.
        By shuffle conformity, we know that there is $\regb$ with $\regmapofof{\pcp}{\regb} = \stackl$.
        Further, we know that $\regmapof{\pcssp}$ coincides with $\regmapof{\pcp}$ except for $\regb$, where $\regmapofof{\pcssp}{\regb} = \regap$.
        Also, $\poison = \poisonp$.
        By induction, $\nstatec \rrmp{\poison} \nstateh$,
        we have $\rrmat{\statef}{\regb} = \rasgnppof{\regap} = \masgnpof{\varstack}{\stackl} = \rrmat{\stateh}{\regb} = \rasgnof{\regb}$
        as required for $\nstatec \rrmp{\poison} \nstatef$.
        \labelcref{rule:pfill} yields the transition.

        \pcasep{$\Ispill{\stackl}{\regbp}{\pcssp}$}
        Then, $\stateh = (\pcp, \rasgnp, \masgnp) \trans{\Estore{\stackl}}{\Dstep} (\pcssp, \rasgnp, \masgnpp) = \statef$ with $\subst{\masgnp}{(\varstack,\stackl)}{\rasgnpof{\regbp}} = \masgnpp$.
        By shuffle conformity, we know that there is $\regb$ with $\regmapofof{\pcp}{\regb} = \regbp$.
        Further, we know that $\regmapof{\pcssp}$ coincides with $\regmapof{\pcp}$ except for $\regb$, where $\regmapofof{\pcssp}{\regb} = \stackl$.
        Also, $\poison = \poisonp$.
        By induction, $\nstatec \rrmp{\poison} \nstateh$,
        we have $\rrmat{\statef}{\regb} = \masgnppof{\varstack}{\stackl} = \rasgnpof{\regbp} = \rrmat{\stateh}{\regb} = \rasgnof{\regb}$
        as required for $\nstatec \rrmp{\poison} \nstatef$.
        \labelcref{rule:pspill} yields the transition.

        \pcasep{$\Isfence{\pcssp}$}
        Then $\sizeof{\nstatehp} = 0$ and $\stateh = (\pcp, \rasgnp, \masgnp) \trans{\Enone}{\Dstep} (\pcssp, \rasgnp, \masgnp) = \statef$.
        Also, $\poison = \poisonp$.
        $\statec \rrmp{\poison} \stateh$ implies $\statec \rrmp{\poison} \statef = \nstatef$.
        \labelcref{rule:pshuf-sfence} yields the transition.

        \pcasep{$\Islh{\regap}{\pcssp}$}
        If $\sizeof{\nstatehp} = 0$, this case is fully analogue to the previous one.
        Thus, let $\sizeof{\nstatehp} \geq 1$.
        Then, $\stateh = (\pcp, \rasgnp, \masgnp) \trans{\Enone}{\Dstep} (\pcssp, \rasgnpp, \masgnp) = \statef$ with $\subst{\rasgnp}{\regap}{0} = \rasgnpp$.
        By shuffle conformity, we know that $\regmapof{\pcp} = \regmapof{\pcssp}$.
        Also $\poisonp = (\subst{\preg}{\rega}{\weakpoisoned}, \pmem)$.
        Indeed, $\nstatec \rrmp{\poison} \nstateh$ and $\rrmat{\statef}{\rega} = 0$ yields $\nstatec \rrmp{\poison} \nstatef$.
        \labelcref{rule:pshuf-slh} yields the transition. \qedhere
    \end{pcases}
\end{proof}

\subsection*{\Cref{lem:repeat-instinj}}

\begin{proof}
    By case distinction on $(\nstatec_1, \nstated_1, \poison) \sprodtrans{\leakp}{\directp}{\leak}{\direct} (\nstatee_1, \nstatef_1, \poisonp)$.
    For all cases, and $i \in \set{1,2}$:
    Let
    $\nstated_i = \nstatedp_i\lseq\stated_i$,
    $\nstatec_i = \nstatecp_i\lseq\statec_i$,
    $\nstatef_i = \nstatefp_i\lseq\statef_i$,
    and
    $\nstatee_i = \nstateep_i\lseq\statee_i$,
    and $\stated_1 \atpc \pcp \atpc \stated_2$
    as well as $\statec_1 \atpc \pc \atpc \statec_2$.
    Further, let $\traof{\pcp} = \instp$ and $\progof{\pc} = \inst$,
    and assume $\regmapofof{\pcp}{\rega} = \regap$, 
    $\regmapofof{\pcp}{\regb} = \regbp$,
    and $\regmapofof{\pcp}{\regc} = \regcp$
    if they occur in $\inst$ and $\instp$.
    Finally, let $\pseqofof{\nstatec_1}{\nstated_1} = \pseqofof{\nstatec_2}{\nstated_2} = \pseq\lseq\poison$.
    Please note, that $\poisonp$ is not dependent on values:
    Whenever the same rule in $\prodra$ is executed, then the same $\poisonp$ is obtained.
    The arguments for equality of $\poisonp$ are thus skipped.
    \begin{pcases}
        \pcasep{\labelcref{rule:pstore-hsafe}}
        In this case, $\inst = \Istore{\vara}{\regb}{\regc}{\pcsuc}$ and $\instp = \Istore{\vara}{\regbp}{\regcp}{\pcssp}$.
        The presumptions for transition $(\nstatec_1, \nstated_1, \poison) \sprodtrans{\leakp}{\directp}{\leak}{\direct} (\nstatee_1, \nstatef_1, \poisonp)$ yield $\stated_1 \trans{\Estore{\adr}}{\Dstep} \statef_1$,
        and $\statec_1 \trans{\Estore{\adr}}{\Dstep} \statee_1$,
        and $\poisonof{\regb} = \healthy$.
        Our assumption is that $\statec_2 \trans{\Estore{\adr}}{\Dstep} \statee_2$ can be executed.
        The leaked address is $\adr = \rasgniof{\statec_2}{\regb} \in \sizeof{\vara}$.
        Due to $\poisonof{\regb} = \healthy$ and $\nstatec_2 \rrmp{\poison} \nstated_2$,
        $\rasgniof{\stated_2}{\regbp} = \rasgniof{\statec_2}{\regb} = \adr$.
        This suffices for $\nstated_2 \ntrans{\Estore{\adr}}{\Dstep} \nstatef_2$,
        and in turn for $(\nstatec_2, \nstated_2, \poison) \sprodtrans{\Estore{\adr}}{\Dstep}{\Estore{\adr}}{\Dstep} (\nstatee_2,\nstatef_2,\poisonp)$.

        \pcasep{\labelcref{rule:pintro-load}}
        Then, $\inst = \Iload{\rega}{\vara}{\regb}{\pcsuc}$ and $\inst = \Iload{\rega}{\vara}{\regb}{\pcsuc}$.
        The presumptions yield $\stated_1 \trans{\Eload{\adr}}{\Dload{\varstack}{\adrp}} \statef_1$,
        and $\statec_1 \trans{\Eload{\adr}}{\Dload{\varb}{\adrppp}} \statee_1$,
        and $\poisonof{\regb} = \healthy$.
        Our assumption is $\statec_2 \trans{\Eload{\adr}}{\Dload{\varb}{\adrppp}} \statee_2$.
        Due to $\poisonof{\regb} = \healthy$,
        $\rasgniof{\stated_1}{\regbp} = \rasgniof{\statec_1}{\regb} = \adr = \rasgniof{\statec_2}{\regb} = \rasgniof{\stated_2}{\regbp} \notin \sizeof{\vara}$.
        That justifies $\nstated_2 \ntrans{\Eload{\adr}}{\Dload{\varstack}{\adrp}} \nstatef_2$.
        That suffices for $(\nstatec_2, \nstated_2, \poison) \sprodtrans{\Eload{\adr}}{\Dload{\varb}{\adrppp}}{\Eload{\adr}}{\Dload{\varstack}{\adrp}} (\nstatee_2,\nstatef_2,\poisonp)$.

        \pcasep{\labelcref{rule:pnop,,rule:pasgn}} Trivial.

        \pcasep{\labelcref{rule:pintro-store}}
        Then, $\inst = \Istore{\vara}{\regb}{\regc}{\pcsuc}$,
        $\stated_1 \trans{\Estore{\adr}}{\Dstore{\varstack}{\adrp}} \statef_1$,
        and $\statec_1 \trans{\Estore{\adr}}{\Dstore{\varb}{\adrppp}} \statee_1$,
        and $\poisonof{\regb} = \healthy$.
        Our assumption is $\statec_2 \trans{\Estore{\adr}}{\Dstore{\varb}{\adrppp}} \statee_2$.
        Due to $\poisonof{\regb} = \healthy$,
        $\rasgniof{\stated_1}{\regbp} = \rasgniof{\statec_1}{\regb} = \adr = \rasgniof{\statec_2}{\regb} = \rasgniof{\stated_2}{\regbp} \notin \sizeof{\vara}$.
        That justifies $\nstated_2 \ntrans{\Eload{\adr}}{\Dload{\varstack}{\adrp}} \nstatef_2$.
        That suffices for $(\nstatec_2, \nstated_2, \poison) \sprodtrans{\Estore{\adr}}{\Dstore{\varb}{\adrppp}}{\Eload{\adr}}{\Dload{\varstack}{\adrp}} (\nstatee_2,\nstatef_2,\poisonp)$.

        \pcasep{\labelcref{rule:pload-ss}}
        Then, $\inst = \Iload{\rega}{\vara}{\regb}{\pcsuc}$,
        $\stated_1 \trans{\Eload{0}}{\Dstep} \statef_1$,
        and $\statec_1 \trans{\Eload{\adrpp}}{\Dstep} \statee_1$,
        and $\poisonof{\regb} = \weakpoisoned$.
        Further $\statec_2 \trans{\Eload{\adrpp}}{\Dstep} \statee_2$.
        Due to $\poisonof{\regb} = \weakpoisoned$,
        $\rasgniof{\nstated_2}{\regbp} = 0 \in \sizeof{\vara}$
        justifies $\nstated_2 \ntrans{\Eload{0}}{\Dstep} \nstatef_2$.
        That suffices for $(\nstatec_2, \nstated_2, \poison) \sprodtrans{\Eload{\adrpp}}{\Dstep}{\Eload{0}}{\Dstep} (\nstatee_2,\nstatef_2,\poisonp)$.

        \pcasep{\labelcref{rule:pload-us}}
        Then, $\inst = \Iload{\rega}{\vara}{\regb}{\pcsuc}$,
        $\stated_1 \trans{\Eload{0}}{\Dstep} \statef_1$,
        and $\statec_1 \trans{\Eload{\adrpp}}{\Dload{\varb}{\adrppp}} \statee_1$,
        and $\poisonof{\regb} = \weakpoisoned$.
        Further $\statec_2 \trans{\Eload{\adrpp}}{\Dload{\varb}{\adrppp}} \statee_2$.
        Due to $\poisonof{\regb} = \weakpoisoned$,
        $\rasgniof{\nstated_2}{\regbp} = 0 \in \sizeof{\vara}$
        justifies $\nstated_2 \ntrans{\Eload{0}}{\Dstep} \nstatef_2$.
        That suffices for $(\nstatec_2, \nstated_2, \poison) \sprodtrans{\Eload{\adrpp}}{\Dload{\varb}{\adrppp}}{\Eload{0}}{\Dstep} (\nstatee_2,\nstatef_2,\poisonp)$.

        \pcasep{\labelcref{rule:pload-hsafe}}
        Then, $\inst = \Iload{\rega}{\vara}{\regb}{\pcsuc}$,
        $\stated_1 \trans{\Eload{\adr}}{\Dstep} \statef_1$,
        and $\statec_1 \trans{\Eload{\adr}}{\Dstep} \statee_1$,
        and $\poisonof{\regb} = \healthy$.
        Further $\statec_2 \trans{\Eload{\adr}}{\Dstep} \statee_2$.
        Due to $\poisonof{\regb} = \healthy$,
        $\rasgniof{\nstated_1}{\regbp} = \rasgniof{\nstatec_1}{\regb} = \adr = \rasgniof{\nstatec_2}{\regb} = \rasgniof{\nstated_2}{\regbp} \in \sizeof{\vara}$
        justifies $\nstated_2 \ntrans{\Eload{\adr}}{\Dstep} \nstatef_2$.
        That suffices for $(\nstatec_2, \nstated_2, \poison) \sprodtrans{\Eload{\adr}}{\Dstep}{\Eload{\adr}}{\Dstep} (\nstatee_2,\nstatef_2,\poisonp)$.

        \pcasep{\labelcref{rule:pload-hunsafe}}
        Then, $\inst = \Iload{\rega}{\vara}{\regb}{\pcsuc}$,
        $\stated_1 \trans{\Eload{\adr}}{\Dload{\varb}{\adrp}} \statef_1$,
        and $\statec_1 \trans{\Eload{\adr}}{\Dload{\varb}{\adrp}} \statee_1$,
        and $\poisonof{\regb} = \healthy$.
        Further $\statec_2 \trans{\Eload{\adr}}{\Dload{\varb}{\adrp}} \statee_2$.
        Due to $\poisonof{\regb} = \healthy$,
        $\rasgniof{\nstated_1}{\regbp} = \rasgniof{\nstatec_1}{\regb} = \adr = \rasgniof{\nstatec_2}{\regb} = \rasgniof{\nstated_2}{\regbp} \notin \sizeof{\vara}$
        justifies $\nstated_2 \ntrans{\Eload{\adr}}{\Dload{\varb}{\adrp}} \nstatef_2$.
        That suffices for $(\nstatec_2, \nstated_2, \poison) \sprodtrans{\Eload{\adr}}{\Dload{\varb}{\adrp}}{\Eload{\adr}}{\Dload{\varb}{\adrp}} (\nstatee_2,\nstatef_2,\poisonp)$.

        \pcasep{\labelcref{rule:pstore-ss}}
        Then, $\inst = \Istore{\vara}{\regb}{\regc}{\pcsuc}$,
        $\stated_1 \trans{\Estore{0}}{\Dstep} \statef_1$,
        and $\statec_1 \trans{\Estore{\adrpp}}{\Dstep} \statee_1$,
        and $\poisonof{\regb} = \weakpoisoned$.
        Further $\statec_2 \trans{\Estore{\adrpp}}{\Dstep} \statee_2$.
        Due to $\poisonof{\regb} = \weakpoisoned$,
        $\rasgniof{\nstated_2}{\regbp} = 0 \in \sizeof{\vara}$
        justifies $\nstated_2 \ntrans{\Estore{0}}{\Dstep} \nstatef_2$.
        That suffices for $(\nstatec_2, \nstated_2, \poison) \sprodtrans{\Estore{\adrpp}}{\Dstep}{\Estore{0}}{\Dstep} (\nstatee_2,\nstatef_2,\poisonp)$.

        \pcasep{\labelcref{rule:pstore-us}}
        Then, $\inst = \Istore{\vara}{\regb}{\regc}{\pcsuc}$,
        $\stated_1 \trans{\Estore{0}}{\Dstep} \statef_1$,
        and $\statec_1 \trans{\Estore{\adrpp}}{\Dstore{\varb}{\adrppp}} \statee_1$,
        and $\poisonof{\regb} = \weakpoisoned$.
        Further $\statec_2 \trans{\Estore{\adrpp}}{\Dstore{\varb}{\adrppp}} \statee_2$.
        Due to $\poisonof{\regb} = \weakpoisoned$,
        $\rasgniof{\nstated_2}{\regbp} = 0 \in \sizeof{\vara}$
        justifies $\nstated_2 \ntrans{\Estore{0}}{\Dstep} \nstatef_2$.
        That suffices for $(\nstatec_2, \nstated_2, \poison) \sprodtrans{\Estore{\adrpp}}{\Dstore{\varb}{\adrppp}}{\Estore{0}}{\Dstep} (\nstatee_2,\nstatef_2,\poisonp)$.

        \pcasep{\labelcref{rule:pstore-hunsafe}}
        Then, $\inst = \Istore{\vara}{\regb}{\regc}{\pcsuc}$,
        $\stated_1 \trans{\Estore{\adr}}{\Dstore{\varb}{\adrp}} \statef_1$,
        and $\statec_1 \trans{\Estore{\adr}}{\Dstore{\varb}{\adrp}} \statee_1$,
        and $\poisonof{\regb} = \healthy$.
        Further $\statec_2 \trans{\Estore{\adr}}{\Dstore{\varb}{\adrp}} \statee_2$.
        Due to $\poisonof{\regb} = \healthy$,
        $\rasgniof{\stated_1}{\regbp} = \rasgniof{\statec_1}{\regb} = \adr = \rasgniof{\statec_2}{\regb} = \rasgniof{\stated_2}{\regbp} \notin \sizeof{\vara}$
        justifies $\nstated_2 \ntrans{\Estore{\adr}}{\Dstore{\varb}{\adrp}} \nstatef_2$.
        That suffices for $(\nstatec_2, \nstated_2, \poison) \sprodtrans{\Estore{\adr}}{\Dstore{\varb}{\adrp}}{\Estore{\adr}}{\Dstore{\varb}{\adrp}} (\nstatee_2,\nstatef_2,\poisonp)$.

        \pcasep{\labelcref{rule:hbranch}}
        Then, $\inst = \Iif{\regb}{\pcsuc_{\btrue}}{\pcsuc_{\bfalse}}$,
        $\stated_1 \trans{\Eif{\bvalue}}{\Dif} \statef_1$,
        and $\statec_1 \trans{\Eif{\bvalue}}{\Dif} \statee_1$,
        and $\poisonof{\regb} = \healthy$.
        Further $\statec_2 \trans{\Eif{\bvalue}}{\Dif} \statee_2$.
        Due to $\poisonof{\regb} = \healthy$,
        $(\rasgniof{\stated_1}{\regbp} \sameas 0) = (\rasgniof{\statec_1}{\regb} \sameas 0) = \bvalue = (\rasgniof{\statec_2}{\regb} \sameas 0) = (\rasgniof{\stated_2}{\regbp} \sameas 0)$
        justifies $\nstated_2 \ntrans{\Eif{\bvalue}}{\Dif} \nstatef_2$.
        That suffices for $(\nstatec_2, \nstated_2, \poison) \sprodtrans{\Eif{\bvalue}}{\Dif}{\Eif{\bvalue}}{\Dif} (\nstatee_2,\nstatef_2,\poisonp)$.

        \pcasep{\labelcref{rule:hspec}}
        Then, $\inst = \Iif{\regb}{\pcsuc_{\btrue}}{\pcsuc_{\bfalse}}$,
        $\nstated_1 \ntrans{\Eif{\bvalue}}{\Dspec} \nstated_1\lseq\statef_1$,
        and $\nstatec_1 \ntrans{\Eif{\bvalue}}{\Dspec} \nstatec_1\lseq\statee_1$,
        and $\poisonof{\regb} = \healthy$.
        Further $\nstatec_2 \ntrans{\Eif{\bvalue}}{\Dspec} \nstatec_2\lseq\statee_2$.
        Due to $\poisonof{\regb} = \healthy$,
        $(\rasgniof{\nstated_1}{\regbp} \sameas 0) = (\rasgniof{\nstatec_1}{\regb} \sameas 0) = \bvalue = (\rasgniof{\nstatec_2}{\regb} \sameas 0) = (\rasgniof{\nstated_2}{\regbp} \sameas 0)$
        justifies $\nstated_2 \ntrans{\Eif{\bvalue}}{\Dspec} \nstated_2\lseq\statef_2 = \nstatef_2$.
        That suffices for $(\nstatec_2, \nstated_2, \poison) \sprodtrans{\Eif{\bvalue}}{\Dspec}{\Eif{\bvalue}}{\Dspec} (\nstatee_2,\nstatef_2,\poisonp)$.

        \pcasep{\labelcref{rule:psfence}}
        Then, $\inst = \Isfence{\pcsuc}$,
        $\stated_1 \ntrans{\Enone}{\Dstep} \statef_1$,
        and $\statec_1 \ntrans{\Enone}{\Dstep} \statee_1$,
        and $\sizeof{\nstated_1} = \sizeof{\nstatec_1} = \sizeof{\nstatec_2} = \sizeof{\nstated_2} = 1$.
        Further $\statec_2 \ntrans{\Enone}{\Dstep} \statee_2$.
        $\sizeof{\nstated_2} = 1$ justifies $\stated_2 \ntrans{\Enone}{\Dstep} \statef_2$.
        That suffices for $(\nstatec_2, \nstated_2, \poison) \sprodtrans{\Enone}{\Dstep}{\Enone}{\Dstep} (\nstatee_2,\nstatef_2,\poisonp)$.

        \pcasep{\labelcref{rule:pslh}}
        Then, $\inst = \Islh{\rega}{\pcsuc}$,
        $\nstated_1 \ntrans{\Enone}{\Dstep} \nstatef_1$,
        and $\nstatec_1 \ntrans{\Enone}{\Dstep} \nstatee_1$.
        Further $\nstatec_2 \ntrans{\Enone}{\Dstep} \nstatee_2$.
        Semantics yield an appropriate $\nstatef_2$ with $\nstated_2 \ntrans{\Enone}{\Dstep} \nstatef_2$.
        That suffices for $(\nstatec_2, \nstated_2, \poison) \sprodtrans{\Enone}{\Dstep}{\Enone}{\Dstep} (\nstatee_2,\nstatef_2,\poisonp)$.

        \pcasep{\labelcref{rule:prb}}
        Then, $\nstated_1 \ntrans{\Erb}{\Drb} \nstatedp_1 = \nstatef_1$,
        and $\nstatec_1 \ntrans{\Erb}{\Drb} \nstatecp_1 = \nstatee_1$.
        Further $\nstatec_2 \ntrans{\Erb}{\Drb} \nstatecp_2 = \nstatee_2$.
        Semantics yield $\nstated_2 \ntrans{\Erb}{\Drb} \nstatedp_2 = \nstatef_2$.
        That suffices for $(\nstatec_2, \nstated_2, \pseq\lseq\poison) \sprodtrans{\Erb}{\Drb}{\Erb}{\Drb} (\nstatee_2,\nstatef_2,\pseq)$.

        \pcasep{\labelcref{rule:pfill,,rule:pspill,,rule:pmove,,rule:pshuf-sfence,,rule:pshuf-slh}} Not applicable due to definition of $\simrel$, $\nstated_1 \simrel \nstatec_1$ makes it impossible for $\nstated_1$ to be at a shuffle program counter. \qedhere
    \end{pcases}
\end{proof}

\subsection*{\Cref{lem:repeat-shuffle}}

\begin{proof}
    We do the induction step by case distinction on $\nstatef_1 \ntrans{\leak}{\Dstep} \nstateh_1$.
    Let $\inst = \instof{\nstatef_1} = \instof{\nstatef_2}$.

    \pcase{\labelcref{rule:move,,rule:nslh}}
    Then, $\leak = \Enone$ and there is $\nstatef_2 \ntrans{\Enone}{\Dstep} \nstateh_2$ by definition of semantics.

    \pcase{\labelcref{rule:fill}}
    Then, $\leak = \Eload{\stackl}$ and there is $\nstatef_2 \ntrans{\Eload{\stackl}}{\Dstep} \nstateh_2$ by definition of semantics.

    \pcase{\labelcref{rule:spill}}
    Then, $\leak = \Estore{\stackl}$ and there is $\nstatef_2 \ntrans{\Estore{\stackl}}{\Dstep} \nstateh_2$ by definition of semantics.

    \pcase{\labelcref{rule:nsfence}}
    Then, $\sizeof{\nstatef_1} = 1 = \sizeof{\nstatef_2}$.
    By definition, $\nstatef_2 \ntrans{\Enone}{\Dstep} \nstateh_2$.
\end{proof}

\end{document}